%
\documentclass[runningheads,11pt]{llncs}
\usepackage{array,xspace,multirow,hhline,graphicx,tikz,colortbl,tabularx,amsmath,amssymb,amsfonts}
\AtEndEnvironment{proof}{\qed}
\usepackage{thm-restate,thmtools}
\usepackage{xcolor}
\usepackage{booktabs,palatino}

\usepackage[sort,numbers]{natbib}
\usepackage{authblk}

\makeatletter 
\renewcommand\@biblabel[1]{#1} 
\makeatother

\usepackage{xfrac}
\usepackage{caption}
\usepackage{subcaption}
\captionsetup{compatibility=false}
\usepackage{collcell} 

\usepackage{wrapfig} 
\usepackage{bbm}  
\usepackage{verbatim,ifthen} 
\usepackage{pifont} 

\usepackage{nicefrac}  
\usepackage[normalem]{ulem}
\usepackage{varioref}

\usepackage{calc}
\newsavebox\CBox
\newcommand\hcancel[2][0.5pt]{%
	\ifmmode\sbox\CBox{$#2$}\else\sbox\CBox{#2}\fi%
	\makebox[0pt][l]{\usebox\CBox}%
	\rule[0.5\ht\CBox-#1/2]{\wd\CBox}{#1}}

\tikzset{
	jumpdot/.style={mark=*,solid},
	excl/.append style={jumpdot,fill=white},
	incl/.append style={jumpdot,fill=black},
	rexcl/.append style={jumpdot,color=red,fill=white},
	rincl/.append style={jumpdot,fill=black,color=red},
} 
\renewcommand\small{\fontsize{10pt}{12pt}\selectfont}
\usepackage[ruled,vlined,linesnumbered]{algorithm2e} 
\SetKwComment{Comment}{/* }{ */}

\SetAlFnt{\small}
\SetAlCapFnt{\small}
\SetAlCapNameFnt{\small}
\SetAlCapHSkip{0pt}
\IncMargin{-\parindent}
\DeclareMathOperator*{\argmax}{arg\,max}
\DeclareMathOperator*{\argmin}{arg\,min}
\newcommand{\USC}[1]{\ifstrempty{#1}{\textrm{\textup{USC}}}{#1\textrm{\textup{-USC{}}}}}
\newcommand{\ESC}[1]{\ifstrempty{#1}{\textrm{\textup{ESC}}}{#1\textrm{\textup{-ESC{}}}}}
\newcommand{\USW}[1]{\ifstrempty{#1}{\textrm{\textup{USW}}}{#1\textrm{\textup{-USW{}}}}}
\newcommand{\ESW}[1]{\ifstrempty{#1}{\textrm{\textup{ESW}}}{#1\textrm{\textup{-ESW{}}}}}
\newcommand{\PO}{\textup{PO}}
\newcommand{\EFone}{\textrm{\textup{EF1}}}
 
\newcommand{\EFX}{\textrm{\textup{EFX}}\xspace}

\renewcommand{\emptyset}{\varnothing} 

\spnewtheorem{observation}{Observation}{\bfseries}{\itshape}

\usepackage{mathtools}

\definecolor{gray(x11gray)}{rgb}{0.75, 0.75, 0.75}

\usepackage{boxedminipage}
\usepackage{xspace}

\usepackage{xr}

\usepackage[capitalise,noabbrev]{cleveref}

\usepackage{geometry}
\geometry{
	a4paper,         
	textwidth=15cm,  
	textheight=22cm, 
	heightrounded,   
	hratio=1:1,      
	vratio=2:3,      
}

\Crefname{observation}{Observation}{Observations}   

\begin{document}

	\title{Maximum Welfare Allocations under Quantile Valuations}
	
	%
	
	\author{Haris Aziz, Shivika Narang, Mashbat Suzuki}
	
	%
	%
	%
	\institute{UNSW Sydney
		\email{\{haris.aziz, s.narang, mashbat.suzuki\} @unsw.edu.au}}

	\maketitle              
	%
	

	\begin{abstract}
		We propose a new model for aggregating preferences over a set of indivisible items based on a quantile value. In this model, each agent is endowed with a specific quantile, and the value of a given bundle is defined by the corresponding quantile of the individual values of the items within it.
		Our model captures the diverse ways in which agents may perceive a bundle, even when they agree on the values of individual items.  It enables richer behavioral modeling that cannot be captured by additive valuation functions. 
		We aim to maximize utilitarian and egalitarian welfare within the quantile-based valuation setting. For each of the welfare functions, we analyze the complexity and provide complementary approximation and exact algorithms. Interestingly, our results show that the complexity of both functions varies significantly, depending on whether the allocation is required to be balanced.  We provide near-optimal approximation algorithms for utilitarian welfare, and for egalitarian welfare, we present exact algorithms for many cases.
	\end{abstract}

	\section{Introduction} 
	The problem of allocating indivisible items in a fair and efficient manner has been well-studied in recent years \citep{Feig2006maximizing,feige2010submodular,budish2011combinatorial,CKM+2019unreasonable,AFSV2024best}. The overwhelming majority of this work focuses on settings where agents have monotone valuations for the items being assigned\footnote{A monotone valuation function is one where the marginal value of an item is always non-negative or always non-positive.} partially because of the underlying structure imposed by them. In practice however, agent preferences may be unreasonably non-monotone, and alternate models of preferences are needed. %
	Consider a setting where an incoming class of school students need to be divided into sections each with a different teacher. As students have little to no information to compare the teachers, we can assume they have no preferences over specific teachers. Meanwhile, teachers get satisfaction from their students' learning and growth throughout the year. Each teacher may have different levels of satisfaction with a given set of students assigned to them, even if the students were to perform similarly. One teacher may be upset if even one student performs poorly, whereas a different teacher may be satisfied if at least half their class does well. Some teachers may be delighted if they have even one exceptional child in their class, even if the others do not do particularly well. 
	
	These non-monotone ways of aggregating preferences cannot be modeled by existing valuation function classes such as additive, subadditive or even super additive valuations. We can capture these opinions by different quantile values for the set of students assigned. The pessimistic/critical teacher bases their satisfaction on the lowest quantile. The teacher whose satisfaction is based on at least half the class doing well can be captured by the median quantile.  The optimistic teacher bases their satisfaction on the highest quantile. As a result, finding allocations of students to teachers that maximize teacher welfare can be non-trivial. School teachers are some of the most overworked and underpaid professionals. Consequently, it is imperative to  ensure good allocations for them. 
	
	\paragraph{Quantile Valuations.} We introduce a novel valuation class, termed \textit{ quantile valuations}, which encompasses the aforementioned scenarios. In this framework, each agent is endowed with a specific quantile value  $\tau\in[0,1]$, and the value that she assigns for a bundle $S$ is the $\tau$-quantile of the distribution of item values in $S$. 
	Quantiles are widely used across data analysis and statistics because they provide a robust description of value distributions. Compared to measures like average or total/gross, most quantile based measures are significantly less susceptible to outliers. As a result, quantiles are commonly used in practical settings to aggregate population behavior. Different population bodies, such as cities, states and countries, are compared by quantile-based measures such as median household income, median age and median house price/rent\footnote{https://worldpopulationreview.com/country-rankings/median-income-by-country}. 
	Quantiles have also been used in decision theory to model agent preferences in settings where agents have preferences over stochastic outcomes. Specifically, quantiles have been considered in settings where an agent faces a choice of actions, each yielding a distribution over outcomes. Here, modeling the agent's choice as a quantile maximizer has been shown to provide a better approximation of human behavior than modeling them as an  expected utility maximizer \citep{DeGa2019dynamic,DeGa2022static}. 

	Within settings where an agent will likely be allocated a (nearly) fixed number of items, quantile offer a useful way to aggregate preferences. Such settings include allocating repair tasks to workers (where a fixed number of working hours may limit the number of tasks assigned) and assigning submitted conference papers to relevant reviewers.  Even typical school settings with in-person classes require that each teacher be assigned an equal sized set of students. In contrast, when it comes to online classes such as those for distance learning or preparatory classes for specific exams, there is no underlying restriction on class size. Analogously, journal editors may receive varied number of submissions which they further assign to reviewers to work on. Often due to conflicts of interests, some editors may not be able to be assigned a majority of the submissions, and a large disparity in their allocation size may be unavoidable. 
	Hence, we consider both the space of {\em balanced} allocations as well as that of all {\em unconstrained} allocations.
	
	\paragraph{Fairness and Efficiency under Quantile Valuations.} Welfare functions have been used to quantify economically efficient outcomes across a multitude of domains including fair division\cite{barman2021sublinear,CKM+2019unreasonable,barman2020uniform,barman2020tight, CaNa2024repeatedly},  social choice \cite{aziz2019fair,suzuki2024maximum}, combinatorial auctions \cite{bhawalkar2011welfare,badanidiyuru2012optimization,Feig2006maximizing} and many other \cite{barman2020uniform,murhekar2023incentives,blumrosen2006welfare}. All solutions that maximize welfare are also Pareto optimal and some, such as Nash welfare have also found to produce fair solutions in the item allocation setting. Egalitarian welfare, in particular, looks at maximizing the value attained by the worst-off agent. \ESW{} captures Rawlsian fairness \cite{rawls1991justice} or the ``santa claus problem''\cite{bansal2006santa,springer2022online,feige2008allocations}, a well-motivated and well-studied way to capture fairness. In contrast, utilitarian welfare, which maximizes the sum of the agents' values, is the most widely adapted welfare function and naturally captures (economic) efficiency. Consequently, these two objectives become natural choices to capture desirable outcomes under our proposed model of quantile-based valuations. 
	%
	%

	
	\subsection{Our Results}
	
	We study the problem of maximizing welfare for agents with quantile valuations. 
	Under quantile valuations, each agent $i$ specifies their value for individual items and a quantile value $\tau_i\in [0,1]$. Given a bundle $B$, agent $i$'s value for $B$ is the $\tau_i$th quantile of the values of the items in $B$. We provide comprehensive results on  {\em utilitarian social welfare }(\USW{}) (see for e.g. \citet{Hars1955cardinal}), which captures efficiency, and {\em egalitarian social welfare} (\ESW{}) (see for e.g. \citet{Moul2004fair}), which also captures fairness. \USW{} is essentially the sum of the agents' valuations for their assigned bundles, and \ESW{} is the minimum of these values.
	We study each objective for both balanced and unconstrained allocations, with our results for balanced allocations being given in \cref{sec:bal} and those for unconstrained allocation being given in \cref{sec:unbal}. 
	A balanced allocation is one where bundle sizes are as equitable as possible, i.e., each agent receives either $\lfloor m/n\rfloor$ or $\lceil m/n \rceil$ items, where  $m$ is the number of items and $n$ is the number of agents $n$.
	Our results are summarized in \cref{tab:contributions}. 

	{\small
		\begin{table*}[t]
			\centering
			\crefname{theorem}{Th.}{Th.}
			\crefname{proposition}{Prop.}{Prop.}
			\crefname{corollary}{Cor.}{Cor.}
			\begin{tabular}{lllllll}
				\toprule 
				&                                           &  \textbf{General}                                  &                                              & \textbf{Identical}    \\
				\midrule         
				\multirow{3}{*}{\textbf{Balanced}}  &\multirow{2}{*}{\textbf{USW}} \quad \quad  &   APX-h                            &(\cref{thm:hardnessUSW:Balanced})\quad\quad   & \multirow{2}{*}{in P} &\multirow{2}{*}{(\cref{thm:USWbalIdent})}  \\ 
				&                                           & $\USW{\min(\lceil\frac{m}{n}\rceil+1,n)}^{\dagger}$&(\cref{thm:balUSWgreedy}) 			&  \\
				&\textbf{ESW}                               & in  P$^{\dagger}$                                  &(\cref{thm:balESW})                           & in P                  &(\cref{thm:balESW}) \\ 
				\midrule
				\multirow{4}{*}{\textbf{Unconstrained}}\quad \quad \quad &\multirow{2}{*}{\textbf{USW}}&NP-h                                               & (\cref{thm:NP-USW})                          & \multirow{2}{*}{?} \\
				&                                        & $\USW{(\frac{n}{n-1})}^{\dagger}$                 &(\cref{thm:scapegoat})   \\
				&\multirow{2}{*}{\textbf{ESW}}               &   APX-h                                           &(\cref{thm:ESWunbalNP})                       &  \multirow{2}{*}{in P} &\multirow{2}{*}{(\cref{thm:ESWident})}\\
				&                                            &  in  P for $\tau \in \{0,\sfrac{1}{3},\frac{t}{t+1},1\}$, $t\in \mathbb{Z}_+$ \quad \quad &(\cref{thm:ESWunbal})               \\
				\bottomrule
				
			\end{tabular}
			\caption{Complexity of computing \USW{} and \ESW{} optimal allocations given $n$ agents and $m$ items. $\USW{\alpha}$ refers to an $\alpha$ approximation to the optimal \USW{}. Algorithmic results marked with $\dagger$ hold even when agents have heterogeneous quantiles. All other algorithmic results require agents to have the same (homogeneous) quantile value of $\tau$. All complexity results hold for both homogeneous and heterogenous quantiles. }
			\label{tab:contributions}
	\end{table*} }
	
	
	\paragraph{Utilitarian Welfare.}  
	We first show that  the problem of maximizing the Utilitarian Social Welfare is NP-hard for both balanced (\cref{thm:hardnessUSW:Balanced})  and unconstrained allocations (\cref{thm:NP-USW}). Over balanced allocations (where each agent receives $\lceil m/n\rceil$ or $\lfloor m/n\rfloor$ items), we prove that it is NP-hard to approximate the optimal \USW{}  within a factor $O(\frac{m}{n})$ for instances with $m\leq n^2$. We then present  a $\min(\lceil\frac{m}{n}\rceil+1,n)$-approximation algorithm, which matches the hardness of approximation bound up to a constant. 
	
	In the unconstrained setting,  we present a general reduction format that shows intractability for all $\tau \in [0,1)$. Complementing this, we present a $\left(\frac{n}{n-1}\right)$-approximation algorithm to the optimal $\USW{}$ (\cref{thm:scapegoat}). Our results thus demonstrate that the complexity of both problems differs significantly depending on whether the allocations are required to be balanced. 
	All our algorithms for utilitarian welfare are for instances with \textit{heterogeneous quantiles}, i.e., for settings where agents' quantile values need not be the same.

	\paragraph{Egalitarian Welfare.} 
	For egalitarian welfare, we begin with a result (\cref{lem:binRednESW}) showing that under quantile valuations, maximizing \ESW{} over any set of allocations can be reduced to maximizing \ESW{} over these allocations when agents have {\em binary valuations} (item values are either $0$ or $1$). That is, given a set $\Pi$ and an algorithm $\textsc{ALG}$ that finds a maximum \ESW{} allocation over $\Pi$ under binary values, by making at most $mn$ calls to $\textsc{ALG}$, we can find a maximum welfare allocation over $\Pi$ for an arbitrary instance. Here, $m$ and $n$ refer to the number of items and agents, respectively. Consequently, all our algorithms for \ESW{} are given for instances with binary valuations. 
	
	In the setting where allocations are constrained to be balanced, we prove that \ESW{} optimal allocations can be computed in polynomial time, even when agents have arbitrary \textit{heterogeneous quantiles}. This is in contrast to \USW{}, where we have not only intractability, but hardness of approximation. 
	When not restricted to balanced allocations, we show that the computational complexity of maximizing \ESW{} is highly dependent on the agents' quantile values. Specifically, we prove that when agents have homogeneous quantiles $\tau$, the problem is solvable in polynomial time for $\tau \in \{ 0, \sfrac{1}{3},1\} \cup\{  \sfrac{t}{t+1} \ | \  t \in \mathbb{Z}_+ \}$. In contrast,  for $\tau \in (0,\sfrac{1}{4}] \cup (\sfrac{3}{8}, \sfrac{2}{5}] \cup (\sfrac{5}{9}, \sfrac{3}{5}]$, the problem becomes APX-hard, with  no multiplicative approximation possible, unless P=NP. 
	
	\paragraph{Identical Valuations.} When all agents have the same valuation function, the strong intractability results for maximum \USW{} under balanced allocations and maximum \ESW{} over all allocations can be overcome. Specifically, finding maximum welfare balanced allocations can be done in polynomial time for either \USW{} or \ESW{}. Note that, the result for \ESW{} follows from the case of general valuations itself (\cref{thm:balESW}). For unconstrained allocations, maximum \ESW{} allocations can be found using a surprising method of ``balancing'' the number of low valued items across agents. These results are provided in \cref{sec:identical}. The problem of maximizing \USW{} in the unconstrained setting with identical valuations remains open. %

	%
	
	\paragraph{Near-optimal results.} Our results are in fact quite tight. For \USW{}, it is straightforward to see that the hardness bounds are match the guarantees provided by the approximation bounds, upto a constant. Specifically, in the balanced case, it is NP-hard to get an $\frac{m}{n(12+\epsilon)}$ approximation algorithm, for any $\epsilon>0$ and sufficiently large $m/n$. Our approximation algorithm matches this up to a constant by giving a $\min (\lceil\frac{m}{n}\rceil+1,n)$. For the unconstrained case, it is straightforward to see that an $\frac{n}{n-1}$ approximation is near-optimal for an NP-hard problem. Further, it is worth noting that for  both our intractability results for \USW{} we show a generalized reduction for {\em every} choice of $\tau \in [0,1)$. This is complemented by straightforward algorithms to maximize \USW{} when $\tau=1$. As a result, we present a comprehensive set of results for utilitarian welfare. 
	
	For egalitarian welfare, we show that even with binary values for the items, the problem of maximizing \ESW{} is NP-hard. As this holds for binary valuations, any $\alpha>0$ approximation on \ESW{} would imply exact \ESW{}. As a result, the problem of maximizing \ESW{} is APX-hard with no non-trivial approximation possible. We then identify a subset of quantiles where the problem of maximizing \ESW{} can be completed in polynomial time, while showing NP-hardness for a large sub-class of the remaining quantiles. The intractability holds when all agents have the same quantile, thus easily extends to  heterogeneous quantiles. These results contrast with those for  utilitarian welfare, where the cutoff between tractable and intractable instances is quite simple.

	\subsection{Related Work}
	The problem of allocating indivisible items fairly and/or efficiently is very well studied (see \citet{AAB+2022fair} for a survey). Existing literature almost exclusively assumes that aggregated preferences are monotone, very often, additive \citep{CKM+2019unreasonable,ACIW2022fair}. Our proposed valuations are non-monotone for most quantiles.  
	
	\paragraph{Allocating Indivisible Items.} The problem of allocating indivisible items fairly and/or efficiently is very well studied (See \citet{AAB+2022fair} for a survey). Existing literature almost exclusively assumes that preferences are aggregated in a monotone manner, often assuming additive valuations \citep{CKM+2019unreasonable,ACIW2022fair}, but also at times subadditive \citep{BKKN2024sublinear,BCIZ2020finding}  or superadditive valuations \citep{BVV2023fair,ViZi2023weighted}. A couple of papers also consider arbitrary valuations, with no underlying structure guaranteed, in addition to monotone valuations \citep{BBB+2024envy,BBPP2024nearly}. Our paper considers quantile preferences which may be monotone for the two extreme  quantiles, but non-monotone for all others.  
	
	\paragraph{Constrained Allocations.} While typical work on allocating indivisible items does not explicitly restrict the type of allocations studied there has been some work restricting the number of items that can be allocated. \citet{SHS2023efficient} and \citet{BiBa2018fair} consider a setting where items are partitioned into categories and there is a uniform constraint on how many items of each category can be allocated to a single agent. For this space, \citet{BiBa2018fair} focus on envy based fairness, while \citet{SHS2023efficient} consider efficiency via pareto optimality and envy-based fairness. 
	
	\citet{CaNa2024repeatedly} study a repeated matching setting where there are $T$ rounds and $n$ agents and $n$ items. In each round, each agent must receive exactly one item. Here, value for an item depends on how many times this agent has received the item in the past. For this space, \citet{CaNa2024repeatedly} pursue utilitarian social welfare and envy-based fairness.
	
	\paragraph{Quantile based preferences. } Quantile based preferences are well-established in mathematical economics and social choice theory. \cite{DeGa2019dynamic,DeGa2022static} show that quantile preferences are a more accurate model of real-life behavior of agents in random settings over expected utility. Recently, quantile valuations have been introduced to the setting of randomized social choice as well as one and two-sided matchings \cite{CaRo2025quantile}. 
	
	These preferences are a generalization of \textit{preference set extensions} that lift preferences over individual items to a set of items. The study of preference set extensions has a long-standing history in social choice theory~\citep{BBP04a} and has been applied to hedonic coalition formation games~\citep{CeHa2003computational,CeHa2004stable,Cech08a}, committee selection~\citep{AzMo20a} and multidimensional matchings \citep{HNR2025strategyproof}. Among them, one is called the \textit{best set extension} in which the sets are compared based on the best item in each set. One is called the \textit{worst set extension},   in which the sets are compared based on the best item in each set. The best and worst extension correspond to the $\tau=1$ and $\tau=0$ in our model.

	Quantile based set extensions have been explored in prior work through the lens of specific quantiles. The downward lexicographic (DL) and the upward lexicographic (UL) set extension are both natural refinements of the best and worst  set extensions, respectively. Both lexicographic extensions are also special cases of set extensions based on additive valuations~\citep{BBP04a}.  
	Lexicographic preferences  have been well studied within fair division \citep{AGMW15a,HMW2023fairly,HSVX2021fair,HSVX2023fairly,EPS2022fairly}.
	Other quantiles have also been considered previously. \citet{NiPa84a} characterize median quantile preferences that are a special case of $\tau=1/2$ in our model.

	\paragraph{Quantiles within Fair Division.} Recently, quantiles have been used in various ways to form and solve fair division problems. None of this work uses quantiles within the preferences, however. \citet{BHP2022dynamic} use quantiles to determine the allocation in dynamic settings. Specifically, in one of their algorithms, they allocate a new item to the agent for whom the ordinal rank of this new item is the highest compared to items already in their bundle.  Note that valuations here continue to be additive and the corresponding quantile values are only used to choose the agent who would value the item the most. 
	
	\citet{BFHN2023fair} consider settings where an agent assesses the fairness of a bundle by comparing it to her valuation in a random allocation. In this framework, a bundle is considered $q$-quantile fair, if it is at least as good as a bundle obtained in a uniformly random allocation with probability at least $q$. 
	In a similar vein, \citet{BFG+2024average} introduce the average value problem where the valuations are additive but they require that the average value of the bundles received by the agents meets a certain threshold.

	
	\section{Model}
	
	We shall use $[t]=\{1,\cdots, t\}$ for any $t\in \mathbb{Z}_+$.
	
	We consider a setting with  a set of agents $N$ s.t. $|N|=n$ and a set of items $M$, s.t. $|M|=m$. Each agent  $i\in N$ has a valuation function $v_i$ over $M$. We shall assume that for all $i\in N$ and $g\in M$ $v_i(g)\geq 0$, that is, all our items are {\em goods}. We shall use the terms ``items'' and ``goods'' interchangeably. Further, we refer to a subset of items as a {\em bundle}. 
	
	Informally, a valuation function is a $\tau$ \textit{quantile valuation}, for $\tau\in[0,1]$, if the value assigned to  a bundle $B\subseteq M$ is determined by the $\tau$ quantile of the distribution of item values in $B$.
	
	\begin{definition}[Quantile Valuations]
		Given a set of indivisible items $M$, we say that $v_i:2^M\rightarrow \mathbb{R}$ is a $\tau_i$ quantile for $\tau_i\in [0,1]$, if for any subset $B\subseteq M$, we have that \[v_i(S)= \min_{g\in S} \Bigg\{v_i(g): \frac{|\{g'\in B:v_i(g')\leq v_i(g)\}|}{|B|}\geq \tau_i\Bigg\}.  \]    
	\end{definition}
	
	An equivalent way of defining quantile valuations is to say that $v_i$ is a $\tau_i$ quantile for $\tau_i \in [0,1]$ if for any subset $B\subseteq M$ where $g_{i_{1}},\cdots,g_{i_{|B|}}$ are the items in $B$ s.t. $v_i(g_{i_1})\leq \cdots \leq v_i(g_{i_{|B|}})$ and $v_i(B)=v_i(g_{i_{\lceil \tau_i |B| \rceil} })$ if $\tau_i>0$, otherwise, $v_i(B)=v_i(g_{i_1})$. In particular, if $\tau_i$ is $0$, the agent values the given set as much as their least favorite item and if $\tau_i$ is $1$, they value it as much as their most favorite item.    We shall refer to the item $g_{i_{\lceil \tau_i B\rceil}}$ as the \textbf{quantile representative} of the bundle $B$ for agent $i$.
	
	We shall use $\tau_i$ to denote the quantile of agent $i$. When all agents have the same quantile, we shall simply use $\tau$. Whenever we assume agents need not have the same quantile, we shall mention that the instance has {\em heterogeneous quantiles}. Unless otherwise specified, we assume homogeneous quantiles, that is, all agents have the same quantile $\tau\in [0,1]$.  Consequently, an instance of our problem can be expressed by the tuple $I=\langle N, M, v, \tau \rangle$ where $v=(v_i)_{i\in N}$ and whenever we have heterogeneous quantiles, $\tau=(\tau_i)_{i\in N}$. 
	
	\begin{observation}
		Quantile valuations are non-monotone. Consider $g_1,g_2,g_3$ s.t. $v_i(g_1)=0$, $v_i(g_2)=1$ and $v_i(g_3)=2$ with $\tau_i=\sfrac{1}{2}$. Note that, $v_i(\{g_1,g_2,g_3)-v_i(\{g_1,g_3\})=1$ but $v_i(\{g_2,g_3\})-v_i(\{g_3\})=-1$.
	\end{observation}
	
	One benefit of this definition is that despite being non-monotone, the function does not require an oracle to specify the value on the subsets of items.  

	\paragraph{Allocations.} Each item must be allocated to some agent. Formally, an allocation $A=(A_1,\cdots, A_n)$ is an $n$-partition of $M$, with $A_i$ being the set of items assigned to agent $i\in N$. We shall use $\Pi(n,M)$ to denote the set of all allocations that divide the items in $M$ among $n$ agents. Our aim is to allocations with maximum welfare.
	
	\begin{definition}[Utilitarian Social Welfare (\USW{})]
		Given an instance $I=\langle N,M, v, \tau \rangle$ and an allocation $A=(A_1,\cdots, A_n)$, the utilitarian social welfare is the sum of the values received by the agents  $\USW{}(A)=\sum_{i\in N}v_i(A_i).$
	\end{definition}
	
	Given an instance $I=\langle N,M, v, \tau \rangle$, let $A^*$ be a maximum \USW{} allocation. We shall say that allocation $A$ is \USW{$\alpha$} for $\alpha \geq 1$, if $\USW{}(A)\geq  \frac{1}{\alpha}\USW{}(A^*)$. 
	
	\begin{definition}[Egalitarian Social Welfare (\ESW{})]
		Given an instance $I=\langle N,M,v, \tau \rangle$ and an allocation $A=(A_1,\cdots, A_n)$, the egalitarian social welfare is the minimum of the values incurred by the agents $\ESW{}(A)=\min_{i\in N}v_i(A_i).$    
	\end{definition}

	\paragraph{Balanced Allocations}
	
	Quantile valuations are very intuitive for settings where we insist on each agent getting an  equal number of items, whenever possible. This can be seen in the case of assigning papers to reviewers in conferences or assigning students to teachers. We shall consider both Utilitarian and Egalitarian Welfare with and without this requirement. 
	
	When considering balanced allocations, we shall consider only those allocations where each agent gets a bundle of size either $\lfloor m/n\rfloor$ or $\lceil m/n \rceil$. Note that this poses no restriction on the choice of agents, items or valuations. We shall use $\overline{\Pi}(n,M)$ to denote the set of all balanced allocations for instance $I$.
	It is important to note that when we consider maximizing \USW{} or \ESW{} over balanced allocations, we are in fact finding a maximum welfare allocation from $\overline{\Pi}(I)$ alone. That is, we are not holding the allocations to the standard of maximum welfare under unconstrained allocations. When not explicitly specified, we shall assume unconstrained allocations. 

	\begin{example}
		Consider an instance $I=\langle N,M,v,\tau\rangle$ where $n=4$ and $m=7$ s.t.  for each $i\in N$, item values are as follows:
		\begin{table*}
			\centering
			\begin{tabular}{cccccccc}
				&   $g_1$\quad &   $g_2$\quad &   $g_3$\quad &   $g_4$\quad &   $g_5$\quad &   $g_6$\quad &   $g_7$ \\
				\midrule
				$v_i$   &   1     &     1   &   1     &     1   &   0     &     0   &    0 
			\end{tabular}
		\end{table*}
		
		Observe that for any bundle $B\subseteq M$ and for any choice of $\tau_i$s, we have that $0\leq v_i(B)\leq 1$ for each $i\in N$. \\
		
		\noindent \underline{$\tau\leq \sfrac{1}{2}$.} First choose $\tau_i\leq \sfrac{1}{2}$ for each $i\in N$. Here if $\tau_i=0$, $v_i(B)=0$ if and only if $B\cap \{g_5,g_6,g_7\}\neq \emptyset$. Further, for each choice of $\tau\in (0,\sfrac{1}{2})$ for the bundle $B$ to have value $1$, $B$ must contain strictly more items of value $1$ than value $0$. 
		
		Consequently,  for any choice of $\tau_i\in [0,\sfrac{1}{2}]$, 
		the maximum \USW{} under any allocation is $3$ for example in the allocation $A$ where $A_1=\{g_1\}$, $A_2=\{g_2\}$, $A_3=\{g_3\}$ and $A_4=\{g_4,g_5,g_6,g_7\}$. In contrast, the maximum \USW{} under a balanced allocation is  $2$ as in the allocation $A'$ where $A'_1=\{g_1\}$, $A'_2=\{g_2,g_3\}$, $A'_3=\{g_4,g_5\}$ and $A'_6=\{g_6,g_7\}$. Observe that all allocations in these instance have the same \ESW{} of $0$.\\
		
		\noindent \underline{$\tau>\sfrac{1}{2}$.} In contrast, when $\tau_i>\sfrac{1}{2}$ for each $i\in N$, the maximum \USW{} under both balanced and unconstrained allocations is $4$ as in the allocation $A^*$ where $A^*_1=\{g_1,g_5\}$, $A^*_2=\{g_2,g_6\}$, $A^*_3=\{g_3,g_7\}$ and $A^*_4=\{g_4\}$. Further, in this case the maximum \ESW{} is now $1$ under both balanced and unconstrained allocations, as observed under $A^*$.
	\end{example}
	
	Note that when all agents have $\tau_i=1$, finding a maximum \USW{} allocation becomes straightforward. It suffices to find a maximum weight matching between agents and items, as the weight of this matching will be the maximum \USW{} possible. Allocating each agent their matched item and then allocating unmatched items arbitrarily ensures an allocation of \USW{} equal to the weight of the maximum matching.
	
	\begin{observation}
		Given an instance $I=\langle N,M,v,\tau\rangle$ where $\tau_i=1$ for each $i\in N$, a maximum \USW{} allocation can be found in polynomial time.  Further, the maximum \USW{} under balanced allocations will be the same as that under unconstrained allocations for $I$.
	\end{observation}
	
	Consequently, when considering \USW{}, our focus will be on instances where $\tau\in [0,1)$.  We now provide some other relevant definitions for our results.
	
	\begin{definition}[Identical Valuations]
		We say that a given instance has identical valuations if for each $i,j\in N$ and each $g\in M$, $v_i(g)=v_j(g)$ and $\tau_i=\tau_j$. Under instances with  identical valuations, we shall use $v$ to denote the valuation function of every agent. 
	\end{definition}
	
	\paragraph{Binary Valuations.} We will often focus on instances where each agent's valuation only takes values $0$ or $1$. That is, for each $i\in N$ and $g\in M$, $v_i(g)\in \{0,1\}$.  We shall refer to such valuations as \textbf{binary valuations}. In our model, the complexity of maximizing egalitarian welfare in over any given set of allocations $\Pi'\subseteq \Pi(n,M)$ is shown to be equivalent to maximizing \ESW{} over $\Pi'$ when all $v_i(g)\in \{0,1\}$, that is with binary valuations. When valuations are binary, maximum \ESW{} is $1$, if and only if there is an allocation where all agents receive a value of $1$. Consequently, under binary valuations, any allocation that gives $\alpha>0$ approximation to the maximum \ESW{} would simply be a maximum \ESW{} allocation. As a result, we do not pursue any multiplicative approximations to \ESW{}. 
	
	\subsubsection*{Relation between maximizing \USW{} and \ESW{}.}
	Observe that under binary valuations, there exists an allocation with an \ESW{} of $1$ if and only if there exists an allocation with a \USW{} of $n$. Consequently, whenever the problem of maximizing \ESW{} is shown to be hard for binary valuations, the same intractability would extend to maximizing \USW{}. However, it may be possible, under a specific setting with binary valuations, to maximize \ESW{} in polynomial time whereas maximizing \USW{} for that same setting can be intractable. What this implies is that while it may  be computationally tractable to check if all agents can simultaneously get a value of $1$, but finding a maximum sized subset of agents who can obtain a value of $1$ may be computationally hard. 
	Conversely, whenever  \USW{} can be maximized in polynomial time, a maximum \ESW{} allocation can also be found in polynomial time. Unfortunately, in the case of general quantile valuations, we either find it NP-hard or APX-hard to maximize \USW{}. While can indeed maximize \USW{} in polynomial time for balanced allocations under identical valuations.  
	
	

	
	\section{Balanced Allocations} \label{sec:bal}
	
	We first explore quantile valuations with the requirement that the allocations be balanced. Our results for \USW{} and \ESW{} lie in stark contrast with each other here. 
	

	\subsection{Utilitarian Social Welfare}\label{subsec:balUSW}
	We first show, when allocations are balanced,  that maximizing \USW{}  is NP-hard to approximate to better than a factor of $O(\lceil\frac{m}{n}\rceil)$. We then proceed to give a polynomial-time algorithm that matches hardness of approximation bound.
	

	\subsubsection{Hardness of Approximation.}
	We now give an approximation preserving reduction from the $k$-\textup{\textsc{DimensionalMatching}}(kDM) problem. In the kDM problem, we are given a k-partite hypergraph  $G=(X,H)$ where $X$ is the vertex set and the set of hyperedges $H\subseteq X^k$, that is, each hyperedge has size $k$ graph is the hyperedge set of $G$. The kDM problem requires finding a maximum collection of disjoint edges in $G$. Specifically, the decision version of the problem involves a hypergraph $G$ and a target $\ell$ and the given instance is a yes instance if and only if there is collection of vertex disjoint hyperedges (kDM) of size $\ell$ in $G$. \citet{LST2025asymptotically} showed that for any $\epsilon>0$, this problem is hard to approximate to a factor better than $\frac{k}{12+\epsilon}$ for a large $k$ unless \textsc{NP}$\subseteq$\textsc{BPP}.
	
	\begin{theorem}\label{thm:hardnessUSW:Balanced} 
		Given  instance $I=\langle N,M,v,\tau\rangle$ where $m\leq n^2$ and $\tau\in [0,1)$ , unless \textsc{NP}$\subseteq$\textsc{BPP}, for any constant $\epsilon>0$ and a sufficiently large value of $\frac{m}{n}$,  no polynomial time algorithm guarantees an $\USW{\frac{m}{n(12+\epsilon)}}$ balanced allocation.
	\end{theorem}
	\begin{proof}
		Given an instance of kDM, $\langle G=(X,H), \ell \rangle$,  we can assume, without loss of generality, that each vertex is contained in at least one hyper-edge. Thus, we have that $|X|\leq k |H|$. We create an instance of our problem with $n$ agents and $m=kn$ items 
		as follows: 
		\begin{itemize}
			\item For each edge $H_i\in H$, we create agent $i$. 
			\item For each vertex $x\in X$, we create item $g_x$.   
			\item To balance the item count, we introduce $k|H| - |X|$ dummy items $g'_1,\cdots, g'_{kn - |X|}$. 
		\end{itemize}
		
		Thus, we have $n=|H|$ agents and the number of items is $m=k|H|$. As a result, we have that $m = kn$. Recall that balanced allocations for this instance require $k=\frac{m}{n}$ items to be allocated to each agent. 
		
		For each agent $i\in N$, we set $\tau_i=0$ for all $i\in N$. Now for $i$ and each $g_x$, if $x\in H_i$, we set $v_i(g_x)=1$ else, we set $v_i(g_x)=0$. Finally, for each $t\in [k|H| - |X|]$, set $v_i(g'_t)=0$. 
		We now show that a matching of size $\ell$ in the kDM problem can be transformed into a balanced allocation whose \USW{} is at least $\ell$ in the reduced instance of our problem, and vice versa. Consider a matching $\mu$ of size $\ell$ in kDM. For each $H_i\in \mu$, allocate the items vertices in $H_i$. That is, $A_i=\{g_x|x\in H_i\}$. Arbitrarily allocate the remaining items, ensuring $|A_i|=k$. As each edge agent within the matching gets a value of $1$, we have that $\USW{}(A)\geq \ell$. 
		
		Now consider a balanced allocation $A$ in the reduced instance with a \USW{} of $\ell$. As the maximum value for any agent is $1$, this implies that $\ell$ agents receive a value of $1$ from $A$. By construction, $v_i(A_i)=1$ only if $A_i$ contains {\em all} the items corresponding to the vertices in $H_i$. Further, in the constructed instance, every balanced allocation must allocate each agent a bundle of size $k=\frac{m}{n}$. 
		Now, as $A$ is an allocation, we have that $\mu=\{H_i|v_i(A_i)=1\}$ must be a vertex disjoint collection of edges (a matching). Recall that, $\USW{}(A)\geq \ell$. Consequently, $|\mu|\geq \ell$.
		
		As a result, we have an approximation preserving reduction from the kDM problem. \\
		
		\noindent \textbf{Extending to other quantiles.} Observe that this reduction would also work the exact same way for all quantiles $\tau\in [0,\frac{1}{k}]$, as any bundle of size $k$ would be represented by its least valued item for these quantiles. Further, these ideas        
		can be extended to all $\tau\in (\frac{1}{k},1)$ by adding enough dummy items s.t. an agent gets a value of $1$ only if they get $k$ items of value $1$. Specifically, given any choice of $\tau\in (\frac{1}{k},1)$, choose $t$ s.t. $\lceil (t+k)\tau\rceil =t+1$. It remains to prove that such a choice of $t$ exists.
		
		Observe that $\lceil (t+k)\tau\rceil=t+1$ require $t<(t+k)\tau\leq t+1$. Consequently, given a fixed $\tau$ and $k$, the choice of $t$ must satisfy $t\in [\frac{\tau k-1}{1-\tau},\frac{\tau k}{1-\tau})$. Note that to show that an appropriate choice of $t$ exists, it suffices to show that this interval always contains at least one positive valued integer. First recall that $\tau>\frac{1}{k}$ and as a result, $\tau k>1$. Further as $1-\tau<1$, we have that $\frac{\tau k}{1-\tau}>1$. Further the length of the interval $[\frac{\tau k-1}{1-\tau},\frac{\tau k}{1-\tau})$ will be exactly $\frac{1}{\tau}>1$. Consequently, the interval $[\frac{\tau k-1}{1-\tau},\frac{\tau k}{1-\tau})$ must contain at least one choice of $t$ s.t.  $\lceil (t+k)\tau\rceil =t+1$.
		
		For such a choice of $t$, we can now do an analogous reduction with the same set of agents and a vertex item $g_x$ for each $x\in X$. Further, we add $k|H| - |X|+nt$ dummy items $g'_1,\cdots, g'_{n(t+k) - |X|}$ where agent valuations are set to $0$ for all dummy items as before. Note that in this instance we have $n=|H|$ agents and $m=n(t+k)$ and each agent must get a bundle of size $t+k$ under a balanced allocation. Using analogous arguments to the reduction for $\tau=0$, we obtain an approximation preserving reduction from the kDM problem for {\em every} choice of $\tau \in [0,1)$. 
		
		\citet{LST2025asymptotically} proved that for any $\epsilon>0$, there exists a class of instances with $k < |H|$ being sufficiently large, such that kDM is hard to approximate to a factor better than $\frac{k}{12+\epsilon}$ for a large $k$ unless \textsc{NP}$\subseteq$\textsc{BPP}. Thus, we have hardness of approximation for maximizing \USW{} over balanced allocations for instances where $m< n^2$. 
	\end{proof}
	

	\subsubsection{Near-Optimal Algorithm.}
	We now provide an approximation algorithm that  matches the lower bound placed by \cref{thm:hardnessUSW:Balanced} up to a constant. The greedy algorithm (\cref{alg:goods-USWbalanced}) proceeds by iteratively allowing unassigned agents to ``demand" their best possible set from the unassigned items. Specifically, this set contains the items that will be the quantile representative or higher valued. Note that the number of such items depends on each agent's quantile value and whether the agent is meant to get $\lceil m/n\rceil$ items or $\lfloor m/n \rfloor$ items. We then choose the agent whose value for their demanded set is highest. We repeat this  until all items are assigned. Throughout the algorithm we  ensure that exactly $m \mod n$ agents get bundles of size larger than $\lfloor \frac{m}{n}\rfloor$. 
	
	\begin{algorithm}[t]
		\KwIn{Instance with heterogeneous quantiles $\langle N,M, v, \tau \rangle$}
		\KwOut{A balanced allocation $A$}
		\DontPrintSemicolon
		Initialize set of unallocated goods $P\gets M$\;
		Initialize set of unassigned agents $N'\gets N$\;
		Let $t\gets (m \mod n) - n$  \;
		\quad \quad \quad \quad \quad  \quad \quad \quad \quad \Comment{Negative of the number of bundles that need $\lfloor \frac{m}{n}\rfloor $ items}
		
		\While{$N'\neq \emptyset$}{
			\eIf{$t>0$}{
				$t_+\gets 1$\;
			}
			{
				$t_+\gets 0$
			}
			$k\gets \lfloor m/n \rfloor + t_+$ \quad \quad \quad \quad \quad \quad \quad \quad \quad \quad \quad \quad \quad \quad \quad \quad \quad \quad  \Comment{Current bundle size}
			\For{each  $i\in N'$}{
				Let $k'\gets \min(k,k-\lceil \tau_i k\rceil+1) $\;
				Let $S_i\subseteq S$ be s.t. $|S_i|=k'$ and for all $g\in S_i$ and $g'\in S\setminus S_i$, $v_i(g)\geq v_i(g')$\;
			}
			Let $i^*\gets \argmax\limits_{i\in N'} \left( \min_{g\in S_i} v_i(g)\right)$\;
			Set $A_{i^*}\gets S_{i^*} $ and $k_{i^*}\gets k$\;        
			$P\gets P\setminus A_{i^*}$\;
			$N'\gets N' \setminus \{i^*\}$ \;
			Update $t\gets t+1$ \;
		}
		Allocate items in $P$ arbitrarily  s.t. $|A_i|=k_i$ for all $i\in N$\;
		\textbf{Return} $A$
		\caption{$\USW{\min(\lceil\frac{m}{n}\rceil+1,n)}$ Greedy Algorithm}\label{alg:goods-USWbalanced}
	\end{algorithm}
	
	\begin{restatable}{theorem}{USWgreedy}\label{thm:balUSWgreedy}
		Given an instance $I=\langle N,M,v,\tau \rangle$ and heterogeneous quantiles, \cref{alg:goods-USWbalanced} returns a balanced allocation which is $\USW{\min(\lceil\frac{m}{n}\rceil+1,n)}$ in polynomial time.
	\end{restatable}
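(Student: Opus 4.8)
The plan is to establish two ingredients—a per-agent guarantee on what the greedy \emph{actually delivers} after top-up, and a charging argument bounding the optimum against these delivered values—and then combine them to obtain both the $\frac{m}{n}+1$ and the $n$ bounds.

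First I would pin down the value each agent secures. For a selected agent $i$ with core $S_i$, write $w_i=\min_{g\in S_i}v_i(g)$, the value of the $k_i$-th most valuable item available when $i$ is chosen; this is exactly the quantity driving the selection in line~6. The key observation is a counting/certification claim about quantiles: since $k_i=\min(k,k-\lceil\tau_i k\rceil+1)$, a size-$k$ bundle attains value at least $t$ for agent $i$ precisely when it contains at least $k-\lceil\tau_i k\rceil+1$ items of value $\geq t$. As the final bundle $A_i\supseteq S_i$ contains the $k_i$ items of $S_i$, each of value $\geq w_i$, the arbitrary top-up cannot push the $\tau_i$-quantile below $w_i$; hence $v_i(A_i)\geq w_i$ no matter how the leftover items are assigned. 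Moreover $w_i$ is precisely the largest value $i$ could obtain from \emph{any} size-$k$ bundle drawn from the items still available at its turn, so greedy delivers each agent its best achievable value given what remains. I would also record feasibility: since $\sum_i k_i\le nk=m$, there are always enough items to form each core and to top every bundle up to exactly $k$, so the output is balanced.

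Next I would compare with an optimal balanced allocation $A^*$, writing $o_i=v_i(A^*_i)$. By the same certification fact, $A^*_i$ contains a set $C_i$ of $k_i$ items each of value $\geq o_i$ to $i$, and the $C_i$ are pairwise disjoint because the $A^*_i$ are. The charging is then: if all of $C_i$ is still available when $i$ is selected, the $k_i$-th largest available value is $\geq o_i$, so $o_i\le w_i$ and I charge $o_i$ to $i$'s own delivered value. Otherwise let $s(i)$ be the first earlier step at which some item of $C_i$ is removed, taken into the core of the agent $\sigma$ chosen there; just before that step all of $C_i$ was available, so $i$'s achievable value then was $\geq o_i$, and since greedy chose $\sigma$ over $i$ by exactly this achievable-value criterion, $\sigma$'s delivered value $w_\sigma$ is $\geq o_i$. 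I charge $o_i$ to $\sigma$. Counting multiplicities, each agent $a$ receives at most one self-charge plus one charge for every item of $S_a$ that is the first-removed certificate item of another agent; since $|S_a|=k_a\le k$ and the certificates are disjoint, at most $k$ such charges land on $a$. Summing, $\OPT=\sum_i o_i\le (k+1)\sum_a w_a\le (k+1)\,\USW{}(A)$, giving the $\frac{m}{n}+1$ approximation.

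For the $n$ bound I would use only the first pick: at the start every certificate is intact, so $w_i\ge o_i$ for all $i$, and the first agent is chosen to maximise this achievable value, whence $\USW{}(A)\ge w_{\sigma_1}=\max_i w_i\ge\max_i o_i\ge\frac1n\OPT$ (the remaining agents contribute nonnegatively, as the items are goods). Taking the better of the two bounds yields the claimed $\min(\frac{m}{n}+1,n)$ guarantee. I expect the main obstacle to be the charging step: one must ensure each unit of optimal value is charged to an agent's \emph{genuinely delivered} value rather than to an overestimate, which is why it is essential that the selection criterion be the achievable value $w_i$ (so a ``robbed'' agent can charge the robber), and that the multiplicity be controlled by the core size $k_i\le k$ together with disjointness of the optimal certificates.
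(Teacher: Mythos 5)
Your proof is correct and yields the same $\min(k'+1,n)\le\min(\frac{m}{n}+1,n)$ guarantee (where $k'=\max_i k_i$), but by a genuinely different route from the paper's. The paper sorts the optimal bundles so that $v_1(A^*_1)\ge\dots\ge v_n(A^*_n)$ and proves by induction a prefix-domination claim: the agent chosen in round $t$ of the greedy loop obtains value at least $v_{(t-1)(k'+1)+1}(A^*_{(t-1)(k'+1)+1})$, because after $t-1$ rounds at most $k'(t-1)$ items are gone, so among the top $(t-1)(k'+1)+1$ optimal bundles at least $t$ are untouched and at least one belongs to a still-unselected agent; summing in blocks of size $k'+1$ then gives $\USW{}(A^*)\le (k'+1)\,\USW{}(A)$. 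You instead run a charging argument: extract pairwise disjoint certificates $C_i\subseteq A^*_i$ of $k_i$ items of value at least $o_i$, charge each $o_i$ either to $i$ itself (if $C_i$ is intact at $i$'s turn) or to the agent whose core first removes an item of $C_i$, and cap the multiplicity on each greedy agent at $k_a+1$ via disjointness of the certificates. This dispenses with the sorting and the block indexing, localizes the loss per greedy agent, and makes explicit --- through your quantile-certification fact --- why the arbitrary top-up at the end cannot push an agent's value below the minimum of its core, a point the paper's proof uses silently. What the paper's version buys in exchange is an explicit identification of which optimal values each greedy prefix dominates, rather than an amortized bound.

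One caveat concerning a reading of the algorithm that your proof and the paper's proof share: you assert that $w_i=\min_{g\in S_i}v_i(g)$ is ``exactly the quantity driving the selection in line 6.'' Literally, line 6 of \cref{alg:goods-USWbalanced} selects by $v_i(S_i)$, the $\tau_i$-quantile of the size-$k_i$ set $S_i$, i.e.\ its $\lceil\tau_i k_i\rceil$-th smallest element, which can strictly exceed $w_i$; under that literal criterion the selection value overestimates what is delivered after top-up, and both your ``robbed agent charges the robber'' step and the paper's own base case $v_{i_1}(A_{i_1})\ge v_1(A^*_1)$ can fail. Your closing remark shows you identified exactly this requirement --- that selection must be by the achievable value $w_i$ --- so this is a shared (and evidently intended) interpretation of the algorithm, not a gap in your argument relative to the paper's.
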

	
	\begin{proof} In order to prove correctness, we first show balancedness then the approximation guarantee. Note that when the number of items $m$ is a multiple of $n$, all agents must get bundles of equal size. Some agents only need to get larger bundles when $m \mod n>0$. Specifically, under a balanced allocation, an agent must receive $\lfloor m/n\rfloor +1$ items if and only if $m \mod n>0$.\\
		
		\noindent \textbf{Balancedness.} First we establish that each agent gets either $\lceil m/n \rceil$ or $\lfloor m/n \rfloor$ items. Note that, in each iteration, agents demand a set of items based on the variable $k$ which only ever takes the value of $\lfloor m/n \rfloor$ or $\lfloor m/n \rfloor +1$. When an agent $i$ is first allocated their demanded set $S_i$, the current value of $k$ is stored in $k_i$. This agent is removed from $N'$ and the value of $k_i$ once fixed is never changed. Any items allocated to $i$ later are always such that $|A_i|=k_i$. 
		
		Further, note that an agent gets $\lfloor m/n\rfloor +1$ items only if the variable $t>0$. Recall that $t$ is initialized to $(m \mod n)-n$ and increased by $1$ in every iteration of the while loop. As a result, for only the first $n-(m \mod n)$ iterations of the while loop do agents get a value of $k_i=\lfloor m/n\rfloor $. Consequently, exactly $m \mod n$ agents receive bundles of size $\lfloor m/n\rfloor +1$, while the remaining agents receive bundles of size $\lfloor m/n\rfloor$.\\
		
		\noindent \textbf{Approximation Guarantee.} 	Recall the notation of $N',$ $P$ and $S_i$ as defined in \cref{alg:goods-USWbalanced}. Given $I$, let $A^*=(A^*_1,...,A^*_n)$ be a maximum \USW{} balanced allocation.   Without loss of generality, we assume that  $v_1(A^*_1) \geq v_2(A^*_2) \geq \dots \geq v_n(A^*_n)$. Let $i_t$ denote the agent who is allocated a bundle in the $t$-th iteration of the while loop, and let $A_{i_t}$ denote the corresponding bundle allocated to her under \cref{alg:goods-USWbalanced}. 
		
		Let $k_i'=\min(\lceil m/n\rceil,\lceil m/n \rceil-\lceil(\tau_i \lceil m/n \rceil )\rceil +1)$. That is, $k_i'$ is minimum number of items in any $\lceil m/n \rceil$ sized bundle $B$ s.t. $v_i(g)\geq v_i(B)$. Recall that under the greedy algorithm, agent $i$ ``demands" set $S_i$ containing $k_i'$ items. Specifically, these are the highest valued $k_i'$ items in $P$ for $i.$  Let $k^*$ be the maximum value of $k_i'$ encountered by \cref{alg:goods-USWbalanced} on its execution on the given instance. That is, $k^*$ is the largest set demanded by any agent. Observe that $1\leq k^*\leq \lceil m/n\rceil $.   We shall now show that the first $\lceil \frac{n}{k^*+1}\rceil$ agents to receive a bundle will have value comparable to the value under specific bundles under $A^*$. \\
		
		\noindent\textbf{Claim:} For each $t=1,\cdots, \lceil \frac{n}{k^*+1} \rceil$, we have that the value of agent $i_t$, $$v_{i_t}(A_{i_t}) \geq v_{(t-1)(k'+1)+1}(A^*_{(t-1)(k^*+1)+1}).$$
		
		\noindent{\em Proof of Claim.} 
		We shall prove this by induction. First consider agent $i_1$. Observe that at this point $N'=N$ and $P=M$. As a result, agent $1$ can demand their top $k'\gets \min(k,k-\lceil \tau_i k\rceil+1)$ items in $S_1$ and get a value of at least $v_1(A_1^*)$. 
		Consequently, the best possible bundle $A_{i_1}$ must be such that $v_{i_1}(A_{i_1})\geq v_1(A_1^*)$. 
		
		Suppose we have that for all $t\leq \bar{t}-1$, the claim holds. 
		Let $L=\bigcup_{\ell\in [\bar{t}-1]} A_{i_\ell}$ be the set of items that are allocated up to the $(\bar{t}-1)$th iteration of the while loop. In each iteration at most $k^*$ items are allocated. Consequently, we have that $|L|\leq k^*(\bar{t}-1)$. It follows that in the worst case, the number of bundles under $A^*$ for which some item has already be allocated in $L$ is   $| \{ j\in [(\bar{t}-1)(k^*+1)+1]  \ : \   A^*_j \cap L \neq  \emptyset \} | \leq |L| = k^*(\bar{t}-1) $.
		
		Consequently, we get that among the top $(\bar{t}-1)(k^*+1)+1$ bundles under $A^*$, at least $\bar{t}$ bundles do not intersect with $L$.  However, in order to be able to allocate one of these bundles, or one of equivalent value, the corresponding agent must be within the set $N'$. Thus far, $\bar{t}-1$ bundles have been allocated. Consequently, at least one bundle and agent pair among these $\bar{t}$ unallocated bundles must remain available for selection .
		Hence, we must have that $v_{i_{\bar{t}}}(A_{i_{\bar{t}}}) \geq v_{(t-1)(k^*+1)+1}(A^*_{(t-1)(k^*+1)+1})$.
		\qed
		\vspace{2mm}
		
		\noindent  We can now prove the approximation guarantee. Let $\alpha=\min(k^*+1,n)$. Observe that $\lceil \frac{n}{k^*+1}\rceil=\lceil\frac{n}{\alpha}\rceil$. The \USW{} of $A$ is lower bounded by $\sum_{t=1}^{\lceil \frac{n}{\alpha} \rceil}  v_{i_t}(A_{i_t})$. From the proof of the claim, we know that 
		\[\sum_{t=1}^{\lceil \frac{n}{\alpha} \rceil}  v_{i_t}(A_{i_t})\geq \sum_{t=1}^{\lceil \frac{n}{\alpha} \rceil}  v_{(t-1)(k^*+1)+1}(A_{(t-1)(k^*+1)+1}^*) \]
		Recall that agents are ordered according to $A^*$, that is, $v_1(A^*_1)\geq \cdots\geq v_n(A^*)$. As a result, we get that 
		\begin{align*}
			\USW{}(A^*) &=\sum_{t\in [n]} v_t(A^*_t)\\
			&\geq \alpha \sum_{t=1}^{\lceil\frac{n}{\alpha} \rceil} v_{(t-1)(k^*+1)+1}(A_{(t-1)(k^*+1)+1}^*).
		\end{align*}
		
		From the claim, we know that this is greater than or equal to $\alpha$ times the value obtained by the first $\lceil\frac{n}{k^*+1}\rceil$ agents under \cref{alg:goods-USWbalanced}. Hence, $$\USW{}(A)\geq \sum_{t=1}^{\lceil \frac{n}{\alpha} \rceil}  v_{i_t}(A_{i_t})\geq \sum_{t=1}^{\lceil \frac{n}{\alpha} \rceil}  v_{(t-1)(k^*+1)+1}(A_{(t-1)(k^*+1)+1}^*) \geq \frac{\USW{}(A^*)}{\alpha}=\frac{USW(A^*)}{\min (k^*+1,n)}.$$
		
		Observe that when each agent demands fewer than $n-1$ items, we are guaranteed $\USW{(k^*+1)}$ which may be even better than $\USW{(\lceil m/n\rceil+1)}$. However, when the number of items demanded is at least $n-1$, the greedy algorithm can only guarantee $\USW{n}$. Consequently, for an arbitrary $I$, \cref{alg:goods-USWbalanced} is $\USW{\min (\lceil\frac{m}{n}\rceil +1,n)}$.\\
		
		\textbf{Running Time.} Note that the while-loop proceeds as long as the set $N'$ is non-empty and each iteration of the while-loop removes one agent from $N'$. Consequently, the while-loop runs for exactly $n$ iterations. Within one iteration of the while-loop, the for-loop has at most $n$ iterations. Each iteration of this for-loop computes the set an agent in $N'$ will demand, which takes time $O(m)$. Thus the complete for-loop execution takes time $O(mn)$. Subsequently, finding the agent with the highest value for their demanded set can be done in time $O(n)$. All remaining steps execute in time $O(1)$. Consequently, the algorithm runs in time $O(mn^2)$.        
	\end{proof}

	
	\subsection{Egalitarian Social Welfare}
	We now move to maximizing egalitarian welfare. We begin with a very useful reduction, which facilitates all our algorithms for \ESW{}. We  show that whenever there is an algorithm to find an allocation with maximum \ESW{} under binary valuations, we can use it to find a maximum \ESW{} allocation under general non-negative valuations. 
	
	\begin{restatable}{lemma}{binreduction}\label{lem:binRednESW}
		Given a set of allocation $\Pi'\subseteq \Pi(n,M)$ and an algorithm $\textsc{ALG}$ which computes a maximum \ESW{} allocation over $\Pi'$ for instances with binary valuations.           
		The problem of maximizing \ESW{} over allocations in $\Pi'$ under heterogeneous quantiles and arbitrary non-negative item values can be solved by making at most $O(mn)$ calls to \textsc{ALG} . 
	\end{restatable}
	
	\begin{proof}
		Consider an arbitrary instance with (possibly) heterogeneous quantiles $I=\langle N,M,v,\tau \rangle$ and $\Pi'\subseteq \Pi(n,M)$. Each agent can receive at most $m$ distinct values, consequently, the \ESW{} of an allocation can take at most $mn$ different values. As a result, we can check for at most $mn$ distinct threshold values for $\nu$ s.t. we wish to find an allocation in $A\in \Pi'$ where $\ESW{}(A)\geq \nu$. 
		
		Fix a value for the threshold $\nu\in \{v_i(g)| i\in N, g\in M\}$. We can construct an alternate instance $I'=\langle N,M,v',\tau \rangle$ with binary valuations as follows: $v'_i(g)=1$ if and only if $v_i(g)\geq \nu$. We can now show that an allocation $A\in \Pi'$ has $\ESW{}(A)\geq \nu$ under $v$ if and only if $\ESW{}(A)=1$ under $v'$.
		
		Suppose we have an allocation $A\in \Pi'$ s.t. $A$ has $\ESW{}(A)\geq \nu$ under $v$. Thus, for each $i\in N$, $A_i$ must contain enough goods each with value at least $\nu$ so that $v_i(A_i)\geq \nu$. Thus, for the same quantile $\tau_i$, it must be that $v_i'(A_i)\geq 1$. Consequently, $\ESW{}(A)\geq 1$ under $v'$.
		
		Now, suppose we have an allocation $A\in \Pi'$ s.t. $A$ has $\ESW{}(A)\geq 1$ under $v'$. We can analogously see that $v_i'(A_i)\geq 1$ if and only if $v_i(A_i)\geq \nu$. As a result, it must be that $v_i(A_i)\geq \nu$ for each $i\in N$, and thus, $\ESW{}(A)\geq \nu$ under $v$.
		
		Hence, given an algorithm that finds a maximum \ESW{} allocation for $\Pi'$ under an instance with binary goods, we can make at most $mn$ calls to it to find a maximum \ESW{} allocation for $\Pi'$ under arbitrary goods.
	\end{proof}
	
	\cref{lem:binRednESW} enables us to maximize \ESW{} over balanced allocations, even if the quantile values are heterogeneous. 
	Henceforth, we need only consider a setting where $v_i(g) \in \{0,1\}$ for all $i\in N$ and all $g\in M$. Here, we shall try to see if an allocation with \ESW{} $1$ can exist. That is, all agents must get a value of $1$. In order to achieve this, we first make the following observation:
	
	\begin{observation}\label{obs:basicBinVal}
		Given a bundle $B\subseteq M$ and $i\in N$ s.t. $v_i(g)\in \{0,1\}$ for all $g\in M$, we have that $v_i(B)=1 \Leftrightarrow |\{g\in B|v_i(g)=0\}|\leq \lceil \tau_i |B|\rceil -1$.
	\end{observation}
	
	\begin{algorithm}[!ht]
		\KwIn{Instance with binary values and heterogeneous quantiles $\langle N,M, v, \tau \rangle$ }
		\KwOut{Balanced Allocation $A$}
		\DontPrintSemicolon
		\Comment{Set up graph for max flow }
		Let $N_1\gets \{i\in N| \lceil \tau_i \lfloor \frac{m}{n}\rfloor \rceil =\lceil (\tau_i\lceil \frac{m}{n}\rceil)\rceil \}$\;
		Let $N_0\gets N\setminus N_1$\;
		For each $i\in N$, set $m_i\gets \min (\lceil \frac{m}{n}\rceil, \lceil \frac{m}{n}\rceil -\lceil \tau_i\lceil\frac{m}{n}\rceil\rceil +1)$\;
		Create a demand flow graph $G=(V,E)$ with associated capacity function $c:E\rightarrow \mathbb{Z}_+$ and demand $d:E\rightarrow \mathbb{Z}_+$ where\\
		$V\gets N\cup M \cup \{s,t\}$ and if $|N_0|<m \mod n$ add vertex $t'$ to $V$\\
		For each $g \in M$ add edge $(s,g)$ to $E$ with capacity $c_{(s,g)}=1$\\
		For each $i\in N$ and each $g\in M$ s.t. $v_i(g)=1$, add edge $(g,i)$ to E with $c_{(g,i)}=1$\\
		For each $i\in N_0$ add  edge $(i,t)$ to $E$ with capacity $c_{(i,t)}=m_i$\;
		\eIf{$m \mod n >|N_0|$}{
			For each $i\in N_1$ add edge $(i,t')$ to $E$ with capacity $c_{(i,t')}\gets m_i$ and  demand $d_{(i,t')}\gets m_i-1$\\
			add edge $(t',t)$ to $E$ with capacity $c_{(t',t)}\gets \sum_{i\in N_1}(m_i-1) + (m \mod n) -|N_0|$\;
			
		}{
			For each $i\in N_1$, add edge $(i,t)$ to $E$ with capacity $c_{(i,t)}=m_i-1$\;
		}
		For each edge $e\in E$ whose demand has not defined, set $d_e\gets 0$.
		Let $f: E\rightarrow \mathbb{Z}_+$ be an integral maximum flow from $s$ to $t$ under $G$ satisfying $d_e\leq f_e\leq c_e$ for each $e\in E$, if it exists else set $f_e=0$ for all $e\in E$\;
		\Comment{Build allocation}
		
		\eIf{for some $x\in N\cup \{t'\}$ there exists $(x,t)\in E$ s.t. $f_{(x,t)}<c_{(x,t)}$}{
			Let $A$ be an arbitrary balanced allocation \label{step:no-instance-return}\;
		}{
			Initialize allocation $A=(A_1,\cdots A_n)$ where $A_i\gets \{g|f_{(g,i)}=1\}$ for each $i\in N$\; 
			\Comment{Set bundle sizes to allocate remaining items}
			\eIf{$m\mod n >|N_0|$}{
				Set $k_i\gets \lceil  \frac{m}{n}\rceil$ for all $i\in N_0$ and each $i\in N_1$ s.t. $f_{(i,t')}=c_{(i,t')}$\;
				Set $k_i\gets \lfloor  \frac{m}{n}\rfloor$ for each $i\in N_1$ s.t. $f_{(i,t')}<c_{(i,t')}$\;
			}{
				Initialize $S\gets \emptyset$\;
				\If{$m\mod n>0$}{
					Pick an arbitrary $S\subseteq N_0$ s.t.$|S|=m\mod n$\;
					
				}
				Set $k_i\gets\lceil \frac{m}{n}\rceil$ for all $i\in S$ 
				and $k_i\gets \lfloor \frac{m}{n}\rfloor$ for all $i\in N\setminus S$ \label{step:pick-within-Nzero}\;
			}
			Allocate items in $M\setminus (\bigcup_{i\in N} A_i)$ arbitrarily but ensuring $|A_i|=k_i$ for all $i\in N$ \label{step:yes-instance-return}\;
			
		}
		\textbf{Return} $A$\;
		\caption{Max \ESW{} over balanced allocations for binary goods.}\label{alg:balESW}
	\end{algorithm}

	This follows from the definition of quantile valuations. Thus, allocation $A$ satisfies $\ESW{}(A)=1$ if and only for each  $i\in N$, $A_i$ contains at least $k_i=\min (|A_i|,|A_i|-\lceil \tau_i |A_i|\rceil +1)$ items of value $1$. Note that the min argument only comes in when $\tau_i=0$. For the case of balanced allocations, we need $|A_i|\in \{\lfloor \frac{m}{n}\rfloor,\lceil \frac{m}{n}\rceil\}$. We use this observation to build our algorithm.
	
	\paragraph{Algorithm Overview.} Given an instance with binary valuations, our algorithm proceeds by first checking if it is possible to allocate each agent enough items of value 1 to build an allocation with an \ESW{} of $1$. If this is possible, we first allocate these items and then allocate the remaining items arbitrarily but respecting the balancedness requirement. In order to check if an \ESW{} of $1$ is possible, we first calculate the number of items of value $1$ an agent will need. Note that this number may be different for a bundle of size $\lfloor \frac{m}{n}\rfloor$ and one of size $\lceil \frac{m}{n}\rceil$. To this end, we separate agents into two sets: $N_1$ being the set of agents for whom $\lceil\tau_i \lfloor\frac{m}{n}\rfloor\rceil$ is the same as $\lceil (\tau_i\lceil \frac{m}{n}\rceil)\rceil $ and $N_0$ being the set for whom these value are different. 
	
	Observe that $N_0$ is non-empty only when $\lceil \frac{m}{n}\rceil\neq\lfloor \frac{m}{n}\rfloor$. As a result, $N_0$ is only non-empty when all agents do not have the same bundle size under a balanced allocation. Specifically, the agents in $N_0$ are those who need the same number of items of value $1$ under either bundle size. Without loss of generality, when deciding which agents are to receive a larger bundle size, we can first choose from the set $N_0$. We choose an agent in $N_1$ to have a larger bundle size only if there are not enough agents in $N_0$, i.e., $m\mod n >|N_0|$. Note that under a balanced allocation $A$  where $\ESW{}(A)=1$, for each $i\in N_1$ s.t. $|A_i|=\lfloor\frac{m}{n}\rfloor$, $A_i$ must contain at most $\lceil\tau_i \lfloor\frac{m}{n}\rfloor\rceil-1$ items of value $0$ for $i$. Analogously, when $|A_i|=\lceil \frac{m}{n}\rceil$ for $i\in N_1$, $A_i$ must contain at most $\lceil\tau_i \lfloor\frac{m}{n}\rfloor\rceil$ items of value $0$ for $i$. Further exactly $\max (0,(m \mod n) - |N_0|)$ agents in $N_1$ must receive a bundle of size $\lceil \frac{m}{n}\rceil$. 
	
	We set up a flow network to check if this is possible in \cref{alg:balESW}. The flow network is defined based on whether $|N_0|<m\mod n$ or not, but essentially it sends a flow from the source $s$ to items, and then from each item to agents who have value $1$ for the item, with capacities of each of these edges being set to $1$. Agents in turn send this flow ahead with exactly one outgoing edge, where the capacity of the edge being the number of items of value $1$ they will need under a balanced allocation with \ESW{} $1$. All agents in $N_0$ have this edge to the sink $t$. Whenever it holds that $|N_0|\geq m \mod n$, all agents in $N_1$ too have an edge to the sink $t$ with capacity being set according to them receiving a bundle of size $\lfloor m/n\rfloor$. 
	
	In the remaining case when $|N_0|< m\mod n$, we add an additional vertex $t'$ and all agents in $N_1$ have their outgoing edge to $t'$ and $t'$ has an outgoing edge to the sink $t$. The capacity of the edges from agents in $N_1$ to $t'$ are set according to these agents receiving a bundle of size $\lfloor m/n\rfloor +1$. Additionally, we place a minimum demand on the edges from agents in $N_1$ to $t'$ of receiving at least as much flow as the number of value $1$ items they would need to get value $1$ from a bundle of size $\lfloor m/n\rfloor$. The capacity of the edge from $t'$ to $t$ is set such that exactly $(m\mod n)-|N_0|$ agents receive a bundle of size $\lfloor m/n\rfloor+1$ while the remaining receive a bundle of size $\lfloor m/n\rfloor$. 
	
	Before proving the correctness of \cref{alg:balESW}, we show two examples of its execution in \cref{ex:one,ex:two}.

	\begin{example}\label{ex:one}
		Consider an instance with binary goods where $n=3$ and $m=7$. For agents $1$ and $2$, we have $v_1(g_1)=v_1(g_2)=v_2(g_1)=v_2(g_2)=1$ and $\tau_1=\tau_2=\sfrac{1}{2}$. In contrast, for agent $3$, we have $v_3(g_3)=v_3(g_4)=1$ and $\tau_3=\sfrac{2}{3}$. All remaining item values are $0$. Observe that on this instance $N_0=\{3\}$ and $N_1=\{1,2\}$ and $m \mod n=1=|N_0|$. Consequently, the flow graph that will be constructed for this instance is as depicted in \cref{fig:example-one}. We do not discuss the demands from the edges as on each edge, the demand is $0$.
		
		In this instance, we can construct the following flow $f$ where $f_{(s,g)}=1$ for each $g\in \{g_1,g_2,g_3,g_4\}$. Further, for each $i\in [3]$, we set $f_{(g_i,i)}=1$ and $f_{(g_4,3)}=1$. This in turn makes $f_{(1,t)}=f_{(2,t)}=1$ and $f_{(3,t)}=2$. Note that this flow is a max flow from $s$ to $t$ as each edge to $t$ is saturated. One resultant allocation from $f$ would be $A=(A_1,A_2,A_3)$ where $A_1=\{g_1,g_5\}$, $A_2=\{g_2,g_6\}$ and $A_3=\{g_3,g_4,g_7\}$ and $v_i(A_i)=1$ for each $i\in N$. 
	\end{example}

	\begin{figure}
		\centering
		
		\tikzset{every picture/.style={line width=0.9pt}}
		
		\begin{tikzpicture}[x=0.7pt,y=0.7pt,yscale=-1,xscale=1]
			
			\draw    (25.7,88.92) -- (73.96,20.95) ;
			\draw [shift={(75.7,18.5)}, rotate = 125.37] [fill={rgb, 255:red, 0; green, 0; blue, 0 }  ][line width=0.08]  [draw opacity=0] (5.36,-2.57) -- (0,0) -- (5.36,2.57) -- cycle    ;
			\draw   (5.9,88.92) .. controls (5.9,83.45) and (10.33,79.02) .. (15.8,79.02) .. controls (21.27,79.02) and (25.7,83.45) .. (25.7,88.92) .. controls (25.7,94.39) and (21.27,98.82) .. (15.8,98.82) .. controls (10.33,98.82) and (5.9,94.39) .. (5.9,88.92) -- cycle ;
			\draw   (75.7,18.5) .. controls (75.7,13.03) and (80.13,8.6) .. (85.6,8.6) .. controls (91.07,8.6) and (95.5,13.03) .. (95.5,18.5) .. controls (95.5,23.97) and (91.07,28.4) .. (85.6,28.4) .. controls (80.13,28.4) and (75.7,23.97) .. (75.7,18.5) -- cycle ;
			\draw   (75.7,63.5) .. controls (75.7,58.03) and (80.13,53.6) .. (85.6,53.6) .. controls (91.07,53.6) and (95.5,58.03) .. (95.5,63.5) .. controls (95.5,68.97) and (91.07,73.4) .. (85.6,73.4) .. controls (80.13,73.4) and (75.7,68.97) .. (75.7,63.5) -- cycle ;
			\draw   (75.7,104.1) .. controls (75.7,98.63) and (80.13,94.2) .. (85.6,94.2) .. controls (91.07,94.2) and (95.5,98.63) .. (95.5,104.1) .. controls (95.5,109.57) and (91.07,114) .. (85.6,114) .. controls (80.13,114) and (75.7,109.57) .. (75.7,104.1) -- cycle ;
			\draw   (75.5,148.9) .. controls (75.5,143.43) and (79.93,139) .. (85.4,139) .. controls (90.87,139) and (95.3,143.43) .. (95.3,148.9) .. controls (95.3,154.37) and (90.87,158.8) .. (85.4,158.8) .. controls (79.93,158.8) and (75.5,154.37) .. (75.5,148.9) -- cycle ;
			\draw   (305.99,87.93) .. controls (305.99,82.46) and (310.42,78.03) .. (315.89,78.03) .. controls (321.35,78.03) and (325.79,82.46) .. (325.79,87.93) .. controls (325.79,93.4) and (321.35,97.83) .. (315.89,97.83) .. controls (310.42,97.83) and (305.99,93.4) .. (305.99,87.93) -- cycle ;
			\draw   (201.13,123.36) .. controls (201.13,117.89) and (205.56,113.46) .. (211.03,113.46) .. controls (216.5,113.46) and (220.93,117.89) .. (220.93,123.36) .. controls (220.93,128.82) and (216.5,133.26) .. (211.03,133.26) .. controls (205.56,133.26) and (201.13,128.82) .. (201.13,123.36) -- cycle ;
			\draw   (200.56,63.64) .. controls (200.56,58.18) and (204.99,53.74) .. (210.46,53.74) .. controls (215.92,53.74) and (220.36,58.18) .. (220.36,63.64) .. controls (220.36,69.11) and (215.92,73.54) .. (210.46,73.54) .. controls (204.99,73.54) and (200.56,69.11) .. (200.56,63.64) -- cycle ;
			\draw   (200.56,18.79) .. controls (200.56,13.32) and (204.99,8.89) .. (210.46,8.89) .. controls (215.92,8.89) and (220.36,13.32) .. (220.36,18.79) .. controls (220.36,24.25) and (215.92,28.69) .. (210.46,28.69) .. controls (204.99,28.69) and (200.56,24.25) .. (200.56,18.79) -- cycle ;
			\draw   (95.5,18.5) -- (197.56,18.78) ;
			\draw [shift={(200.56,18.79)}, rotate = 180.16] [fill={rgb, 255:red, 0; green, 0; blue, 0 }  ][line width=0.08]  [draw opacity=0] (5.36,-2.57) -- (0,0) -- (5.36,2.57) -- cycle    ;
			\draw    (25.7,88.92) -- (73.58,146.59) ;
			\draw [shift={(75.5,148.9)}, rotate = 230.3] [fill={rgb, 255:red, 0; green, 0; blue, 0 }  ][line width=0.08]  [draw opacity=0] (5.36,-2.57) -- (0,0) -- (5.36,2.57) -- cycle    ;
			\draw   (25.7,88.92) -- (72.83,103.23) ;
			\draw [shift={(75.7,104.1)}, rotate = 196.89] [fill={rgb, 255:red, 0; green, 0; blue, 0 }  ][line width=0.08]  [draw opacity=0] (5.36,-2.57) -- (0,0) -- (5.36,2.57) -- cycle    ;
			\draw  (26.37,88.92) -- (74.55,61.79) ;
			\draw [shift={(77.17,60.32)}, rotate = 150.62] [fill={rgb, 255:red, 0; green, 0; blue, 0 }  ][line width=0.08]  [draw opacity=0] (5.36,-2.57) -- (0,0) -- (5.36,2.57) -- cycle    ;
			\draw   (95.5,63.5) -- (197.56,63.64) ;
			\draw [shift={(200.56,63.64)}, rotate = 180.08] [fill={rgb, 255:red, 0; green, 0; blue, 0 }  ][line width=0.08]  [draw opacity=0] (5.36,-2.57) -- (0,0) -- (5.36,2.57) -- cycle    ;
			\draw    (95.5,18.5) -- (197.8,62.46) ;
			\draw [shift={(200.56,63.64)}, rotate = 203.25] [fill={rgb, 255:red, 0; green, 0; blue, 0 }  ][line width=0.08]  [draw opacity=0] (5.36,-2.57) -- (0,0) -- (5.36,2.57) -- cycle    ;
			\draw    (95.5,63.5) -- (197.8,19.96) ;
			\draw [shift={(200.56,18.79)}, rotate = 156.94] [fill={rgb, 255:red, 0; green, 0; blue, 0 }  ][line width=0.08]  [draw opacity=0] (5.36,-2.57) -- (0,0) -- (5.36,2.57) -- cycle    ;
			\draw   (95.3,148.9) -- (199.57,126.45) ;
			\draw [shift={(202.5,125.82)}, rotate = 167.85] [fill={rgb, 255:red, 0; green, 0; blue, 0 }  ][line width=0.08]  [draw opacity=0] (5.36,-2.57) -- (0,0) -- (5.36,2.57) -- cycle    ;
			\draw    (95.5,104.1) -- (199.56,125.22) ;
			\draw [shift={(202.5,125.82)}, rotate = 191.47] [fill={rgb, 255:red, 0; green, 0; blue, 0 }  ][line width=0.08]  [draw opacity=0] (5.36,-2.57) -- (0,0) -- (5.36,2.57) -- cycle    ;
			\draw     (220.93,123.36) -- (303.22,89.08) ;
			\draw [shift={(305.99,87.93)}, rotate = 157.39] [fill={rgb, 255:red, 0; green, 0; blue, 0 }  ][line width=0.08]  [draw opacity=0] (5.36,-2.57) -- (0,0) -- (5.36,2.57) -- cycle    ;
			\draw   (220.36,63.64) -- (303.1,87.11) ;
			\draw [shift={(305.99,87.93)}, rotate = 195.83] [fill={rgb, 255:red, 0; green, 0; blue, 0 }  ][line width=0.08]  [draw opacity=0] (5.36,-2.57) -- (0,0) -- (5.36,2.57) -- cycle    ;
			\draw   (220.36,18.79) -- (303.65,86.04) ;
			\draw [shift={(305.99,87.93)}, rotate = 218.92] [fill={rgb, 255:red, 0; green, 0; blue, 0 }  ][line width=0.08]  [draw opacity=0] (5.36,-2.57) -- (0,0) -- (5.36,2.57) -- cycle    ;
			
			\draw (55,24.57) node [anchor=north west][inner sep=0.75pt]  [font=\footnotesize]  {$1$};
			\draw (55,55) node [anchor=north west][inner sep=0.75pt]  [font=\footnotesize]  {$1$};
			\draw (311.33,84) node [anchor=north west][inner sep=0.75pt]  [font=\footnotesize]  {$t$};
			\draw (9.83,84) node [anchor=north west][inner sep=0.75pt]  [font=\footnotesize]  {$s$};
			\draw (225.33,122) node [anchor=north west][inner sep=0.75pt]  [font=\footnotesize]  {$2$};
			\draw (225,70.9) node [anchor=north west][inner sep=0.75pt]  [font=\footnotesize]  {$1$};
			\draw (225,10.4) node [anchor=north west][inner sep=0.75pt]  [font=\footnotesize]  {$1$};
			\draw (99,150.4) node [anchor=north west][inner sep=0.75pt]  [font=\footnotesize]  {$1$};
			\draw (99,93) node [anchor=north west][inner sep=0.75pt]  [font=\footnotesize]  {$1$};
			\draw (99,65.4) node [anchor=north west][inner sep=0.75pt]  [font=\footnotesize]  {$1$};
			\draw (99.33,45.9) node [anchor=north west][inner sep=0.75pt]  [font=\footnotesize]  {$1$};
			\draw (99,25.4) node [anchor=north west][inner sep=0.75pt]  [font=\footnotesize]  {$1$};
			\draw (99,4.4) node [anchor=north west][inner sep=0.75pt]  [font=\footnotesize]  {$1$};
			\draw (205,12.9) node [anchor=north west][inner sep=0.75pt]  [font=\footnotesize]  {$1$};
			\draw (205,58.58) node [anchor=north west][inner sep=0.75pt]  [font=\footnotesize]  {$2$};
			\draw (205,117.08) node [anchor=north west][inner sep=0.75pt]  [font=\footnotesize]  {$3$};
			\draw (78,13) node [anchor=north west][inner sep=0.75pt]  [font=\footnotesize]  {$g_{1}$};
			\draw (78,145) node [anchor=north west][inner sep=0.75pt]  [font=\footnotesize]  {$g_{4}$};
			\draw (78,99) node [anchor=north west][inner sep=0.75pt]  [font=\footnotesize]  {$g_{3}$};
			\draw (78,58) node [anchor=north west][inner sep=0.75pt]  [font=\footnotesize]  {$g_{2}$};
			\draw (55,115) node [anchor=north west][inner sep=0.75pt]  [font=\footnotesize]  {$1$};
			\draw (55,86) node [anchor=north west][inner sep=0.75pt]  [font=\footnotesize]  {$1$};
			
		\end{tikzpicture}

		\caption{\cref{alg:balESW} on \cref{ex:one}. We omit all items that give value $0$ to all agents.}
		\label{fig:example-one}
	\end{figure}
	
	Note that \cref{ex:one} covers the case where $m \mod n\leq |N_0|$. We now show an example of \cref{alg:balESW} on an instance where $m\mod n > |N_0|$.
	\begin{example}\label{ex:two}
		We now extend the previous example by simply adding one item of no value to any agent. Here, we have $n=3$ and $m=8$. For agent $1$, we have $v_1(g_1)=v_1(g_2)=1$ and $\tau_1=\sfrac{1}{2}$. For agent $2$, we have $v_2(g_1)=v_2(g_2)=1$ and $\tau_2=\sfrac{1}{2}$. Finally for agent $3$, we have $v_3(g_3)=v_3(g_4)=1$ and $\tau_3=\sfrac{2}{3}$. All remaining item values are $0$. Observe that on this instance $N_0=\{3\}$ and $N_1=\{1,2\}$ and $m \mod n=2>|N_0|$. Consequently, the flow graph that will be constructed for this instance is as depicted in \cref{fig:example-two}. 
		
		In this instance, no flow can saturate the edge $(t',t)$ as the maximum flow that can come to $t'$ is $2$. Analogously, in the instance, at least one of agents $1$ and $2$ needs to receive two items of value one, while the other receives one item of value one. Clearly, this is not possible as they only get value one from $g_1$ and $g_2$. Consequently, all allocations in this instance have \ESW{} of zero. 
	\end{example}
	
	\begin{figure}
		\centering

		\tikzset{every picture/.style={line width=0.9pt}} 
		
		\begin{tikzpicture}[x=0.7pt,y=0.7pt,yscale=-1,xscale=1]
			
			\draw   (25.7,88.92) -- (73.96,20.95) ;
			\draw [shift={(75.7,18.5)}, rotate = 125.37] [fill={rgb, 255:red, 0; green, 0; blue, 0 }  ][line width=0.08]  [draw opacity=0] (5.36,-2.57) -- (0,0) -- (5.36,2.57) -- cycle    ;
			\draw   (5.9,88.92) .. controls (5.9,83.45) and (10.33,79.02) .. (15.8,79.02) .. controls (21.27,79.02) and (25.7,83.45) .. (25.7,88.92) .. controls (25.7,94.39) and (21.27,98.82) .. (15.8,98.82) .. controls (10.33,98.82) and (5.9,94.39) .. (5.9,88.92) -- cycle ;
			\draw   (75.7,18.5) .. controls (75.7,13.03) and (80.13,8.6) .. (85.6,8.6) .. controls (91.07,8.6) and (95.5,13.03) .. (95.5,18.5) .. controls (95.5,23.97) and (91.07,28.4) .. (85.6,28.4) .. controls (80.13,28.4) and (75.7,23.97) .. (75.7,18.5) -- cycle ;
			\draw   (75.7,63.5) .. controls (75.7,58.03) and (80.13,53.6) .. (85.6,53.6) .. controls (91.07,53.6) and (95.5,58.03) .. (95.5,63.5) .. controls (95.5,68.97) and (91.07,73.4) .. (85.6,73.4) .. controls (80.13,73.4) and (75.7,68.97) .. (75.7,63.5) -- cycle ;
			\draw   (75.7,104.1) .. controls (75.7,98.63) and (80.13,94.2) .. (85.6,94.2) .. controls (91.07,94.2) and (95.5,98.63) .. (95.5,104.1) .. controls (95.5,109.57) and (91.07,114) .. (85.6,114) .. controls (80.13,114) and (75.7,109.57) .. (75.7,104.1) -- cycle ;
			\draw   (75.5,148.9) .. controls (75.5,143.43) and (79.93,139) .. (85.4,139) .. controls (90.87,139) and (95.3,143.43) .. (95.3,148.9) .. controls (95.3,154.37) and (90.87,158.8) .. (85.4,158.8) .. controls (79.93,158.8) and (75.5,154.37) .. (75.5,148.9) -- cycle ;
			\draw   (272.99,38.43) .. controls (272.99,32.96) and (277.42,28.53) .. (282.89,28.53) .. controls (288.35,28.53) and (292.79,32.96) .. (292.79,38.43) .. controls (292.79,43.9) and (288.35,48.33) .. (282.89,48.33) .. controls (277.42,48.33) and (272.99,43.9) .. (272.99,38.43) -- cycle ;
			\draw   (201.13,123.36) .. controls (201.13,117.89) and (205.56,113.46) .. (211.03,113.46) .. controls (216.5,113.46) and (220.93,117.89) .. (220.93,123.36) .. controls (220.93,128.82) and (216.5,133.26) .. (211.03,133.26) .. controls (205.56,133.26) and (201.13,128.82) .. (201.13,123.36) -- cycle ;
			\draw   (200.56,63.64) .. controls (200.56,58.18) and (204.99,53.74) .. (210.46,53.74) .. controls (215.92,53.74) and (220.36,58.18) .. (220.36,63.64) .. controls (220.36,69.11) and (215.92,73.54) .. (210.46,73.54) .. controls (204.99,73.54) and (200.56,69.11) .. (200.56,63.64) -- cycle ;
			\draw   (200.56,18.79) .. controls (200.56,13.32) and (204.99,8.89) .. (210.46,8.89) .. controls (215.92,8.89) and (220.36,13.32) .. (220.36,18.79) .. controls (220.36,24.25) and (215.92,28.69) .. (210.46,28.69) .. controls (204.99,28.69) and (200.56,24.25) .. (200.56,18.79) -- cycle ;
			\draw    (95.5,18.5) -- (197.56,18.78) ;
			\draw [shift={(200.56,18.79)}, rotate = 180.16] [fill={rgb, 255:red, 0; green, 0; blue, 0 }  ][line width=0.08]  [draw opacity=0] (5.36,-2.57) -- (0,0) -- (5.36,2.57) -- cycle    ;
			\draw   (25.7,88.92) -- (73.58,146.59) ;
			\draw [shift={(75.5,148.9)}, rotate = 230.3] [fill={rgb, 255:red, 0; green, 0; blue, 0 }  ][line width=0.08]  [draw opacity=0] (5.36,-2.57) -- (0,0) -- (5.36,2.57) -- cycle    ;
			\draw  (25.7,88.92) -- (72.83,103.23) ;
			\draw [shift={(75.7,104.1)}, rotate = 196.89] [fill={rgb, 255:red, 0; green, 0; blue, 0 }  ][line width=0.08]  [draw opacity=0] (5.36,-2.57) -- (0,0) -- (5.36,2.57) -- cycle    ;
			\draw   (26.37,88.92) -- (74.55,61.79) ;
			\draw [shift={(77.17,60.32)}, rotate = 150.62] [fill={rgb, 255:red, 0; green, 0; blue, 0 }  ][line width=0.08]  [draw opacity=0] (5.36,-2.57) -- (0,0) -- (5.36,2.57) -- cycle    ;
			\draw     (95.5,63.5) -- (197.56,63.64) ;
			\draw [shift={(200.56,63.64)}, rotate = 180.08] [fill={rgb, 255:red, 0; green, 0; blue, 0 }  ][line width=0.08]  [draw opacity=0] (5.36,-2.57) -- (0,0) -- (5.36,2.57) -- cycle    ;
			\draw     (95.5,18.5) -- (197.8,62.46) ;
			\draw [shift={(200.56,63.64)}, rotate = 203.25] [fill={rgb, 255:red, 0; green, 0; blue, 0 }  ][line width=0.08]  [draw opacity=0] (5.36,-2.57) -- (0,0) -- (5.36,2.57) -- cycle    ;
			\draw    (95.5,63.5) -- (197.8,19.96) ;
			\draw [shift={(200.56,18.79)}, rotate = 156.94] [fill={rgb, 255:red, 0; green, 0; blue, 0 }  ][line width=0.08]  [draw opacity=0] (5.36,-2.57) -- (0,0) -- (5.36,2.57) -- cycle    ;
			\draw    (95.3,148.9) -- (199.57,126.45) ;
			\draw [shift={(202.5,125.82)}, rotate = 167.85] [fill={rgb, 255:red, 0; green, 0; blue, 0 }  ][line width=0.08]  [draw opacity=0] (5.36,-2.57) -- (0,0) -- (5.36,2.57) -- cycle    ;
			\draw   (95.5,104.1) -- (199.56,125.22) ;
			\draw [shift={(202.5,125.82)}, rotate = 191.47] [fill={rgb, 255:red, 0; green, 0; blue, 0 }  ][line width=0.08]  [draw opacity=0] (5.36,-2.57) -- (0,0) -- (5.36,2.57) -- cycle    ;
			\draw    (220.93,123.36) -- (344.18,76.49) ;
			\draw [shift={(346.99,75.43)}, rotate = 159.18] [fill={rgb, 255:red, 0; green, 0; blue, 0 }  ][line width=0.08]  [draw opacity=0] (5.36,-2.57) -- (0,0) -- (5.36,2.57) -- cycle    ;
			\draw    (220.36,63.64) -- (270.28,39.72) ;
			\draw [shift={(272.99,38.43)}, rotate = 154.4] [fill={rgb, 255:red, 0; green, 0; blue, 0 }  ][line width=0.08]  [draw opacity=0] (5.36,-2.57) -- (0,0) -- (5.36,2.57) -- cycle    ;
			\draw    (220.36,18.79) -- (270.18,37.38) ;
			\draw [shift={(272.99,38.43)}, rotate = 200.47] [fill={rgb, 255:red, 0; green, 0; blue, 0 }  ][line width=0.08]  [draw opacity=0] (5.36,-2.57) -- (0,0) -- (5.36,2.57) -- cycle    ;
			\draw   (346.99,75.43) .. controls (346.99,69.96) and (351.42,65.53) .. (356.89,65.53) .. controls (362.35,65.53) and (366.79,69.96) .. (366.79,75.43) .. controls (366.79,80.9) and (362.35,85.33) .. (356.89,85.33) .. controls (351.42,85.33) and (346.99,80.9) .. (346.99,75.43) -- cycle ;
			\draw    (292.79,38.43) -- (344.51,73.74) ;
			\draw [shift={(346.99,75.43)}, rotate = 214.32] [fill={rgb, 255:red, 0; green, 0; blue, 0 }  ][line width=0.08]  [draw opacity=0] (5.36,-2.57) -- (0,0) -- (5.36,2.57) -- cycle    ;
			
			\draw (55,24.57) node [anchor=north west][inner sep=0.75pt]  [font=\footnotesize]  {$1$};
			\draw (55,52.9) node [anchor=north west][inner sep=0.75pt]  [font=\footnotesize]  {$1$};
			\draw (278.33,32.4) node [anchor=north west][inner sep=0.75pt]  [font=\footnotesize]  {$t'$};
			\draw (9.83,84) node [anchor=north west][inner sep=0.75pt]  [font=\footnotesize]  {$s$};
			\draw (221,109) node [anchor=north west][inner sep=0.75pt]  [font=\footnotesize]  {$2$};
			\draw (221,63) node [anchor=north west][inner sep=0.75pt]  [font=\footnotesize]  {$2$};
			\draw (221,8.9) node [anchor=north west][inner sep=0.75pt]  [font=\footnotesize]  {$2$};
			\draw (95,150.4) node [anchor=north west][inner sep=0.75pt]  [font=\footnotesize]  {$1$};
			\draw (99,91.4) node [anchor=north west][inner sep=0.75pt]  [font=\footnotesize]  {$1$};
			\draw (99,65.4) node [anchor=north west][inner sep=0.75pt]  [font=\footnotesize]  {$1$};
			\draw (99,45.9) node [anchor=north west][inner sep=0.75pt]  [font=\footnotesize]  {$1$};
			\draw (99,25.4) node [anchor=north west][inner sep=0.75pt]  [font=\footnotesize]  {$1$};
			\draw (99,4.4) node [anchor=north west][inner sep=0.75pt]  [font=\footnotesize]  {$1$};
			\draw (205.83,12.9) node [anchor=north west][inner sep=0.75pt]  [font=\footnotesize]  {$1$};
			\draw (204.83,58.58) node [anchor=north west][inner sep=0.75pt]  [font=\footnotesize]  {$2$};
			\draw (205.33,117.08) node [anchor=north west][inner sep=0.75pt]  [font=\footnotesize]  {$3$};
			\draw (79.83,12) node [anchor=north west][inner sep=0.75pt]  [font=\footnotesize]  {$g_{1}$};
			\draw (78.33,145) node [anchor=north west][inner sep=0.75pt]  [font=\footnotesize]  {$g_{4}$};
			\draw (78.83,99) node [anchor=north west][inner sep=0.75pt]  [font=\footnotesize]  {$g_{3}$};
			\draw (78.33,59) node [anchor=north west][inner sep=0.75pt]  [font=\footnotesize]  {$g_{2}$};
			\draw (55,116) node [anchor=north west][inner sep=0.75pt]  [font=\footnotesize]  {$1$};
			\draw (55,87) node [anchor=north west][inner sep=0.75pt]  [font=\footnotesize]  {$1$};
			\draw (352.33,70) node [anchor=north west][inner sep=0.75pt]  [font=\footnotesize]  {$t$};
			\draw (315.33,41.9) node [anchor=north west][inner sep=0.75pt]  [font=\footnotesize]  {$3$};
			\draw (245,14.4) node [anchor=north west][inner sep=0.75pt]  [font=\footnotesize]  {$d=1$};
			\draw (245,50.9) node [anchor=north west][inner sep=0.75pt]  [font=\footnotesize]  {$d=1$};

		\end{tikzpicture}
		
		\caption{\cref{alg:balESW} on \cref{ex:two}. We omit all items that give value $0$ to all agents. The demand on the edge is denoted by $d=x$ and we omit mentioning the demand on edges with demand $0$.}
		\label{fig:example-two}
	\end{figure}

	\begin{restatable}{proposition}{ESWbalanced}\label{prop:balancedESW}
		Given $I=\langle N,M,v,\tau\rangle$ with binary goods and heterogeneous quantiles, \cref{alg:balESW} finds a max \ESW{} balanced allocation in polynomial time.  
	\end{restatable}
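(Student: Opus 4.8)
The plan is to first observe that, because all item values are binary, every agent's value for any bundle lies in $\{0,1\}$, so $\ESW{}(A)\in\{0,1\}$ for every balanced allocation $A$. Maximizing \ESW{} therefore reduces to deciding whether some balanced allocation attains $\ESW{}=1$; if none does, then every balanced allocation has $\ESW{}=0$, and the arbitrary balanced allocation returned in the \texttt{else} branch is already optimal. Thus the crux of the argument is to characterize exactly when a balanced allocation giving every agent value $1$ exists, and to show that \cref{alg:balESW} detects this condition correctly.

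For the characterization I would lean on \cref{obs:basicBinVal}: for $\tau_i>0$, a size-$k$ bundle $B$ satisfies $v_i(B)=1$ iff it contains at most $\lceil \tau_i k\rceil-1$ zero-valued items, equivalently at least $k_i=k-\lceil \tau_i k\rceil+1$ items with $v_i(g)=1$; the $\min$ in the definition of $k_i$ takes care of the boundary case $\tau_i=0$, where value $1$ forces all $k$ items to be valued $1$, i.e.\ $k_i=k$. Hence a balanced allocation with $\ESW{}=1$ exists iff one can assign to each agent $i$ a set $S_i$ of exactly $k_i$ pairwise-disjoint value-$1$ items and then pad each $S_i$ up to size $k$ with leftover items. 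The padding is always feasible and never lowers a value: the number of leftover items is $m-\sum_i k_i=kn-\sum_i k_i=\sum_i(k-k_i)$, which matches the total number of padding slots, and after padding each $A_i$ still contains at least $k_i$ ones among its $k$ items, so $v_i(A_i)=1$.

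The remaining step is to verify that the bipartite graph $G$ of \cref{alg:balESW} encodes exactly this disjoint-assignment condition. Each agent $i$ contributes $k_i$ copies $y^i_1,\dots,y^i_{k_i}$ on one side, each item $g$ a vertex $x_g$ on the other, and edges record $v_i(g)=1$. A matching saturating $Y$ (that is, $|\mu|=|Y|=\sum_i k_i$) is precisely a family of disjoint value-$1$ sets $S_i$ with $|S_i|=k_i$, giving both directions: from such a matching we build an allocation of $\ESW{}$ $1$ by the padding argument, and conversely any allocation of $\ESW{}$ $1$ yields, by selecting $k_i$ value-$1$ items from each $A_i$ (which exist by the characterization and are disjoint since $A$ is a partition), a $Y$-saturating matching. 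Finally, $G$ has $O(m)$ vertices and $O(mn)$ edges, so a maximum-cardinality matching (e.g.\ via Hopcroft--Karp) together with the subsequent padding runs in polynomial time. I expect the only delicate points to be the bookkeeping that the padding counts line up exactly and the separate treatment of the $\tau_i=0$ boundary, rather than the matching equivalence itself.
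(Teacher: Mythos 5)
Your proof is correct and follows essentially the same route as the paper: use \cref{obs:basicBinVal} to reduce the $\ESW{}=1$ question to finding disjoint sets of $k_i$ value-$1$ items per agent, encode this as a bipartite matching saturating the $\sum_i k_i$ agent-copies, argue both directions, and pad to size $k$. Your treatment is in fact slightly more explicit than the paper's on the padding bookkeeping (that leftover items exactly fill the remaining slots and cannot destroy value), but this is a refinement of the same argument, not a different one.
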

	
	\begin{proof} Before we show that \cref{alg:balESW} maximizes \ESW{}, we first establish that it always returns a balanced allocation. Recall that under a balanced allocation $m \mod n$ agents must receive $\lceil \frac{m}{n}\rceil$ items while the remaining agents receive $\lfloor \frac{m}{n}\rfloor$ items.\\

		\noindent \textbf{Balancedness.} We first prove that \cref{alg:balESW} does indeed return a balanced allocation. Given $I$, let $G$ and $c$ be the flow graph and its capacities, as defined in \cref{alg:balESW} and $f$ be an integral maximum flow from $s$ to $t$ on $G$. If some an edge incident on $t$ has less flow through it than its capacity, then clearly the allocation returned must be balanced as picked in Step \ref{step:no-instance-return}. '
		
		Suppose not, that is,  each incoming edge to $t$ is saturated. Observe that the allocation returned by \cref{alg:balESW} will have $|A_i|=k_i$ for each $i\in N$, as chosen in  step \ref{step:yes-instance-return}. Consequently, it is sufficient to show that exactly $m \mod n$ agents have $k_i=\lfloor  \frac{m}{n}\rfloor+1$ and the remaining have $k_i=\lfloor  \frac{m}{n}\rfloor$, to ensure that the allocation is balanced. Whenever $m \mod N\leq |N_0|$, we ensure this is precisely the case in step \ref{step:pick-within-Nzero}, where we pick an arbitrary set of agents within $N_0$ of size $m\mod n$ for whom we set $k_i=\lfloor  \frac{m}{n}\rfloor+1$, while all other agents in $N_0$ and all agents in $N_1$ satisfy $k_i=\lfloor  \frac{m}{n}\rfloor$.
		
		Finally, when $m\mod n> |N_0|$, it must hold that $|N_1|>0$ and that $t$ only has one incoming edge, which is from $t'$. Further, as this edge is saturated, $f_{(t',t)}=c_{(t',t)}=\sum_{i\in N_1}(m_i-1) + (m \mod n) -|N_0|$. Note that $t'$ only has incoming edges from agents in $N_1$ and for all $i\in N_1$, $c_{(i,t)}=m_i$. These agents in $N_1$ each only have one outgoing edge, which is to $t'$ and has a demand of $m_i-1$.  Consequently, it must hold that $f_{(i,t)}=m_i=c_{(i,t)}$ for exactly $(m \mod n)-|N_0|$ agents in $N_1$ and for all remaining agents it must hold that $f_{(i,t)}=m_i-1<c_{(i,t)}$. As a result, the number of agents for whom $k_i=\lceil \frac{m}{n}\rceil$ is $|N_0|+(m \mod n)-|N_0|=m \mod n$, while the remaining agents all satisfy $k_i=\lfloor \frac{m}{n}\rfloor$.\\
		
		\noindent \textbf{Maximum ESW.} We now prove that \cref{alg:balESW} returns a maximum \ESW{} allocation. Recall that the maximum \ESW{} possible under binary valuations is $1$ and in order for an allocation $A$ to satisfy $\ESW{}(A)=1$, for each agent $i\in N$, $A_i$ must contain at least $\min (|A_i|, |A_i|-\lceil \tau_i|A_i|\rceil +1)$ items of value $1$. For each $i\in N$, let $m_i=\min (\lceil \frac{m}{n}\rceil, \lceil \frac{m}{n}\rceil -\lceil \tau_i\lceil\frac{m}{n}\rceil\rceil +1)$, as defined in \cref{alg:balESW}. That is, $m_i$ denotes the number of items of value $1$ that $i$ needs in a bundle of size $\lceil \frac{m}{n} \rceil$ to receive a value of $1$. 
		
		By definition of the  $N_0$, whenever $\lceil \frac{m}{n}\rceil\neq \lfloor \frac{m}{n}\rfloor$, for any bundle $B$ s.t. $|B|=\lfloor \frac{m}{n}\rfloor$, $v_i(B)=1$ if and only if $B$ contains at least $m_i$ items of value $1$ for every $i\in N_0$. Analogously, by definition of $N_1$, whenever $\lceil \frac{m}{n}\rceil\neq \lfloor \frac{m}{n}\rfloor$, for any bundle $B$ s.t. $|B|=\lfloor \frac{m}{n}\rfloor$, $v_i(B)=1$ if and only if $B$ contains at least $m_i-1$ items of value $1$ for every $i\in N_1$. Consequently, for a balanced allocation to satisfy $\ESW{}(A)=1$, for every agent $i\in N_0$, $A_i$ must contain $m_i$ items of value $1$ and for every $i\in N_1$, $A_i$ must contain $m_i - |A_i|+\lceil \frac{m}{n}\rceil$ items.

		Recall that \cref{alg:balESW} constructs a flow graph based on whether the size of $N_0$ is less than $m \mod n$ or not. Further, whenever the max flow on this graph does not saturate every incoming edge to $t$, the algorithm returns an arbitrary allocation. Consequently, we now show that an allocation with \ESW{} $1$ exists if and only if the max flow of the constructed graph does indeed saturate each incoming edge to $t$. As the structure of the flow graph depends on if the value of $|N_0|$ is less than  $m \mod n$, we argue the two cases separately.
		\vspace{2mm}
		
		\noindent \underline{Case 1: $m \mod n \leq |N_0|$.} Note that in this case, $t$ has an incoming edge from every agent $i\in N$, where $c_{(i,t)}=m_i$ for $i\in N_0$ and $c_{(i,t)}=m_i-1$ for $i\in N_1$. Consequently, it suffices to show that an allocation with \ESW{} of $1$ exists if and only if there exists a flow $f$ on the graph $G$ as defined in \cref{alg:balESW} s.t. $f_{(i,t)}=c_{(i,t)}$ for each $i\in N$. Note that in this case, the demand on every edge is $0$, so every valid flow will satisfy the demand. 
		
		Let there exist an allocation $A$ s.t. $\ESW{}(A)=1$. In order to  construct a maximum flow on this graph,  we first define bundles $B_1,\cdots, B_n$ as follows: for each $i\in N_0$, choose $B_i\subseteq A_i$ s.t. $|B_i|=m_i$ and $v_i(g)=1$ for each $g\in B_i$. Similarly, for each $i\in N_1$ choose $B_i\subseteq A_i$ s.t. $|B_i|=m_i-1$ and $v_i(g)=1$ for each $g\in B_i$. Note that such bundles must exist for each $i$ as $\ESW{}(A)=1$. Further, depending on $A_i$, there may be multiple ways to select $B_i$, and we are indifferent across all choices which have the necessary size while only containing items of value $1$ to $i$. As $B_i\subseteq A_i$ for each $i\in N$, the sets $B_1,\cdots, B_n$ must be pairwise disjoint. 
		
		Define integral flow $f:E\rightarrow \mathbb{Z}_+$ as follows:
		\begin{itemize}
			\item For each $g\in M\setminus (\cup_{i\in N}B_i)$ set $f_{(s,g)}=0$ and for all $(g,i)\in E$, set $f_{(g,i)}=0$.
			\item For each $g$ s.t. there exist $i$ for which $g\in B_i$, set $f_{(s,g)}=1$ and $f_{(g,i)}=1$.
			\item For each $i\in N$, set $f_{(i,t)}=c_{(i,t)}$
		\end{itemize}
		
		Note that, by choice of $B_i$s, it must hold that for each $i\in N$, $\sum_{g:v_i(g)=1}f_{(g,i)}=|B_i|=c_{(i,t)}$. Consequently, $f$ is a valid flow and saturates every incoming edge to $t$.
		
		Conversely, let a flow $f$ exist s.t. for each $i\in N$ $\sum_{g:v_i(g)=1}f_{(g,i)}=c_{(i,t)}$. Without loss of generality, let $f$ be an integral flow. Further, define set $B_i=\{g|f_{(g,i)=1}$. As the each item $g$ can only send out at most $1$ unit of flow, the sets $B_i$ for $i\in N$ must be pairwise disjoint. Further, for each $i\in N_0$, $|B_i|=c_{(i,t)}=m_i$ and for each $i\in N_1$, $|B_i|=c_{(i,t)}=m_i-1$.  Let $S$ and $A$ be as chosen in \cref{alg:balESW}. Observe that $B_i\subseteq A_i$, for each $i$, and as $|B_i|=m_i$ for each $i\in N_0$ and $|B_i|=m_i-1$ for each $i\in N_1$, it must hold that $\ESW{}(A)=1$.
		\vspace{2mm}
		
		\noindent \underline{Case 2: $m \mod n > |N_0|$.} In this case, we have that the flow graph constructed contains vertex $t'$ and each agent in $N_1$ has only one outgoing edge to $t'$ with $c_{(i,t')}=m_i$ and $d_{(i,t')}=m_i -1$ for each $i\in N_1$. In turn, $t'$ has one outgoing to $t$ with $c_{(t',t)}=\sum_{i\in N_1} (m_i-1)+(m\mod n) -|N_0|$. As a result, under any valid flow $f$, at most $(m \mod n)-|N_0|$ agents in $N_1$ can satisfy $f_{(i,t')}=m_i$. 
		
		Let there exist an allocation $A$ s.t. $\ESW{}(A)=1$. Recall that whenever $m \mod n=0$, $N_0$ must be empty. It is straightforward to see that least $(m \mod n)- |N_0|$ agents in $N_1$ satisfy $|A_i|=\lfloor \frac{m}{n}\rfloor +1$. If $m\mod n>0$, select $S\subseteq N_1$ s.t. $|S|=(m \mod n)- |N_0|$ and $|A_i|\lfloor \frac{m}{n}\rfloor +1$. Else if $m \mod n =0$, let $S=\emptyset$. As in the previous case,  define bundles $B_1,\cdots, B_n$ as follows: for each $i\in N_0\cup S$, choose $B_i\subseteq A_i$ s.t. $|B_i|=m_i$ and $v_i(g)=1$ for each $g\in B_i$. Similarly, for each $i\in N_1\setminus S$ choose $B_i\subseteq A_i$ s.t. $|B_i|=m_i-1$ and $v_i(g)=1$ for each $g\in B_i$. As before, such bundles must exist for each $i$ as $\ESW{}(A)=1$. As $B_i\subseteq A_i$ for each $i\in N$, the sets $B_1,\cdots, B_n$ must be pairwise disjoint. 
		
		Define integral flow $f:E\rightarrow \mathbb{Z}_+$ as follows:
		\begin{itemize}
			\item For each $g\in M\setminus (\cup_{i\in N}B_i)$ set $f_{(s,g)}=0$ and for all $(g,i)\in E$, set $f_{(g,i)}=0$.
			\item For each $g$ s.t. there exist $i$ for which $g\in B_i$, set $f_{(s,g)}=1$ and $f_{(g,i)}=1$.
			\item For each $i\in N_0\cup S$, set $f_{(i,t')}=m_i$.
			\item For each $i\in N_1 \setminus S$, set $f_{(i,t')}=m_i-1$ and set $f_{(t',t)}=c_{(t',t)}$.
		\end{itemize}
		
		Note that, by choice of $B_i$s, it must hold that for each $i\in N$, $\sum_{g:v_i(g)=1}f_{(g,i)}=|B_i|$. For each $i\in N_0\cup S$, we have that $|B_i|=m_i$ and for each $i\in N_1\setminus S$, $|B_i|=m_i-1$. Consequently, $f$ is a valid flow that satisfies $d_e\leq f_e\le c_e$ for each $e\in E$ and saturates every incoming edge to $t$.
		
		Conversely, let a flow $f$ exist s.t. for each $e\in E$, $d_e\leq f_e \leq c_e$ and every incoming edge to $t$ is saturated.  Let $S$ and $A$ be as chosen in \cref{alg:balESW}. Further, define the sets $B_i=\{g|f_{(g,i)=1}$ for each $i\in N$. As the each item $g$ can only send out at most $1$ unit of flow, the sets $B_1,\cdots, B_n$ must be pairwise disjoint. Further, for each $i\in N_0\cup S$, $|B_i|=c_{(i,t)}=m_i$ and for each $i\in N_1\setminus S$, $|B_i|=c_{(i,t)}=m_i-1$.  Observe that $B_i\subseteq A_i$, each $i\in N_0\cup S$ receives at least $m_i$ items of value $1$ and each $i\in N_1\setminus S$ receives at least $m_i-1$ items of value $1$. Consequently, it must hold that $\ESW{}(A)=1$.\\
		
		\noindent \textbf{Running Time.} The flow graph constructed has at most $n+m+3$ vertices and at most $nm+m+n+1$ edges. A maximum flow can be found in polynomial time on this graph (See for example section 7.7 of \cite{kleinberg2006algorithm} on computing maximum flows with demands on edges) and the allocation can accordingly be built in polynomial time. As a result, \cref{alg:balESW} runs in time polynomial in the given instance. 
	\end{proof}
	
	Combining \cref{lem:binRednESW} and \cref{prop:balancedESW}, we get the following theorem. 
	\begin{theorem}\label{thm:balESW}
		Given $I=\langle N,M,v,\tau\rangle$ with heterogeneous quantiles  where $m=kn$, a balanced allocation with maximum \ESW{} can be found in polynomial time. 
	\end{theorem}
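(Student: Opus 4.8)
The plan is to combine the two results already established in the excerpt: the binary reduction of \cref{lem:binRednESW} and the polynomial-time algorithm for binary goods of \cref{prop:balancedESW}. The theorem asks for a maximum \ESW{} balanced allocation under arbitrary (heterogeneous quantile) non-negative valuations, and both ingredients are tailored to exactly this composition, so the proof is essentially a bookkeeping argument about how to stitch them together efficiently.

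First I would observe that the only values of $\nu$ worth testing as candidate thresholds for the optimal \ESW{} are the actual item values appearing in the instance, namely the numbers in $\{v_i(g) : i\in N,\, g\in M\}$. This is because the \ESW{} of any allocation equals some $v_i(A_i)$, which in turn equals the value of a single item $g\in A_i$ under agent $i$'s quantile; hence the optimal \ESW{} is itself one of these at most $mn$ distinct values. So the strategy is: enumerate each candidate threshold $\nu$ among these $O(mn)$ values, apply \cref{lem:binRednESW} to build the binary instance $I'$ where $v_i'(g)=1$ iff $v_i(g)\geq \nu$, run \cref{alg:balESW} on $I'$ (invoking \cref{prop:balancedESW} for its correctness and polynomial running time), and record whether a balanced allocation with \ESW{} equal to $1$ under $I'$ exists. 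By \cref{lem:binRednESW}, this happens iff $I$ admits a balanced allocation with \ESW{} at least $\nu$.

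Next I would select the largest threshold $\nu^*$ for which \cref{alg:balESW} succeeds, and return the corresponding balanced allocation. Correctness follows directly: \cref{lem:binRednESW} guarantees that the returned allocation achieves \ESW{} at least $\nu^*$ in the original instance, and the maximality of $\nu^*$ over the candidate set—together with the observation that the optimal value lies in that set—ensures no balanced allocation can do strictly better. For the running time, there are $O(mn)$ thresholds, and each invocation of \cref{alg:balESW} runs in polynomial time by \cref{prop:balancedESW}, so the total is polynomial. (One could also binary-search over the sorted candidate values to reduce the number of calls to $O(\log(mn))$, but this is not needed for the polynomial-time claim.)

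I do not expect a genuine obstacle here, since the heavy lifting is done by the two earlier results; the restriction of $\Pi'$ to balanced allocations $\overline{\Pi}(n,M)$ is already handled inside \cref{alg:balESW} through the vertex counts $k_i$ and the size padding. The one point requiring mild care is justifying that it suffices to test thresholds from the finite set of item values rather than a continuum of real $\nu$, which is precisely the discretization observation above; making that explicit is what turns the reduction of \cref{lem:binRednESW} into a genuine polynomial-time procedure. The remark following \cref{lem:binRednESW} in the excerpt already asserts that at most $mn$ calls to such an algorithm suffice, so this theorem is really the formal statement of that remark specialized to $\Pi' = \overline{\Pi}(n,M)$.
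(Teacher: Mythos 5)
Your proposal is correct and follows exactly the paper's route: the paper proves \cref{thm:balESW} as an immediate corollary of \cref{lem:binRednESW} and \cref{prop:balancedESW}, with the remark after the lemma sketching the same at-most-$mn$-threshold enumeration you spell out. Your explicit justification that the optimal \ESW{} must be one of the item values $v_i(g)$ (since a quantile value of a nonempty bundle is always some item's value) is precisely the discretization step the paper leaves implicit, so this is a faithful, slightly more detailed rendering of the paper's own argument.
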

	

	\section{Unconstrained Allocations}\label{sec:unbal}
	
	Typical work on allocating indivisible items does not require allocations to be balanced. Further, there are many practical settings, such as online classrooms in distance education, where allocations need not be balanced. Thus, for completeness, we now turn to unconstrained allocation. We find that it is possible to give significantly better guarantees on \USW{}, compared to the balanced setting. In contrast, maximum \ESW{} now becomes intractable for a large subset of the quantiles. Omitted proofs are deferred to \cref{app:unbalanced}.

	
	\subsection{Utilitarian Social Welfare}
	
	We find that while maximizing \USW{} still remains intractable, we are able to circumvent hardness of approximation and achieve a near exact approximation algorithm that even works for heterogeneous quantiles. 
	
	
	\subsubsection{Intractability.}
	
	We first show that for non-identical agents with quantile $\tau=0$ for all agents, the problem of maximizing social welfare proves to be NP-hard for goods.  We give a reduction from the \textup{\textsc{Exact3Cover}} (X3C) problem, which is known to be NP-hard \citep{garey1979computers}.
	\begin{restatable}{theorem}{USWreductionNP}\label{thm:NP-USW}
		Given instance $I=\langle N, M, v,\tau \rangle$  where $\max_{i\in N} \tau_i<1$, finding a maximum \USW{} allocation is NP-complete. 
	\end{restatable}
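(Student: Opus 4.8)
The plan is to show membership in NP (a proposed allocation can be checked in polynomial time, since computing each $v_i(A_i)$ under $\tau = 0$ merely requires taking a minimum) and then to prove hardness by a reduction from \textsc{Exact3Cover}. The feature of $\tau_i = 0$ that I would exploit is that $v_i(S) = \min_{g \in S} v_i(g)$, so an agent derives positive value only when \emph{every} item in her bundle is one she likes; a single disliked item drags her value down to its level. Accordingly, I would create one agent for each set $C_j$ of the \textsc{Exact3Cover} instance and one item $g_u$ for each ground element $u$, with agent $j$ liking exactly the three items corresponding to the elements of $C_j$. With heterogeneous (here binary) values, agent $j$ then attains her top value precisely when her bundle is a nonempty subset of the triple $C_j$. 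The intended correspondence, matching the $\left(1 + \frac{1}{n-1}\right)$ guarantee of \cref{thm:scapegoat} (which always satisfies all but one agent), is that an exact cover exists if and only if there is an allocation in which \emph{all} agents simultaneously reach their maximum value.

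First I would establish the easy direction: given an exact cover $\mathcal{C}'$, allocate to each agent $j$ with $C_j \in \mathcal{C}'$ the three items of its triple, which yields her full value, and distribute the remaining items among the other agents; since $\mathcal{C}'$ partitions the ground set, this uses every element-item exactly once. The reverse direction is where the construction must do real work: I would augment the basic gadget with auxiliary (dummy) items and/or agents, chosen so that the item and agent counts force every satisfied agent to consume an entire triple and force the satisfied triples to be pairwise disjoint. The aim is an instance in which the target welfare is attainable only by selecting disjoint complete triples that cover the ground set, i.e.\ by an exact cover, after which the cover can be read directly off the high-value agents.

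The hard part will be enforcing this atomic, partition-like structure against the grain of the $\tau = 0$ valuation. Because the minimum is non-monotone and weakly decreasing in the bundle, no agent ever \emph{needs} her whole triple --- any single liked item already gives full value --- and surplus or disliked items can in principle be \emph{dumped} onto agents whose value is already zero. A naive reduction therefore collapses into a polynomial-time matching problem and fails to capture the disjoint-and-exact nature of \textsc{Exact3Cover}. The crux is thus to design the padding so that dumping capacity is eliminated and partial or overlapping selections become strictly suboptimal, and then to prove the accompanying upper bound: that no allocation can exceed the exact-cover welfare unless an exact cover genuinely exists. Getting this counting tight --- so that the welfare gap between the cover and non-cover cases is exactly one agent, consistent with the near-exact approximability established in \cref{thm:scapegoat} --- is the main technical obstacle.
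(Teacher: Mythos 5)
You have the right source problem and you correctly diagnose why the naive gadget fails, but the proposal stops exactly where the proof has to happen: the padding that ``eliminates dumping capacity'' is never constructed, and you explicitly defer both its design and the accompanying upper-bound argument as ``the main technical obstacle.'' That is the entire content of the hardness proof, so what remains is a plan, not a proof. Moreover, the target equivalence you fix --- an exact cover exists iff \emph{all} agents simultaneously attain their maximum value --- is the wrong one to aim for. With $\tau=0$ the min-valuation is weakly decreasing, so no gadget can make an individual agent \emph{need} her whole triple; and as you yourself observe, with uniform binary values the all-agents-at-max condition reduces to a bipartite flow feasibility question (assign each element-item to some set containing it so that every agent receives at least one liked item), which is polynomial-time solvable. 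The forcing must therefore come through the aggregate welfare objective rather than through individual maxima, and that requires satisfied agents to end up with \emph{heterogeneous} values.

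Concretely, the paper's construction adds $\ell-t$ ``prized'' items valued $2$ by \emph{every} agent (where $\ell=|\mathcal{S}|$ and $3t=|\mathcal{E}|$), keeps value $1$ for an agent's own elements and $0$ otherwise, and asks whether an allocation with \USW{} at least $2\ell-t$ exists. Since every item must be allocated and $\tau=0$ takes the minimum, any agent whose bundle touches an element item is capped at value $1$, so at most $\ell-t$ agents can attain value $2$; reaching $2\ell-t$ forces the $3t$ element items to be absorbed by exactly $t$ agents, each receiving only elements of her own set, which is precisely an exact cover. When no cover exists, the element items must spread over at least $t+1$ agents (or some agent receives a value-$0$ item), capping the welfare at $2\ell-t-1$. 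Note that in the optimal allocation the cover agents receive value $1$, \emph{not} their individual maximum of $2$: the reduction works through this opportunity-cost accounting --- every agent who handles element items forgoes a prized item --- which is exactly the mechanism missing from your proposal.
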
 
	
	\begin{proof}
		It is straightforward to see that this problem is in NP. Given instance $I$ and a target \USW{} value $\alpha$,  an allocation $A$ with $\USW{}(A)\geq\alpha$, serves as a polynomial sized certificate. It can be checked in polynomial time that $A$ has \USW{} at least $\alpha$.
		
		We now provide a single reduction format that will work for all choice of homogeneous quantiles $\tau\in [0,1)$. While the exact construction is quantile-dependent, the ideas and arguments behind them will work largely analogously.         
		We give our reduction  from the well known \textup{\textsc{Exact3Cover}} (X3C) problem, which is known to be NP-hard \citep{garey1979computers}. Under the \textup{\textsc{Exact3Cover}} problem we are given a universe of $3t$ elements $\mathcal{U}$ and a family of sets $\mathcal{S}=\{S_1,\cdots, S_{\ell}\}$ s.t. for each $j\in [t]$, $S_j\subset \mathcal{U}$ and $|S_j|=3$. The aim is to find $t$ mutually disjoint sets $S_{j_1},\cdots, S_{j_t}$ that cover the given set of elements, i.e., $\bigcup_{p\in [\ell]}S_{j_p}=\mathcal{U}.$ \\
		
		\noindent \textbf{Reduction Overview} Our reduction will construct agents corresponding to sets and items corresponding to elements of the given X3C instance. Agents will only have value $1$ for element items that are contained in their corresponding set and value $0$ for other element items. Additionally there will be $k_1t$ "buffer" items that give all agents value $1$ and $\ell -t$ ``prized'' items that give all agents a value of $2$. Additionally, there will be $k_0t+k_0'(\ell-t)$ ``dummy items'' whenever $\tau>0$ which give each agent a value of $0$. The aim is to ensure that for a fixed quantile value, a \USW{} of $2\ell -t$ is possible if and only if an X3C exists in the given instance. Specifically, we choose the values of $k_0,$ $k_0'$ and $k_1$ s.t. an X3C exists if and only if it is possible to give $t$ distinct set agents their three corresponding element items, $k_0$ dummy items and $k_1$ buffer items and the remaining $\ell -t$ agents get one prized item and $k_0'$ dummy items each. \\
		
		\noindent \textbf{Choosing reduction parameters.} Given a fixed $\tau\in [0,1)$, we choose the values of $k_0$, $k_0'$ and $k_1$ as follows:
		
		\begin{itemize}
			\item If $\tau=0$, set $k_0=k_0'=k_1=0$,
			\item If $\tau\in (\frac{1}{p+1},\frac{1}{p}]$, for a positive integer $p>2$, set $k_0'=0$, $k_0=1$ and $k_1=p-3$,
			\item If $\tau \in (\sfrac{1}{3},\sfrac{2}{5}]$, set $k_0'=0$, $k_0=2$ and $k_1=1$,
			\item If $\tau\in (\sfrac{1}{2},\sfrac{1}{2}]$, set $k_0'=0$, $k_0=2$ and $k_1=0$, and finally,
			\item If $\tau \in (\sfrac{1}{2},1)$, set positive integers $k_0'\in [\frac{2\tau-1}{1-\tau},\frac{\tau}{1-\tau})$, $k_0\in [\frac{4\tau-1}{1-\tau},\frac{3\tau}{1-\tau})$ and $k_1=0$.
		\end{itemize}
		
		Note that it remains to prove that there do exist values for the choice of $k_0'$ and $k_0$ when $\tau>\sfrac{1}{2}.$ Here, it suffices to show that the respective intervals contain at least one positive integer. To this end, first consider $k_0'\in [\frac{2\tau-1}{1-\tau},\frac{\tau}{1-\tau})$. It is non-trivial to see if this is indeed a non-empty interval. We first observe that as $\tau<1$, $2\tau-1<2tau-\tau=\tau$. As a result, this interval is indeed well defined. Further, as $\tau>\sfrac{1}{2}$, we have that $\frac{2\tau-1}{1-\tau}> 0$. Further, the length of this interval is $\frac{\tau}{1-\tau}-\frac{2\tau-1}{1-\tau}=\frac{\tau-2\tau+1}{1-\tau}=1$. Consequently, this interval will contain exactly one positive integer for each choice of $\tau\in (\sfrac{1}{2},1)$.
		
		For the choice of $k_0\in[\frac{4\tau-1}{1-\tau},\frac{3\tau}{1-\tau})$, we have that $3\tau=4\tau-\tau>4\tau-1$ and thus the interval is non-empty. As $\tau>\sfrac{1}{2}$, we have that $4\tau-1>1$ and as a result $\frac{4\tau-1}{1-\tau}>1$. Further, the length of this interval is $\frac{3\tau-4\tau+1}{1-\tau}=\frac{1-\tau}{1-\tau}=1$ . Consequently, the interval must contain exactly one positive integer. As a result, for each $\tau\in [0,1)$, our choices of $k_0$, $k_0'$ and $k_1$ are well defined. \\
		
		\noindent \textbf{Reduction Setup.} Formally, given an instance of \textup{\textsc{Exact3Cover}} $\langle \mathcal{U}, \mathcal{S} \rangle$,  we construct an instance of our problem  with $\ell$ agents and $3t +t(k_0+k_1)t+ (k_0'+1)(\ell+t)$ items as follows: 
		For each set $S_j$, create an agent $i_j$. For each element $e\in \mathcal{U}$, create an item $g_e$. Create an additional set of prized $\ell-t$ items $G'$, a set of $k_1t$ buffer items $B$ and a set of $k_0t+k_0'(\ell-t)$ dummy items $D$.
		
		For each agent $i$, set the quantile value $\tau_i=\tau$ as given. The agent values are chosen as follows: for any $i\in N$, we set $i$'s value for a prized item $g'\in G$ as $v_i(g')=2$, for a buffer item $b\in B$ as $v_i(b)=1$ and for a dummy item $d\in D$ as $v_i(d)=0$. Further, for each $j\in [t]$ we set $v_{i_j}(g_e)=1$ if $e\in S_j$ and $v_{i_j}(g_e)=0$ otherwise.
		
		Before showing the correctness of our reduction, we first prove a series of claims regarding the choice of $k_0$, $k_0'$ and $k_1$ and how they impact the quantile representative of bundles of relevant sizes.\\
		
		\noindent \textbf{Claim 1.} For each $\tau\in (0,1)$, we have that $\lceil \tau(k_0+3+k_1)\rceil =k_0+1$ and for any $q\in \mathbb{Z}_+$, $\lceil \tau(k_0+k_1+3+q)\rceil \leq k_0+q$.
		\vspace{2mm}
		
		\noindent {\em Proof of Claim 1.} We prove this claim for each relevant interval separately.\\
		
		\noindent \underline{$\tau\in (\frac{1}{p+1},\frac{1}{p}]$ for positive integer $p\geq 3$}. Here we have that $k_0=1$ and $k_1=p-3$. Consequently $\lceil \tau(k_0+k_1+3)\rceil=\lceil \tau (p+1)\rceil$. As $\tau\in (\frac{1}{p+1},\frac{1}{p}]$ we have that  $\lceil \tau(k_0+k_1+3)\rceil=2=k_0+1$. 
		
		Further, for any choice of $q\geq 1$, we have that $\lceil \tau(k_0+k_1+3+q)\rceil=\lceil \tau(p+1+q)\rceil$. Before discussing a general choice of $q\in \mathbb{Z}_+,$ first consider $q=1$.         
		Here, as $\tau\in (\frac{1}{p+1},\frac{1}{p}]$ and $p\geq 3$, we have that $\tau(p+2)\leq 2$. Now consider an arbitrary integer $q>1$. We have that 
		$$\lceil \tau (p+1+q)\rceil=\lceil \tau (p+2+q-1)\rceil\leq 2+q-1=k_0+q.$$
		
		\noindent \underline{$\tau \in (\sfrac{1}{3},\sfrac{2}{5}]$.} Here, we have that $k_0=2$ and $k_1=1$. Clearly $\lceil \tau(k_0+k_1+3)\rceil=\lceil 6\tau\rceil= 3=k_0+1$ for each $\tau\in (\sfrac{1}{3},\sfrac{2}{5}]$. Further, for any $q\in \mathbb{Z}_+$, we have that $\lceil \tau(k_0+k_1+3+q)\rceil=\lceil \tau (6+q)\rceil \leq \lceil (6+q)\frac{2}{5}\rceil=\lceil 1.2+0.4q\rceil \leq q+2=k_0+q$.\\
		
		\noindent \underline{$\tau \in (\sfrac{2}{5},\sfrac{1}{2}].$} Here, we have that $k_0=2$ and $k_1=0$. Clearly, $\lceil \tau(k_0+k_1+3)\rceil=\lceil 5\tau\rceil=3=k_0+1$ for each $\tau\in (\sfrac{2}{5},\sfrac{1}{2}]$. Further, for any $q\in \mathbb{Z}_+$, we have that 
		$$\lceil \tau (k_0+k_1+3+q)\rceil=\lceil \tau (5+q)\rceil \leq \lceil \frac{5+q}{2}\rceil=2+\lceil\frac{q+1}{2}\rceil \leq 2+q$$ as for any $q>0$, we have that $\frac{q+1}{2}\leq q$. \\
		
		\noindent \underline{$\tau \in (\sfrac{1}{2},1]$.} Here, we have that $k_0=\in [\frac{4\tau-1}{1-\tau},\frac{3\tau}{1-\tau})$ and $k_1=0$. Consequently, we have that $\lceil \tau(k_0+k_1+3)\rceil=\lceil \tau(k_0+3)\rceil$. As $k_0\in [\frac{4\tau-1}{1-\tau},\frac{3\tau}{1-\tau})$, we have that $k_0\geq \frac{3\tau -1}{1-\tau}$. This implies that $k_0(1-\tau)\geq3\tau-1$ and consequently, $k_0+1\geq \tau(k_0+3)$. Further as $k_0<\frac{3\tau}{1-\tau}$, we have that $k_0<\tau(k_0+3)$. Therefore, it holds that $\lceil \tau (k_0+3)\rceil=k_0+1$. 
		
		We now prove that $\lceil \tau (k_0+3+q)\rceil\leq k_0+q$.  As $k_0\geq \frac{4\tau-1}{1-\tau}$, we have that $k_0(1-\tau)\geq 4\tau-1$ and thus $\tau(k_0+4)\leq k_0+1$. Further, as $k_0<\frac{3\tau}{1-\tau}< \frac{4\tau}{1-\tau}$, we have that $\tau(k_0+4)>k_0$. Thus, it holds that $\lceil \tau(k_0+4)\rceil=k_0+1$. Finally, for any fixed $q\in \mathbb{Z}_+$, we have that 
		
		\begin{align*}
			\lceil \tau(k_0+q+3)\rceil  &=\lceil \tau(k_0+4+q-1)\rceil\\
			&\leq \lceil \tau(k_0+4)\rceil +\lceil \tau(q-1)\rceil\\
			&< \lceil \tau(k_0+4)\rceil +q-1 \tag{As $\tau<1$}\\
			&=k_0+1+q-1=k_0+q
		\end{align*}

		Hence we have that for each $\tau\in (0,1)$, we have that $\lceil \tau(k_0+3+k_1)\rceil =k_0+1$ and for any $q\in \mathbb{Z}_+$, $\lceil \tau(k_0+k_1+3+q)\rceil \leq k_0+q$. \qed
		\vspace{4mm}
		
		\noindent \textbf{Claim 2.} For each $\tau\in (0,1),$ we have that for any $q\leq k_0+k_1+2$, $\lceil\tau q\rceil\leq k_0$.
		\vspace{2mm}
		
		\noindent {\em Proof of Claim 2.} As in the previous claim, we give a separate proof for each interval. In each case, it suffices to establish the claim for $q^*=k_0+k_1+2$ as for any $q<q^*$, $\lceil \tau q\rceil\leq \lceil \tau q^*\rceil$.\\
		
		\noindent \underline{$\tau\in (\frac{1}{p+1},\frac{1}{p}]$ for positive integer $p\geq 3$.} In this case, we have $k_0=1$ and $k_1=p-3$. Consequently, we need to check for $q^*=k_0+k_1+2=1+p-3+2=p$. As $\tau\leq \frac{1}{p}$, we have that $\lceil \tau q^*\rceil\leq 1=k_0.$\\
		
		\noindent \underline{$\tau \in (\sfrac{1}{3},\sfrac{2}{5}]$.} In this case, we have $k_0=2$ and $k_1=1$. Consequently, we need to check for $q^*=5$. As $\tau\leq \sfrac{2}{5}$, we have that $\lceil \tau q^*\rceil\leq 2=k_0$. \\
		
		\noindent \underline{$\tau \in (\sfrac{2}{5},\sfrac{1}{2}]$.} In this case, we have $k_0=2$ and $k_1=0$. Consequently, we need to check for $q^*=4$. As $\tau \leq \sfrac{1}{2}$, we have that $\lceil \tau q^*\rceil \leq 2=k_0$.\\
		
		\noindent \underline{$\tau \in (\sfrac{1}{2},1)$}. In this case, we have $k_0\in [\frac{4\tau-1}{1-\tau},\frac{3\tau}{1-\tau}) $ and $k_1=0$. Consequently, we need to establish that for all $\tau>\sfrac{1}{2}$, $\lceil \tau(k_0+2)\rceil \leq k_0$. In fact, it is sufficient to show that $\tau (k_0+2)\leq k_0$. We have that  
		
		\begin{align*}
			k_0  &\geq \frac{4\tau-1}{1-\tau}\\
			&=\frac{2\tau+2\tau-1}{1-\tau}\\
			&>\frac{2\tau}{1-\tau} \tag{As $\tau>\sfrac{1}{2}$.} 
		\end{align*}
		
		As a result, we have that $k_0(1-\tau)\geq 2\tau $ and therefore  $ \tau(k_0+2)\leq k_0$. 
		
		Hence, we have that for each $\tau \in (0,1)$, we have that for any any $q\leq k_0+k_1+2$, $\lceil\tau q\rceil\leq k_0$. \qed
		\vspace{4mm}

		\noindent \textbf{Claim 3.} For each $\tau\in (0,1)$, we have that $\lceil \tau (k_0'+1)\rceil=k_0'+1$ and for any $q\in \mathbb{Z}_+$, $\lceil \tau (k_0'+q+1)\rceil\leq k_0'+q$.
		\vspace{3mm}
		
		\noindent {\em Proof of Claim 3.} Recall that for all $\tau\leq \sfrac{1}{2}$, we have that $k_0'=0$. We thus argue for the cases when either $\tau\leq \sfrac{1}{2}$ or when $\tau>\sfrac{1}{2}$.\\
		
		\noindent \underline{$\tau\leq \sfrac{1}{2}.$} In this case $\lceil \tau (k_0'+1)\rceil=\lceil\tau\rceil=1=k_0'+1$ as $\tau>0$. Now fix $q\in \mathbb{Z}_+$. We have that $\lceil\tau(k_0'+q+1)\rceil=\lceil\tau(q+1\rceil)\leq \lceil\frac{q+1}{2}\rceil\leq q$ as $\tau \leq \sfrac{1}{2}$. \\
		
		\noindent \underline{$\tau\in (\sfrac{1}{2},1)$.} In this case we have that $k_0'\in [\frac{2\tau-1}{1-\tau},\frac{\tau}{1-\tau})$. Now as $\tau<1$, we have that  $\lceil \tau(k_0'+1)\rceil\leq k_0'+1$. Further as $k_0'<\frac{\tau}{1-\tau}$, we have that $k_0'(1-\tau)<\tau$ and thus $k_0'<\tau (k_0'+1).$ As a result, we have that $\lceil\tau (k_0'+1)\rceil=k_0'+1$.
		
		We now prove the remaining bound. To this end, recall that $k_0'\geq \frac{2\tau -1}{1-\tau}$. Therefore, we have that $k_0'(1-\tau)\geq 2\tau-1$ and thus, $\tau (k_0'+2)\leq k_0'+1$. As we have just shown $\tau (k_0'+1)>k_0'$. Consequently, we have that $k_0'<\tau(k_0'+2)\leq k_0'+1$ and as a result, $\lceil \tau (k_0'+2)\rceil=k_0'+1$. 
		
		Fix an arbitrary choice of $q\in \mathbb{Z}_+$, we obtain
		
		\begin{align*}
			\lceil \tau(k_0'+q+1)\rceil  &=\lceil \tau(k_0'+1+q-1)\rceil\\
			&\leq \lceil \tau(k_0'+1)\rceil +\lceil \tau(q-1)\rceil\\
			&< \lceil \tau(k_0'+1)\rceil +q-1 \tag{As $\tau<1$}\\
			&=k_0'+1+q-1=k_0'+q
		\end{align*}
		Hence, we have that for each $\tau\in (0,1)$, we have that $\lceil \tau (k_0'+1)\rceil=k_0'+1$ and for any $q\in \mathbb{Z}_+$, $\lceil \tau (k_0'+q+1)\rceil\leq k_0'+q$. \qed
		\vspace{3mm}
		
		We can now show that $\langle \mathcal{U},\mathcal{S}\rangle$ has an \textup{\textsc{Exact3Cover}} if and only if there exists an allocation $A=(A_1,\cdots, A_n)$ under the constructed instance $\langle N,M,v\rangle$ with \USW{} at least $2\ell-t$. \\
		
		\noindent {\bf X3C to maximum \USW{} allocation.} Firstly, assume that an Exact 3-Cover does exist, and that it is, without loss of generality, $S_1,\cdots, S_t$. Consider an allocation $A$ where for each $j\in [t]$, $A_{i_j}$ contains the element items corresponding to those contained in $S_j$, $k_0$ distinct dummy items and $k_1$ distinct buffer items. For $j>t$, $A_{i_j}$ contains $1$ prized item and $k_0'$ distinct dummy items. We now prove that  $\USW{}(A)=2\ell -t$. 
		
		Consider an agent $i_j$ where $j\in [t]$. We have that for agent $i_j$, $A_{i_j}$ contains $k_0$ items of value $0$, $3$ items of value $1$ and $k_1$ items of value $2$ and $|A_{i_j}|=k_0+3+k_1$. For the case of $\tau<0$, we have that $k_0=k_1=0$ and clearly $v_{i_j}(A_{i_j})=1$ as $A_{i_j}$ contains only the element items corresponding to the set $S_j$. In case $\tau\in (0,1)$, we have from Claim 1 that $\lceil \tau(k_0+3+k_1)\rceil=k_0+1$ and thus $v_{i_j}(A_{i_j})=1$.
		
		For an agent $i_j$ where $j>t$, we have that $A_{i_j}$ contains $k_0'$ items of value $0$ and exactly $1$ item of value $1$. When $\tau=0$, we have that $k_0'=0$ and clearly $v_{i_j}(A_{i_j})=2$. Instead, when $\tau\in (0,1)$, we have from Claim 3 that $\lceil \tau(k_0'+1)\rceil=k_0'+1$ and thus $v_{i_j}(A_{i_j})=2$. As a result, we have that $\USW{}(A)=t+2(\ell-t)=2\ell -t$.\\
		
		\noindent {\bf Maximum \USW{} allocation to X3C.} For the converse,  let an allocation $A$ exist s.t. $\USW{}(A)=2\ell -t$. We shall show that an Exact 3-Cover must exist.  
		Observe that there are exactly $\ell -t$ distinct (prized) items of value $2$, thus, under any allocation, at most $\ell -t$ agents can obtain a value of $2$. Further, as items values are all $0$, $1$ or $2$, if fewer than $\ell-t$ agents obtain a value of $2$, it is not possible to obtain  $\USW{}(A)=2\ell -t$. Consequently, there must exist $\ell-t$ agents who obtain a value of $2$.  Without loss of generality, let these be agents $i_{t+1},\cdots, i_{\ell}$. 
		
		Fix $j\in [\ell]\setminus [t]$. From the pigeonhole principle, $A_{i_j}$ must contain exactly one prized item, let this be $g'$. Note that when $\tau=0$, it must hold that $A_{i_j}=\{g'\}$, else $v_{i_j}(A_{i_j})\leq 1$.  Combining the fact that $A_{i_j}$ contains only one prized item with Claim 3,  we get that for $\tau>0$ if $|A_{i_j}|>k_0'+1$, it must hold that $v_{i_j}(A_{i_j})\leq 1$. As a result, $|A_{i_j}|\leq k_0'+1$ for all $\tau\in [0,1)$ and all $j>t$. As long as this holds, $v_{i_j}(A_{i_j})=2$ irrespective of whether the items in the set, other than $g'$ are element items, buffer items or dummy items. Recall that dummy items give value $0$ to all agents, but buffer and element items give value $1$ to at least one agent.  As a result, we can assume, without loss of generality, that $A_{i_j}\setminus D=\{g\}$.   Thus, the $\ell-t$ agents getting a value of $2$ can collectively receive at most $(\ell-t)k_0'$ dummy items from $D$.
		
		Therefore, for all choices of $\tau$, the remaining $k_0t$ dummy items, $3t$ element items and $k_1t$ buffer items must be allocated among the agents $i_1,\cdots, i_t$. As a result, at least one agent $i_j$ for $j\in [t]$ must satisfy $|A_{i_j}|\geq k_0+k_1+3$. Note that as $\USW{}(A)=2\ell-t$, it must hold that for each $j\in [t]$, $v_{i_j}(A_{i_j})=1$. From Claim 1, we have that  for any bundle $\beta \subseteq M$ s.t. $|\beta|\geq k_0+k_1+3$, for any agent $i\in N$, $v_i(\beta)=1$ if and only if $\beta$ contains at most $k_0$ items of value $0$. Recall that agents $i_1,\cdots, i_t$ must collectively receive $k_0t$ dummy items, which give each agent a value of $0$. From the pigeonhole principle, we get that for each $j\in [t]$, it must hold that $|A_{i_j}\cap D|=k_0$. 
		
		Combining this with Claim 2, we have that $|A_{i_j}|\geq k_0+k_1+q$. Again, applying pigeonhole principle, as there are exactly $3t$ element items and  $k_1t$ buffer items, it must hold that $|A_{i_j}|=k_0+k_1+3$ for each $j\in [t]$. Further, as for each $j\in [t]$,s $v_{i_j}(A_{i_j})=1$  and as $|A_{i_j}\cap D|=k_0$, $i_j$ cannot be assigned an element item not contained within $S_j$. Therefore, $A_{i_j}$ must contain at most $3$ element items, corresponding to those in $S_j$, and at least $k_1$ buffer items. As there are exactly $k_1t$ buffer items, $A_{i_j}$ must contain all three element items corresponding to $S_j$. As a result, the sets $S_1,\cdots, S_t$ must form an exact 3-cover.  
	\end{proof}
	
	We now complement this intractability result by providing an intriguing way of getting near optimal utilitarian welfare. 
	

	\subsubsection{Near Exact Algorithm.}
	In contrast to the balanced case, we find an asymptotically exact approximation for \USW{}. We call this the scapegoat algorithm (\cref{alg:unbalUSW-scapegoat}). It proceeds by considering allocations where one agent is the ``scapegoat" and receives $m-n+1$ items, while the remaining agents get one item each of high value. Exactly $n$ such allocations are considered, one for each agent as the scapegoat. For a fixed scapegoat, the corresponding allocation is built by a maximum weight one-one matching between the other agents and the items. 
	The algorithm chooses the allocation with the highest \USW{}.

	\begin{algorithm}[t]
		\KwIn{Instance with heterogeneous quantiles $\langle N,M, v, \tau \rangle$ }
		\KwOut{Allocation $A$}
		\For{each $i\in N$}{
			Create weighted bipartite graph $G^i=(X,Y,E,w)$ where\\
			$X$ contains $x_g$ for each $g\in M$,\\
			$Y$ contains $y_{j}$ for each $j\in N\setminus {i}$ and \\
			$w({x_g,y_{j}})= v_{j}(g)$\;
			Let $\mu$ be a maximum weight matching in $G^i$\;
			Set $A^i_j=\{g|x_g=\mu(y_j)\}$ for all $j\neq i$\;
			Set $A^i_i=M\setminus \cup_{j\neq i}A^i_j$\;
		}
		Let $A\gets \argmax \{\USW{}(A^i)|i\in N\}$\;
		\textbf{Return} $A$\;
		\caption{Scapegoat Algorithm for $\USW{\frac{n}{n-1}}$}\label{alg:unbalUSW-scapegoat}
	\end{algorithm}
	
	\begin{restatable}{theorem}{scapegoat}\label{thm:scapegoat}	
		Given instance  $I=\langle N,M,v,\tau \rangle$ with heterogeneous quantiles, scapegoat algorithm (\cref{alg:unbalUSW-scapegoat}) returns an $\USW{(\frac{n}{n-1})}$ allocation in polynomial time. 
	\end{restatable}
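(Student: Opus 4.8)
The plan is to compare the best allocation found by the scapegoat algorithm against a fixed optimal allocation $A^*=(A^*_1,\dots,A^*_n)$, and the whole argument rests on one structural feature of quantile valuations: for a single item $g$ we have $v_i(\{g\})=v_i(g)$ regardless of $\tau_i$, so a single well-chosen item can ``stand in'' for the quantile value of an entire bundle. Concretely, for each agent $i$ the quantile item of $A^*_i$ (equivalently, a maximum-value item of $A^*_i$) is a single item $g_i\in A^*_i$ with $v_i(g_i)\ge v_i(A^*_i)$; and since the bundles $A^*_1,\dots,A^*_n$ are pairwise disjoint, the representatives $g_1,\dots,g_n$ are distinct items.

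First I would order the agents so that $v_1(A^*_1)\ge v_2(A^*_2)\ge\cdots\ge v_n(A^*_n)$ and analyze the single iteration in which agent $n$ (the lowest-valued agent under $A^*$) is the scapegoat. In that iteration the algorithm computes a maximum-weight one-to-one matching between the remaining $n-1$ agents and the items, where assigning item $g$ to agent $i$ has weight $v_i(\{g\})=v_i(g)$. The assignment $i\mapsto g_i$ for $i\in[n-1]$ is a feasible matching of weight $\sum_{i=1}^{n-1}v_i(g_i)\ge\sum_{i=1}^{n-1}v_i(A^*_i)$, so the maximum-weight matching found by the algorithm has at least this weight. The scapegoat receives the remaining items, whose quantile value is nonnegative since all items are goods. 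Hence the allocation produced in this iteration has utilitarian welfare at least $\sum_{i=1}^{n-1}v_i(A^*_i)=\USW{}(A^*)-v_n(A^*_n)$, and because the algorithm returns the best allocation over all $n$ scapegoat choices, its output $A$ satisfies $\USW{}(A)\ge \USW{}(A^*)-v_n(A^*_n)$.

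To finish I would bound the dropped term by averaging: since $v_n(A^*_n)$ is the smallest of the $n$ optimal values, $v_n(A^*_n)\le \tfrac1n\USW{}(A^*)$, giving $\USW{}(A)\ge \tfrac{n-1}{n}\USW{}(A^*)=\big(1+\tfrac{1}{n-1}\big)^{-1}\USW{}(A^*)$, which is exactly the claimed $\USW{(1+\frac{1}{n-1})}$ guarantee. Polynomial running time is immediate: there are $n$ iterations, each solving one maximum-weight bipartite matching (assignment) instance on at most $n-1$ agents and $m$ items and evaluating a single quantile value for the scapegoat's bundle. I expect the only real subtlety to live in the first paragraph — recognizing that the quantile value of any bundle is witnessed by one of its items, and that disjointness of the $A^*_i$ makes these witnesses simultaneously available in a single matching; once this ``representative item'' idea is in place, matching optimality, nonnegativity of the scapegoat's value, and the averaging step are all routine. (The minor edge case $m<n-1$, where not every non-scapegoat agent can receive an item, can be handled separately or excluded by assumption.)
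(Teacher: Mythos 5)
Your proposal is correct and is essentially the paper's own proof: the paper likewise picks the witness item $g_j\in A^*_j$ with $v_j(g_j)=v_j(A^*_j)$, argues the best max-weight matching has weight at least $\sum_{j=1}^{n-1}v_j(A^*_j)$, and closes with the same averaging bound $\sum_{j=1}^{n-1}v_j(A^*_j)\geq \frac{n-1}{n}\USW{}(A^*)$. The only (harmless) difference is that you explicitly single out the iteration where the lowest-valued agent under $A^*$ is the scapegoat, a step the paper leaves implicit when it invokes ``the best max weight matching constructed.''
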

	
	\begin{proof} 
		Given $I$, let $A^i$ and $A$ be as in \cref{alg:unbalUSW-scapegoat} when run on $I$. Let $A^*$ be a maximum \USW{} allocation for $I$. Further, let $i^*\in N$ be such that $A=A^{i}$. 
		
		By definition of quantile valuations, for each $j\in N$, there is some $g_j\in A_j^*$ s.t. $v_j(A_j^*)=v_j(g_j)$. Without loss of generality we can assume that $v_1(A_1^*)\geq v_2(A_2^*)\geq \cdots \geq v_n(A_n^*)$. As a result, $\frac{n-1}{n}\USW{}(A^*)\leq \sum_{j=1}^{n-1} v_j(A^*_j)$.
		
		Now, as $A^{i}$ has maximum \USW{} over all the allocations constructed, it is straightforward to see that its \USW{} must be at least the weight of the best max weight matching constructed under \cref{alg:unbalUSW-scapegoat}. This matching in turn must have weight exactly equal to $\sum_{j=1}^{n-1} v_j(g_j)=\sum_{j=1}^{n-1}v_j(A^*_j)$. Thus, we get that
		\[\USW{}(A)\geq \sum_{j=1}^{n-1}v_j(g_j)\geq \frac{n-1}{n}\USW{}(A^*).\]
		
		Hence, $A$ is $\USW{(1+\frac{1}{n-1})}$.
		
		Note that since the maximum weight matching can be computed in $O(nm)$ time,  and such a matching is computed  $n$ times, our algorithm terminates in $O(mn^2)$ time.
	\end{proof}

	Building on this approach, we now show that when even one agent has $\tau_i=1$, we can now maximize \USW{} in poly time. Essentially this agent can be treated as the scapegoat, and we can simply use a maximum weight one-one matching as in \cref{alg:unbalUSW-scapegoat} and allocate all remaining items to the scapegoat. 
	
	\begin{restatable}{proposition}{optimistic}\label{prop:optimistUSWunbal}
		Given instance $I=\langle N, M, v,\tau \rangle$ with heterogeneous quantiles and an agent $i^*$ such that $\tau_{i^*}=1$, a maximum \USW{} allocation can be found in polynomial time.
	\end{restatable}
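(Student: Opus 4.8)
The plan is to exploit two structural facts about quantile valuations. First, for \emph{every} agent $i$ and every bundle $S$, the value $v_i(S)$ equals $v_i(g)$ for a single ``witness'' item $g\in S$ (the $\tau_i$-quantile item); in particular a singleton always satisfies $v_i(\{g\})=v_i(g)$. Second, the distinguished agent $i^*$ with $\tau_{i^*}=1$ is \emph{monotone}, since $v_{i^*}(S)=\max_{g\in S}v_{i^*}(g)$, so adding items to $i^*$'s bundle can never decrease her value. I would combine these to collapse the search space to a single matching computation.

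\emph{Step 1 (canonical form).} I would first argue that some maximum-\USW{} allocation has the special shape in which every agent $j\neq i^*$ receives exactly one item and $i^*$ receives all remaining items. Take an optimal allocation $A^*$ and, for each $j\neq i^*$, let $g^*_j\in A^*_j$ be a witness item with $v_j(g^*_j)=v_j(A^*_j)$. Reassign $A_j=\{g^*_j\}$ for $j\neq i^*$ and $A_{i^*}=M\setminus\{g^*_j:j\neq i^*\}$. Each $j\neq i^*$ keeps its old value because singletons are valued at the item itself, and since $A^*_{i^*}\subseteq A_{i^*}$, monotonicity of $i^*$ gives $v_{i^*}(A_{i^*})\geq v_{i^*}(A^*_{i^*})$. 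Hence $\USW{}(A)\geq\USW{}(A^*)$, so this canonical allocation is itself optimal.

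\emph{Step 2 (matching).} To find the best canonical allocation I would build the weighted bipartite graph on all $n$ agents and all $m$ items with $w(i,g)=v_i(g)$ --- the same construction as \cref{alg:unbalUSW-scapegoat}, but crucially \emph{including} $i^*$ on the agent side --- compute a maximum-weight matching $\mu$, set $A_j=\{\mu(j)\}$ for each matched agent, and dump every unmatched item onto $i^*$. I claim $\USW{}(A)$ equals the weight of $\mu$. For $j\neq i^*$ this is immediate. For $i^*$, the exchange/optimality property of a maximum-weight matching guarantees $v_{i^*}(\mu(i^*))\geq v_{i^*}(g)$ for every unmatched item $g$ (otherwise rematching $i^*$ to $g$ would strictly increase the weight), so the leftovers leave $i^*$'s value equal to $v_{i^*}(\mu(i^*))$. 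Conversely, the canonical optimum from Step 1 is realized by the matching $\{j\mapsto g^*_j\}\cup\{i^*\mapsto h^*\}$, where $h^*$ witnesses $v_{i^*}(A_{i^*})$, whose weight is exactly $\OPT$; thus $w(\mu)\geq\OPT$, while $A$ is a genuine allocation so $\USW{}(A)\leq\OPT$, giving equality. Since a maximum-weight bipartite matching is computable in polynomial time, this yields the claimed algorithm, and the degenerate case $m<n$ is handled by the same matching, simply leaving some agents with the empty bundle.

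\emph{Step 3 (the main obstacle).} The delicate point, and the reason the hypothesis $\tau_{i^*}=1$ is essential, is the coupling between the matching objective and the value $i^*$ extracts from the pooled leftover items. Running the plain scapegoat routine of \cref{alg:unbalUSW-scapegoat}, which matches only the other $n-1$ agents and dumps the remainder on $i^*$, is \emph{not} optimal: it maximizes $\sum_{j\neq i^*}v_j(\mu(j))$ while ignoring $v_{i^*}$ of the pooled items, so a very valuable item for $i^*$ may be handed to someone else. Folding $i^*$ into the matching and invoking the maximum-weight exchange argument is precisely what removes this difficulty, while monotonicity of $i^*$ is what makes it safe to absorb all remaining items without hurting anyone. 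Verifying this exchange property together with the value bookkeeping for $i^*$ is the step I expect to require the most care.
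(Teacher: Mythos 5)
Your proposal is correct and follows essentially the same route as the paper's proof: a maximum-weight bipartite matching over \emph{all} agents (including $i^*$) and all items, with each matched agent $j\neq i^*$ keeping only $\mu(j)$, every leftover item absorbed by $i^*$, and optimality certified by the matching formed from witness items of an optimal allocation. The only difference is one of detail: you explicitly justify, via the exchange argument and the monotonicity of the $\tau_{i^*}=1$ agent, the step the paper compresses into ``Clearly $\USW{}(A)$ is the weight of $\mu$.''
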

	
	\begin{proof}
		Given $I$, let $A^*$ be a maximum \USW{} allocation. For each $j\in N$, let $g_j\in A^*_j$ be such that $v_j(A^*_j)=v_j(g_j)$. 
		
		Consider a maximum weight matching $\mu$ on the bipartite graph with {\em all} agents and all items. Observe that the weight of $\mu$ is at least $\sum_j v_j(A^*_j)=\USW{}(A)$. Now define allocation $A$ to be such that for all $j\neq i^*$, they are allocated only their matched item under $\mu$. Agent $i^*$ is allocated the matched item under $\mu$ along with all remaining items. 
		
		Clearly $\USW{}(A)$ is the weight of $\mu$. Hence, $A$ has maximum \USW{}.
	\end{proof}

	
	\subsection{Egalitarian Social Welfare}
	Rather surprisingly, we  find that maximizing \ESW{} over all allocations is intractable for some quantiles and tractable for others. Here, we assume all agents have the same quantiles. Clearly, the intractability results would extend to settings with arbitrary heterogeneous valuations.  We illustrate the spectrum of quantiles for which the problem is tractable vs intractable in \cref{fig:ESWpvsnp}. 
	When presenting algorithms, we shall again assume binary valuations. From \cref{lem:binRednESW}, a polynomial time algorithm for the binary case is sufficient to get a general algorithm.
	
	\subsubsection{Exact Algorithms. } We are able to find polynomial time algorithms for maximizing \ESW{} under a class of quantiles which includes many natural quantiles like $\tau = 0,\sfrac{1}{2},\sfrac{2}{3},\sfrac{3}{4},\sfrac{9}{10}$. 
	We begin with an observation which is true for all quantiles: for maximum \ESW{} to be $1$, each agent must get at least one item of value $1$ simultaneously.

	\begin{observation}\label{obs:egalbasic}
		Under an instance with binary goods, for allocation $A$, $\ESW{}(A)=1$ if and only if for each $i\in N$, there exists $g\in A_i$, s.t. $v_i(g)=1$.
	\end{observation}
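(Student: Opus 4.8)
The plan is to derive the statement directly from the definition of quantile valuations, in the same spirit as \cref{obs:basicBinVal}. Since all values are binary, $v_i(A_i)\in\{0,1\}$ for every agent, so $\ESW{}(A)=\min_{i\in N}v_i(A_i)=1$ holds precisely when $v_i(A_i)=1$ for every $i\in N$. I would therefore reduce the claim to the per-agent equivalence: $v_i(A_i)=1$ if and only if $A_i$ contains some $g$ with $v_i(g)=1$, and then take the conjunction over all $i$.

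For the forward (only-if) direction, suppose $\ESW{}(A)=1$, so that $v_i(A_i)=1$ for each $i$. By the definition of the $\tau_i$-quantile, the value $v_i(A_i)$ is always attained by an actual item of $A_i$, namely the item occupying the $\lceil \tau_i\lvert A_i\rvert\rceil$-th position in the ascending order of item values (or the least-valued item when $\tau_i=0$). That witnessing item $g$ then satisfies $v_i(g)=v_i(A_i)=1$, supplying the required good for agent $i$. I note that this direction is uniform in $\tau_i$ and uses nothing beyond the definition.

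For the converse, I would appeal to the binary characterization in \cref{obs:basicBinVal}: for agent $i$ a bundle has value $1$ exactly when the count of zero-valued items it contains is strictly below the quantile threshold. The step that requires care is relating the hypothesis ``$A_i$ holds at least one value-$1$ good'' to ``the $\tau_i$-quantile item of $A_i$ is a value-$1$ good,'' and this is precisely where the quantile bookkeeping of \cref{obs:basicBinVal} does the work; in the unconstrained (unbalanced) regime it is cleanest, since an agent assigned a singleton value-$1$ good has $v_i$ equal to that single item for every $\tau_i\in[0,1]$. I expect this converse bookkeeping to be the only delicate point, whereas the forward implication is essentially immediate from the definition of quantile valuations.
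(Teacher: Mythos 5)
Your forward direction is correct and is exactly what the paper relies on: since a quantile value is by definition attained by some item of the bundle, $v_i(A_i)=1$ forces $A_i$ to contain a value-$1$ good. The gap is in the converse, and it is not a matter of delicate ``bookkeeping'' that \cref{obs:basicBinVal} can supply --- the per-agent claim you reduce to, namely that $A_i$ containing some $g$ with $v_i(g)=1$ implies $v_i(A_i)=1$, is simply false for general quantiles. Take $\tau_i=\sfrac{1}{2}$ and $A_i=\{g_1,g_2,g_3\}$ with $v_i(g_1)=1$ and $v_i(g_2)=v_i(g_3)=0$: the quantile item is the $\lceil \tau_i|A_i|\rceil = 2$nd smallest in ascending order of value, so $v_i(A_i)=0$ even though $A_i$ contains a value-$1$ good. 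Indeed, \cref{obs:basicBinVal} requires the number of zero-valued items to be at most $\lceil \tau_i |A_i|\rceil - 1$, a far stronger condition than the existence of one value-$1$ item; and your singleton remark only covers the special case $|A_i|=1$, whereas $A$ is an arbitrary allocation in which the zero-valued goods must be placed somewhere.

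The resolution is that the paper never proves (nor uses) the biconditional. The observation is stated without proof, and the sentence immediately following it reads ``This gives a necessary condition for an allocation with \ESW{} of $1$ to exist''; everything downstream (e.g., the step in the proof of \cref{prop:ESWunbal} concluding $|\mu|=n$) invokes only that necessary direction. Sufficiency for particular quantiles is established separately, via the offset conditions of \cref{lem:eswhalf} and \cref{obs:third} together with the matching-based algorithms. In other words, the ``if and only if'' in the statement is an overstatement in the paper itself (it does hold for $\tau_i=1$, but not for all quantiles as the surrounding prose suggests), so any blind attempt to prove the literal statement, like yours, must fail on the ``if'' direction. The correct move is to prove only the ``only if'' implication --- which your first paragraph already does --- and to flag the converse as false in general.
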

	
	This gives a necessary condition for an allocation with \ESW{} of 1 to exist.  We now specifically consider quantiles of the form $\tau=\frac{t}{t+1}$ for $k\in \mathbb{Z}_+$. For this setting, we have the following simple result.
	
	\begin{restatable}{lemma}{toffset}\label{lem:eswhalf}
		Given $i\in N$ with $\tau_i=\frac{t}{t+1}$, where $t\in \mathbb{Z}_+$ is fixed and a bundle $B\subseteq M$ s.t. $|B^1|=|\{g\in B|v_i(g)=1\}|=\ell$. We have that $v_i(B)=1\Leftrightarrow |B\setminus B^1|\leq \ell t-1$. 
	\end{restatable}
	
	\begin{proof}
		
		Given agent $i$ and bundle $B$ with exactly $\ell$ goods of value $1$ for $i$. Observe that it is sufficient to compare the case when there are either exactly $\ell  t-1$ items of value $0$ or $\ell t$ items of value $0.$
		
		Suppose the number of items of value $0$ is $\ell t$. The value of agent $i$ for $B$ would be from  the $\lceil (\ell t+\ell)\frac{t}{t+1}\rceil =\ell t$'th lowest valued item, which would have value $0$. Consequently $v_i(B)=0$.
		
		On the other hand, if $B$ contained  $\ell t-1$ items of value $0$, then $i$'s value would come from the item which has the $p$th lowest value where  
		
		\begin{align*}
			p &= \Bigl\lceil (\ell t-1+\ell)\frac{t}{t+1}\Bigr\rceil \\
			&= \Bigl\lceil\frac{\ell t^2+\ell t-t}{t+1}\Bigr\rceil \\
			&=\Bigl\lceil (\ell t- \frac{t}{t+1})\Bigr\rceil \\
			&= \ell t \tag{As $\frac{t}{t+1}<1$.}
		\end{align*}
		
		As a result, when there are at most $\ell t-1$ items of value $0$, $v_i(B)=1$.
	\end{proof}
	
	\begin{figure*}[t]
		\centering
		\tikzset{every picture/.style={line width=1pt}} 
		
		\begin{tikzpicture}[x=1pt,y=1pt,yscale=-1,xscale=1]
			
			\draw  [fill={rgb, 255:red, 82; green, 170; blue, 140 }  ,fill opacity=0.5 ]  (9,19) -- (309,19) -- (309,32) -- (9,32) -- cycle ;
			\draw    (109,7) -- (109,41) ;
			\draw    (159,7) -- (159,41) ;
			\draw    (209,7) -- (209,41) ;
			\draw    (234,7) -- (234,41) ;
			\draw    (249,7) -- (249,41) ;
			\draw    (279,7) -- (279,41) ;
			\draw [color={rgb, 255:red, 208; green, 2; blue, 27 }  ,draw opacity=1 ] [dash pattern={on 2pt off 2pt}]  (10,10) -- (10,40) ;
			\draw [color={rgb, 255:red, 208; green, 2; blue, 27 }  ,draw opacity=1 ] [dash pattern={on 2pt off 2pt}]  (11,11) -- (11,39) ;
			\draw [color={rgb, 255:red, 208; green, 2; blue, 27 }  ,draw opacity=1 ] [dash pattern={on 2pt off 2pt}]  (12,10) -- (12,40) ;
			\draw [color={rgb, 255:red, 208; green, 2; blue, 27 }  ,draw opacity=1 ] [dash pattern={on 2pt off 2pt}]  (13,11) -- (13,39) ;
			\draw [color={rgb, 255:red, 208; green, 2; blue, 27 }  ,draw opacity=1 ] [dash pattern={on 2pt off 2pt}]  (14,10) -- (14,40) ;
			\draw [color={rgb, 255:red, 208; green, 2; blue, 27 }  ,draw opacity=1 ] [dash pattern={on 2pt off 2pt}]  (15,11) -- (15,39) ;
			\draw [color={rgb, 255:red, 208; green, 2; blue, 27 }  ,draw opacity=1 ] [dash pattern={on 2pt off 2pt}]  (16,10) -- (16,40) ;
			\draw [color={rgb, 255:red, 208; green, 2; blue, 27 }  ,draw opacity=1 ] [dash pattern={on 2pt off 2pt}]  (17,11) -- (17,39) ;
			\draw [color={rgb, 255:red, 208; green, 2; blue, 27 }  ,draw opacity=1 ] [dash pattern={on 2pt off 2pt}]  (18,10) -- (18,40) ;
			\draw [color={rgb, 255:red, 208; green, 2; blue, 27 }  ,draw opacity=1 ] [dash pattern={on 2pt off 2pt}]  (19,11) -- (19,39) ;
			\draw [color={rgb, 255:red, 208; green, 2; blue, 27 }  ,draw opacity=1 ] [dash pattern={on 2pt off 2pt}]  (20,10) -- (20,40) ;
			\draw [color={rgb, 255:red, 208; green, 2; blue, 27 }  ,draw opacity=1 ] [dash pattern={on 2pt off 2pt}]  (21,11) -- (21,39) ;
			\draw [color={rgb, 255:red, 208; green, 2; blue, 27 }  ,draw opacity=1 ] [dash pattern={on 2pt off 2pt}]  (22,10) -- (22,40) ;
			\draw [color={rgb, 255:red, 208; green, 2; blue, 27 }  ,draw opacity=1 ] [dash pattern={on 2pt off 2pt}]  (23,11) -- (23,39) ;
			\draw [color={rgb, 255:red, 208; green, 2; blue, 27 }  ,draw opacity=1 ] [dash pattern={on 2pt off 2pt}]  (24,10) -- (24,40) ;
			\draw [color={rgb, 255:red, 208; green, 2; blue, 27 }  ,draw opacity=1 ] [dash pattern={on 2pt off 2pt}]  (25,11) -- (25,39) ;
			\draw [color={rgb, 255:red, 208; green, 2; blue, 27 }  ,draw opacity=1 ] [dash pattern={on 2pt off 2pt}]  (26,10) -- (26,40) ;
			\draw [color={rgb, 255:red, 208; green, 2; blue, 27 }  ,draw opacity=1 ] [dash pattern={on 2pt off 2pt}]  (27,11) -- (27,39) ;
			\draw [color={rgb, 255:red, 208; green, 2; blue, 27 }  ,draw opacity=1 ] [dash pattern={on 2pt off 2pt}]  (28,10) -- (28,40) ;
			\draw [color={rgb, 255:red, 208; green, 2; blue, 27 }  ,draw opacity=1 ] [dash pattern={on 2pt off 2pt}]  (29,11) -- (29,39) ;
			\draw [color={rgb, 255:red, 208; green, 2; blue, 27 }  ,draw opacity=1 ] [dash pattern={on 2pt off 2pt}]  (30,10) -- (30,40) ;
			\draw [color={rgb, 255:red, 208; green, 2; blue, 27 }  ,draw opacity=1 ] [dash pattern={on 2pt off 2pt}]  (31,11) -- (31,39) ;
			\draw [color={rgb, 255:red, 208; green, 2; blue, 27 }  ,draw opacity=1 ] [dash pattern={on 2pt off 2pt}]  (32,10) -- (32,40) ;
			
			\draw [color={rgb, 255:red, 208; green, 2; blue, 27 }  ,draw opacity=1 ] [dash pattern={on 2pt off 2pt}]  (33,11) -- (33,39) ;
			\draw [color={rgb, 255:red, 208; green, 2; blue, 27 }  ,draw opacity=1 ] [dash pattern={on 2pt off 2pt}]  (34,10) -- (34,40) ;
			\draw [color={rgb, 255:red, 208; green, 2; blue, 27 }  ,draw opacity=1 ] [dash pattern={on 2pt off 2pt}]  (35,11) -- (35,39) ;
			\draw [color={rgb, 255:red, 208; green, 2; blue, 27 }  ,draw opacity=1 ] [dash pattern={on 2pt off 2pt}]  (36,10) -- (36,40) ;
			\draw [color={rgb, 255:red, 208; green, 2; blue, 27 }  ,draw opacity=1 ] [dash pattern={on 2pt off 2pt}]  (37,11) -- (37,39) ;
			\draw [color={rgb, 255:red, 208; green, 2; blue, 27 }  ,draw opacity=1 ] [dash pattern={on 2pt off 2pt}]  (38,10) -- (38,40) ;
			\draw [color={rgb, 255:red, 208; green, 2; blue, 27 }  ,draw opacity=1 ] [dash pattern={on 2pt off 2pt}]  (39,11) -- (39,39) ;
			\draw [color={rgb, 255:red, 208; green, 2; blue, 27 }  ,draw opacity=1 ] [dash pattern={on 2pt off 2pt}]  (40,10) -- (40,40) ;
			\draw [color={rgb, 255:red, 208; green, 2; blue, 27 }  ,draw opacity=1 ] [dash pattern={on 2pt off 2pt}]  (41,11) -- (41,39) ;
			\draw [color={rgb, 255:red, 208; green, 2; blue, 27 }  ,draw opacity=1 ] [dash pattern={on 2pt off 2pt}]  (42,10) -- (42,40) ;
			\draw [color={rgb, 255:red, 208; green, 2; blue, 27 }  ,draw opacity=1 ] [dash pattern={on 2pt off 2pt}]  (43,11) -- (43,39) ;
			\draw [color={rgb, 255:red, 208; green, 2; blue, 27 }  ,draw opacity=1 ] [dash pattern={on 2pt off 2pt}]  (44,10) -- (44,40) ;
			\draw [color={rgb, 255:red, 208; green, 2; blue, 27 }  ,draw opacity=1 ] [dash pattern={on 2pt off 2pt}]  (45,11) -- (45,39) ;
			\draw [color={rgb, 255:red, 208; green, 2; blue, 27 }  ,draw opacity=1 ] [dash pattern={on 2pt off 2pt}]  (46,10) -- (46,40) ;
			\draw [color={rgb, 255:red, 208; green, 2; blue, 27 }  ,draw opacity=1 ] [dash pattern={on 2pt off 2pt}]  (47,11) -- (47,39) ;
			\draw [color={rgb, 255:red, 208; green, 2; blue, 27 }  ,draw opacity=1 ] [dash pattern={on 2pt off 2pt}]  (48,10) -- (48,40) ;
			\draw [color={rgb, 255:red, 208; green, 2; blue, 27 }  ,draw opacity=1 ] [dash pattern={on 2pt off 2pt}]  (49,11) -- (49,39) ;
			\draw [color={rgb, 255:red, 208; green, 2; blue, 27 }  ,draw opacity=1 ] [dash pattern={on 2pt off 2pt}]  (50,10) -- (50,40) ;
			\draw [color={rgb, 255:red, 208; green, 2; blue, 27 }  ,draw opacity=1 ] [dash pattern={on 2pt off 2pt}]  (51,11) -- (51,39) ;
			\draw [color={rgb, 255:red, 208; green, 2; blue, 27 }  ,draw opacity=1 ] [dash pattern={on 2pt off 2pt}]  (52,10) -- (52,40) ;
			\draw [color={rgb, 255:red, 208; green, 2; blue, 27 }  ,draw opacity=1 ] [dash pattern={on 2pt off 2pt}]  (53,11) -- (53,39) ;
			\draw [color={rgb, 255:red, 208; green, 2; blue, 27 }  ,draw opacity=1 ] [dash pattern={on 2pt off 2pt}]  (54,10) -- (54,40) ;
			\draw [color={rgb, 255:red, 208; green, 2; blue, 27 }  ,draw opacity=1 ] [dash pattern={on 2pt off 2pt}]  (55,11) -- (55,39) ;
			\draw [color={rgb, 255:red, 208; green, 2; blue, 27 }  ,draw opacity=1 ] [dash pattern={on 2pt off 2pt}]  (56,10) -- (56,40) ;
			\draw [color={rgb, 255:red, 208; green, 2; blue, 27 }  ,draw opacity=1 ] [dash pattern={on 2pt off 2pt}]  (57,11) -- (57,39) ;
			\draw [color={rgb, 255:red, 208; green, 2; blue, 27 }  ,draw opacity=1 ] [dash pattern={on 2pt off 2pt}]  (58,10) -- (58,40) ;
			\draw [color={rgb, 255:red, 208; green, 2; blue, 27 }  ,draw opacity=1 ] [dash pattern={on 2pt off 2pt}]  (59,11) -- (59,39) ;
			\draw [color={rgb, 255:red, 208; green, 2; blue, 27 }  ,draw opacity=1 ] [dash pattern={on 2pt off 2pt}]  (60,10) -- (60,40) ;
			\draw [color={rgb, 255:red, 208; green, 2; blue, 27 }  ,draw opacity=1 ] [dash pattern={on 2pt off 2pt}]  (61,11) -- (61,39) ;
			\draw [color={rgb, 255:red, 208; green, 2; blue, 27 }  ,draw opacity=1 ] [dash pattern={on 2pt off 2pt}]  (62,10) -- (62,40) ;
			\draw [color={rgb, 255:red, 208; green, 2; blue, 27 }  ,draw opacity=1 ] [dash pattern={on 2pt off 2pt}]  (63,11) -- (63,39) ;
			\draw [color={rgb, 255:red, 208; green, 2; blue, 27 }  ,draw opacity=1 ] [dash pattern={on 2pt off 2pt}]  (64,10) -- (64,40) ;
			\draw [color={rgb, 255:red, 208; green, 2; blue, 27 }  ,draw opacity=1 ] [dash pattern={on 2pt off 2pt}]  (65,11) -- (65,39) ;
			\draw [color={rgb, 255:red, 208; green, 2; blue, 27 }  ,draw opacity=1 ] [dash pattern={on 2pt off 2pt}]  (66,10) -- (66,40) ;
			\draw [color={rgb, 255:red, 208; green, 2; blue, 27 }  ,draw opacity=1 ] [dash pattern={on 2pt off 2pt}]  (67,11) -- (67,39) ;
			\draw [color={rgb, 255:red, 208; green, 2; blue, 27 }  ,draw opacity=1 ] [dash pattern={on 2pt off 2pt}]  (68,10) -- (68,40) ;
			\draw [color={rgb, 255:red, 208; green, 2; blue, 27 }  ,draw opacity=1 ] [dash pattern={on 2pt off 2pt}]  (69,11) -- (69,39) ;
			\draw [color={rgb, 255:red, 208; green, 2; blue, 27 }  ,draw opacity=1 ] [dash pattern={on 2pt off 2pt}]  (70,10) -- (70,40) ;
			\draw [color={rgb, 255:red, 208; green, 2; blue, 27 }  ,draw opacity=1 ] [dash pattern={on 2pt off 2pt}]  (71,11) -- (71,39) ;
			\draw [color={rgb, 255:red, 208; green, 2; blue, 27 }  ,draw opacity=1 ] [dash pattern={on 2pt off 2pt}]  (72,10) -- (72,40) ;
			\draw [color={rgb, 255:red, 208; green, 2; blue, 27 }  ,draw opacity=1 ] [dash pattern={on 2pt off 2pt}]  (73,11) -- (73,39) ;
			\draw [color={rgb, 255:red, 208; green, 2; blue, 27 }  ,draw opacity=1 ] [dash pattern={on 2pt off 2pt}]  (74,10) -- (74,40) ;
			\draw [color={rgb, 255:red, 208; green, 2; blue, 27 }  ,draw opacity=1 ] [dash pattern={on 2pt off 2pt}]  (75,11) -- (75,39) ;
			\draw [color={rgb, 255:red, 208; green, 2; blue, 27 }  ,draw opacity=1 ] [dash pattern={on 2pt off 2pt}]  (76,10) -- (76,40) ;
			\draw [color={rgb, 255:red, 208; green, 2; blue, 27 }  ,draw opacity=1 ] [dash pattern={on 2pt off 2pt}]  (77,11) -- (77,39) ;
			\draw [color={rgb, 255:red, 208; green, 2; blue, 27 }  ,draw opacity=1 ] [dash pattern={on 2pt off 2pt}]  (78,10) -- (78,40) ;
			\draw [color={rgb, 255:red, 208; green, 2; blue, 27 }  ,draw opacity=1 ] [dash pattern={on 2pt off 2pt}]  (79,11) -- (79,39) ;
			\draw [color={rgb, 255:red, 208; green, 2; blue, 27 }  ,draw opacity=1 ] [dash pattern={on 2pt off 2pt}]  (80,10) -- (80,40) ;
			\draw [color={rgb, 255:red, 208; green, 2; blue, 27 }  ,draw opacity=1 ] [dash pattern={on 2pt off 2pt}]  (81,11) -- (81,39) ;
			\draw [color={rgb, 255:red, 208; green, 2; blue, 27 }  ,draw opacity=1 ] [dash pattern={on 2pt off 2pt}]  (82,5) -- (82,44) ;
			\draw [color={rgb, 255:red, 208; green, 2; blue, 27 }  ,draw opacity=1 ] [dash pattern={on 2pt off 2pt}]  (176,11) -- (176,39) ;
			\draw [color={rgb, 255:red, 208; green, 2; blue, 27 }  ,draw opacity=1 ] [dash pattern={on 2pt off 2pt}]  (177,10) -- (177,40) ;
			\draw [color={rgb, 255:red, 208; green, 2; blue, 27 }  ,draw opacity=1 ] [dash pattern={on 2pt off 2pt}]  (178,11) -- (178,39) ;
			\draw [color={rgb, 255:red, 208; green, 2; blue, 27 }  ,draw opacity=1 ] [dash pattern={on 2pt off 2pt}]  (179,10) -- (179,40) ;
			\draw [color={rgb, 255:red, 208; green, 2; blue, 27 }  ,draw opacity=1 ] [dash pattern={on 2pt off 2pt}]  (180,11) -- (180,39) ;
			\draw [color={rgb, 255:red, 208; green, 2; blue, 27 }  ,draw opacity=1 ] [dash pattern={on 2pt off 2pt}]  (181,10) -- (181,40) ;
			\draw [color={rgb, 255:red, 208; green, 2; blue, 27 }  ,draw opacity=1 ] [dash pattern={on 2pt off 2pt}]  (182,11) -- (182,39) ;
			\draw [color={rgb, 255:red, 208; green, 2; blue, 27 }  ,draw opacity=1 ] [dash pattern={on 2pt off 2pt}]  (183,10) -- (183,40) ;
			\draw [color={rgb, 255:red, 208; green, 2; blue, 27 }  ,draw opacity=1 ] [dash pattern={on 2pt off 2pt}]  (184,11) -- (184,39) ;
			\draw [color={rgb, 255:red, 208; green, 2; blue, 27 }  ,draw opacity=1 ] [dash pattern={on 2pt off 2pt}]  (185,10) -- (185,40) ;
			\draw [color={rgb, 255:red, 208; green, 2; blue, 27 }  ,draw opacity=1 ] [dash pattern={on 2pt off 2pt}]  (186,11) -- (186,39) ;
			\draw [color={rgb, 255:red, 208; green, 2; blue, 27 }  ,draw opacity=1 ] [dash pattern={on 2pt off 2pt}]  (187,10) -- (187,40) ;
			\draw [color={rgb, 255:red, 208; green, 2; blue, 27 }  ,draw opacity=1 ] [dash pattern={on 2pt off 2pt}]  (188,11) -- (188,39) ;
			\draw [color={rgb, 255:red, 208; green, 2; blue, 27 }  ,draw opacity=1 ] [dash pattern={on 2pt off 2pt}]  (121,10) -- (121,40) ;
			\draw [color={rgb, 255:red, 208; green, 2; blue, 27 }  ,draw opacity=1 ] [dash pattern={on 2pt off 2pt}]  (122,11) -- (122,39) ;
			\draw [color={rgb, 255:red, 208; green, 2; blue, 27 }  ,draw opacity=1 ] [dash pattern={on 2pt off 2pt}]  (123,10) -- (123,40) ;
			\draw [color={rgb, 255:red, 208; green, 2; blue, 27 }  ,draw opacity=1 ] [dash pattern={on 2pt off 2pt}]  (124,11) -- (124,39) ;
			\draw [color={rgb, 255:red, 208; green, 2; blue, 27 }  ,draw opacity=1 ] [dash pattern={on 2pt off 2pt}]  (125,10) -- (125,40) ;
			\draw [color={rgb, 255:red, 208; green, 2; blue, 27 }  ,draw opacity=1 ] [dash pattern={on 2pt off 2pt}]  (126,11) -- (126,39) ;
			\draw [color={rgb, 255:red, 208; green, 2; blue, 27 }  ,draw opacity=1 ] [dash pattern={on 2pt off 2pt}]  (127,10) -- (127,40) ;
			\draw [color={rgb, 255:red, 208; green, 2; blue, 27 }  ,draw opacity=1 ] [dash pattern={on 2pt off 2pt}]  (128,11) -- (128,39) ;
			\draw [color={rgb, 255:red, 208; green, 2; blue, 27 }  ,draw opacity=1 ] [dash pattern={on 2pt off 2pt}]  (129,5) -- (129,44) ;
			\draw [color={rgb, 255:red, 208; green, 2; blue, 27 }  ,draw opacity=1 ] [dash pattern={on 2pt off 2pt}]  (130,10) -- (130,40) ;
			\draw [color={rgb, 255:red, 208; green, 2; blue, 27 }  ,draw opacity=1 ] [dash pattern={on 2pt off 2pt}]  (131,11) -- (131,39) ;
			\draw [color={rgb, 255:red, 208; green, 2; blue, 27 }  ,draw opacity=1 ] [dash pattern={on 2pt off 2pt}]  (132,10) -- (132,40) ;
			\draw [color={rgb, 255:red, 208; green, 2; blue, 27 }  ,draw opacity=1 ] [dash pattern={on 2pt off 2pt}]  (133,11) -- (133,39) ;
			\draw [color={rgb, 255:red, 208; green, 2; blue, 27 }  ,draw opacity=1 ] [dash pattern={on 2pt off 2pt}]  (134,10) -- (134,40) ;
			\draw [color={rgb, 255:red, 208; green, 2; blue, 27 }  ,draw opacity=1 ] [dash pattern={on 2pt off 2pt}]  (135,11) -- (135,39) ;
			\draw [color={rgb, 255:red, 208; green, 2; blue, 27 }  ,draw opacity=1 ] [dash pattern={on 2pt off 2pt}]  (136,10) -- (136,40) ;
			\draw [color={rgb, 255:red, 208; green, 2; blue, 27 }  ,draw opacity=1 ] [dash pattern={on 2pt off 2pt}]  (137,11) -- (137,39) ;
			\draw [color={rgb, 255:red, 208; green, 2; blue, 27 }  ,draw opacity=1 ] [dash pattern={on 2pt off 2pt}]  (138,10) -- (138,40) ;
			\draw [color={rgb, 255:red, 208; green, 2; blue, 27 }  ,draw opacity=1 ] [dash pattern={on 2pt off 1pt}]  (139,11) -- (139,39) ;
			\draw [color={rgb, 255:red, 208; green, 2; blue, 27 }  ,draw opacity=1 ] [dash pattern={on 2pt off 2pt}]  (140,10) -- (140,40) ;
			\draw [color={rgb, 255:red, 208; green, 2; blue, 27 }  ,draw opacity=1 ] [dash pattern={on 2pt off 2pt}]  (141,11) -- (141,39) ;
			\draw [color={rgb, 255:red, 208; green, 2; blue, 27 }  ,draw opacity=1 ] [dash pattern={on 2pt off 2pt}]  (142,10) -- (142,40) ;
			\draw [color={rgb, 255:red, 208; green, 2; blue, 27 }  ,draw opacity=1 ] [dash pattern={on 2pt off 2pt}]  (143,11) -- (143,39) ;
			\draw [color={rgb, 255:red, 208; green, 2; blue, 27 }  ,draw opacity=1 ] [dash pattern={on 2pt off 2pt}]  (189,5) -- (189,44) ;
			\draw    (9,7) -- (9,41) ;
			\draw    (309,7) -- (309,41) ;
			\draw    (294,14) -- (294,37) ;
			\draw    (266.14,14) -- (266.14,37) ;
			\draw    (271.5,14) -- (271.5,37) ;
			\draw    (275.67,14) -- (275.67,37) ;
			\draw    (281.727,14) -- (281.727,37) ;
			\draw    (284,14) -- (284,37) ;
			\draw    (286,14) -- (286,37) ;
			\draw    (287.57,14) -- (287.57,37) ;
			\draw    (291.353,14) -- (291.353,37) ;
			\draw    (292.333,14) -- (292.333,37) ;
			\draw    (290.25,14) -- (290.25,37) ;
			\draw    (259,14) -- (259,37) ;
			\draw    (271.5,14) -- (271.5,37) ;
			\draw    (289,14) -- (289,37) ;
			\draw    (304,14) -- (304,37) ;
			\draw    (300,14) -- (300,37) ;
			\draw    (293,14) -- (293,37) ;
			\draw    (294,14) -- (294,37) ;
			\draw    (295,14) -- (295,37) ;
			\draw    (296,14) -- (296,37) ;
			\draw    (297,14) -- (297,37) ;
			\draw    (298,14) -- (298,37) ;
			\draw    (299,14) -- (299,37) ;
			\draw    (300,14) -- (300,37) ;
			\draw    (301,14) -- (301,37) ;
			\draw    (302,14) -- (302,37) ;
			\draw    (303,14) -- (303,37) ;
			\draw    (304,14) -- (304,37) ;
			\draw    (305,14) -- (305,37) ;
			\draw    (306,14) -- (306,37) ;
			\draw    (307,14) -- (307,37) ;
			\draw    (308,14) -- (308,37) ;
			
			\draw (78,45) node [anchor=north west][inner sep=0.75pt]  [font=\footnotesize]  {$\frac{1}{4}$};
			\draw (105,45) node [anchor=north west][inner sep=0.75pt]  [font=\footnotesize]  {$\frac{1}{3}$};
			\draw (155,45) node [anchor=north west][inner sep=0.75pt]  [font=\footnotesize]  {$\frac{1}{2}$};
			\draw (125,45) node [anchor=north west][inner sep=0.75pt]  [font=\footnotesize]  {$\frac{2}{5}$};
			\draw (185,45) node [anchor=north west][inner sep=0.75pt]  [font=\footnotesize]  {$\frac{3}{5}$};
			\draw (205,45) node [anchor=north west][inner sep=0.75pt]  [font=\footnotesize]  {$\frac{2}{3}$};
			\draw (229,45) node [anchor=north west][inner sep=0.75pt]  [font=\footnotesize]  {$\frac{3}{4}$};
			\draw (245,45) node [anchor=north west][inner sep=0.75pt]  [font=\footnotesize]  {$\frac{4}{5}$};
			\draw (273,45) node [anchor=north west][inner sep=0.75pt]  [font=\footnotesize]  {$\frac{9}{10}$};
			\draw (305,47) node [anchor=north west][inner sep=0.75pt]  [font=\footnotesize]  {$1$};
			\draw (5,47) node [anchor=north west][inner sep=0.75pt]  [font=\footnotesize]  {$0$};

		\end{tikzpicture}

		\caption{Quantile-wise tractability or intractability of max \ESW{}. Red dashed lines show values of $\tau$ for which maximizing \ESW{} is NP-hard, black solid lines show a value for $\tau$ for which we have polytime algorithms.}
		\label{fig:ESWpvsnp}
	\end{figure*}
	Based on \cref{obs:egalbasic} and \cref{lem:eswhalf} we develop an algorithm for maximizing \ESW{} over unconstrained allocations when there is a $t\in \mathbb{Z}_+$ s.t. $\tau_i=\frac{t}{t+1}$ for each $i\in N$. 
	\cref{alg:ESW-unbal} divides the items into sets $M_0$ and $M_1$. $M_0$ contains items that give all agents a value of $0$ and $M_1$ contains the remaining item. The algorithm proceeds by first checking if i) all agents can simultaneously receive one item of value $1$, and then if for the items in $M_0$, if there are enough items in $M_1$ to ensure the condition in \cref{lem:eswhalf} is met. If so, the remaining items are greedily allocated while ensuring that the intermediate partial allocations all have an \ESW{} of 1. 
	
	{
		\begin{algorithm}[t]
			\KwIn{$I=\langle N,M, v,\tau \rangle$ with binary goods and $\tau=\sfrac{t}{t+1}$ }
			\KwOut{An allocation $A$}
			Create bipartite graph $G=(X,Y,E)$ where $X$ contains $x_g$ for each $g\in M$,
			$Y$ contains $y_i$ for each $i\in N$ and 
			$(x_g,y_i)\in E$ only if $v_i(g)=1$, for each $i\in N,\, g\in M$\;
			Let $\mu$ be a maximum cardinality matching in $G$\; 
			Let $M_0=\{g\in M|v_i(g)=0$ for all $i\in N\}$\;
			Let $M_1=M\setminus M_0$\;\label{step:defn} 
			\eIf{$|M_0|> t|M_1|-n$ OR $|\mu|<n$}{
				Let $A$ be an arbitrary allocation\;
			}
			{
				Let $A=(A_1,\cdots, A_n)$ be s.t. $A_i\gets\{g|(x_g,y_i)\in \mu\}$\;\label{step:else}
				$M_1\gets M_1\setminus \cup_i A_i$\;
				\While{$M_1\neq \emptyset$ AND $M_0\neq \emptyset$}{
					Arbitrarily pick $g\in M_1$ and $i\in N$ s.t. $v_i(g)=1$\;
					\eIf{$|M_0|\geq t$}{
						Pick an arbitrary subset $S\subseteq M_0$ s.t. $|S|=t$\;
					}{
						Let $S\gets M_0$\;
					}
					$A_i\gets A_i\cup \{g\}\cup S$\;
					$M_1\gets M_1\setminus \{g\}$ and 
					$M_0\gets M_0 \setminus S$\;
				}
				\If{$M_0\neq \emptyset$}{
					Let $B_1\cdots B_n$ be an arbitrary partition of $M_0$ s.t. $|B_i|\leq t-1$ for all $i\in N$\;
					
				}
				\If{$M_1\neq \emptyset$}{
					Let $B_1\cdots B_n$ be an arbitrary partition of $M_1$ s.t. $g\in|B_i|$ only if $v_i(g)=1$\;
				}
				For each $i\in N$, set $A_i\gets A_i \cup B_i$\;
				
			}
			\textbf{Return} $A$\;    
			
			\caption{Max \ESW{} for binary goods and $\tau=\sfrac{t}{t+1}$}\label{alg:ESW-unbal}
		\end{algorithm}
	}
	
	\begin{restatable}{proposition}{ESWunbal}\label{prop:ESWunbal}
		Given instance $I=\langle N,M,v,\tau=\frac{t}{t+1}\rangle$ where $t\in \mathbb{Z}_+$, \cref{alg:ESW-unbal} returns a maximum \ESW{} allocation in polynomial time.
	\end{restatable}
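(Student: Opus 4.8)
The plan is to first invoke \cref{lem:binRednESW} to reduce to binary goods, where every agent's value for her bundle is either $0$ or $1$. Hence the maximum achievable \ESW{} is $1$ exactly when some allocation gives every agent value $1$, and is $0$ otherwise. So it suffices to establish two things: that \cref{alg:ESW-unbal} outputs an allocation with $\ESW{}=1$ whenever one exists, and that its test ``$|M_0|>t|M_1|-n$ OR $|\mu|<n$'' fires precisely when no such allocation exists. In the latter case \emph{every} allocation has $\ESW{}=0$, so returning an arbitrary allocation is optimal, which is what the algorithm does.

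I would begin with the necessity of the two conditions. By \cref{obs:egalbasic}, any allocation with $\ESW{}=1$ gives each agent some item she values at $1$; picking one such item per agent yields a system of distinct representatives, i.e.\ a matching saturating $Y$ in the graph $G$, forcing $|\mu|=n$. For the second condition, note that any item an agent values at $1$ lies in $M_1$, so if in a fixed $\ESW{}=1$ allocation agent $i$ gets $\ell_i$ items of value $1$ and $z_i$ items of value $0$, then \cref{lem:eswhalf} gives $\ell_i\ge 1$ and $z_i\le \ell_i t-1$. Summing these, and using $\sum_i \ell_i\le |M_1|$ together with $\sum_i z_i=|M_0|+|M_1|-\sum_i\ell_i$, rearranges to $|M_0|\le t|M_1|-n$. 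Thus whenever the algorithm's test fails, no $\ESW{}=1$ allocation can exist.

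The core of the argument is to show that when $|\mu|=n$ and $|M_0|\le t|M_1|-n$ the construction achieves $\ESW{}=1$. The invariant I would track is that the algorithm only ever hands an agent an $M_1$-item she values at $1$, so an agent's value-$0$ items are exactly the $M_0$-items she receives. After the matching step each agent holds one value-$1$ item and no zeros, leaving slack $t-1$ in the bound of \cref{lem:eswhalf}; and each while-loop iteration gives an agent one more value-$1$ item together with $t$ zeros, which raises both the capacity $\ell_i t-1$ and the zero-count by exactly $t$, so the bundle stays at value $1$ and \emph{every} agent retains residual slack exactly $t-1$, independent of how many bundles she received. I would then split on termination. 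If $M_0$ empties first, the leftover $M_1$-items are distributed to agents valuing them at $1$, which only raises $\ell_i$ and cannot violate the bound, so $\ESW{}=1$.

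The delicate case is $M_1$ emptying first with $M_0$ nonempty. Here every executed iteration must have removed exactly $t$ zeros, so the leftover is $|M_0|-t(|M_1|-n)$, which the feasibility bound $|M_0|\le t|M_1|-n$ forces to be at most $n(t-1)$. Since each agent keeps slack exactly $t-1$, partitioning these leftover zeros into blocks $B_i$ with $|B_i|\le t-1$ is both possible and keeps every bundle at value $1$ by \cref{lem:eswhalf}. This leftover-zero count is the step I expect to be the main obstacle: translating the global inequality $|M_0|\le t|M_1|-n$ into the per-agent bound $|B_i|\le t-1$, and checking the uniform-slack invariant truly survives the entire loop. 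Finally, the maximum-cardinality matching is computable in polynomial time and both loops run for at most $m$ iterations, yielding the claimed polynomial running time.
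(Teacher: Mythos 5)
Your proposal is correct and follows essentially the same route as the paper's proof: necessity of the two conditions ($|\mu|=n$ and $|M_0|\le t|M_1|-n$) via \cref{obs:egalbasic} and \cref{lem:eswhalf}, and sufficiency via the invariant that each matching/while-loop step preserves the per-agent bound of \cref{lem:eswhalf}. If anything, your explicit accounting of the leftover zeros ($|M_0|-t(|M_1|-n)\le n(t-1)$, so the final partition into blocks of size at most $t-1$ is feasible) spells out a step that the paper's proof leaves implicit.
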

	
	\begin{proof}
		We now show that given an instance with binary goods $I=\langle N,M,v, \tau=\frac{t}{t+1}\rangle$, \cref{alg:ESW-unbal} finds an allocation with \ESW{} of $1$ whenever it exists.  
		
		Let $\mu$, $M_0$ and $M_1$ be as defined in \cref{alg:ESW-unbal} by step \ref{step:defn}.  We shall now show that whenever an allocation of \ESW{} $1$ exists, \cref{alg:ESW-unbal} will return an allocation with \ESW{} $1$. We first show that when $|\mu|=n$ and $|M_0|\leq t|M_1|-n$, \cref{alg:ESW-unbal} creates an allocation where if agent $i\in N$ receives $\ell_i>1$ items of value $1$ then they receive at most $t\ell_i-1$ items of value $0$. We have that $\ell _i>1$ as $\mu$ matches each agent to an item of value $1$. 
		
		Further, in the while loop, whenever $i$ receives at most $t$ items of value $0$ from $M_0$, they are accompanied with one item of value $1$. After the while loop, an additional $t-1$ items of value $0$ may be  allocated to $i$. As a result, from \cref{lem:eswhalf} $v_i(A_i)=1$ in this case.
		
		Consequently, when $|\mu|=n$ and $|M_0|\leq t|M_1|-n$, we have that \cref{alg:ESW-unbal} finds an allocation with $\ESW{}(A)=1$.
		
		Conversely, assume that an allocation $A^*$ exists s.t. $\ESW{}(A^*)=1$. We show that it must hold that $|\mu|=n$ and $|M_0|\leq t|M_1|-n$. Now, clearly each agent $i$ must receive at least one good of value $1$, thus we have that $|\mu|=n$. 
		
		Now let $A^*_{i,0}$ and $A^*_{i,1}$ respectively denote the $0$ and $1$ valued items $i$ is allocated under $A^*$. We have that $M_0\subseteq \cup_{i\in N} A^*_{i,0}$ and $\cup_{i\in N} A^*_{i,1}\leq M_1$. 
		
		As $v_i(A^*_i)=1$, from \cref{lem:eswhalf}, we have that $|A_{i,0}^*|\leq t|A^*_{i,1}|-1$. Consequently, we have that 
		\[|M_0|\leq \sum_i |A_{i,0}^*|\leq \sum_i t|A^*_{i,1}|-1\leq t|M_1| - n  \]
		
		Hence, we have that the necessary conditions will be satisfied and \cref{alg:ESW-unbal} will return an allocation of \ESW{} $1$. As a result, \cref{alg:ESW-unbal} will return an allocation with \ESW{} $1$ if and only if one exists.
	\end{proof}

	We can extend this idea to the setting of $\tau_i=\sfrac{1}{3}$ for all agents. Here, we need to pick agents and items a lot more carefully. 
	
	\paragraph{$\sfrac{1}{3}$ quantile.} We now consider the case where $\tau=\sfrac{1}{3}$. To this end, we begin with the following simple observation, analogous to \cref{lem:eswhalf}.
	
	\begin{observation}\label{obs:third}
		For an agent $i\in N$ with $\tau_i=\sfrac{1}{3}$, a bundle $B\subseteq M$ with exactly $\ell $ items of value $0$ for $i$, we have that $v_i(B)=1$ if and only if the number of $1$ valued items in $B$ is at least $2\ell +1$.
	\end{observation}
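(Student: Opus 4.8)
The plan is to apply the definition of quantile valuations directly to the binary bundle $B$, exactly as in the proof of \cref{lem:eswhalf}. First I would fix notation: let $\ell$ be the number of $0$-valued items in $B$ (as in the statement) and let $r$ be the number of $1$-valued items, so $|B| = \ell + r$. Since values are binary, sorting the items of $B$ in non-decreasing order of value places all $\ell$ zero-valued items first, followed by the $r$ one-valued items. Because $\tau_i = \sfrac{1}{3} > 0$, the definition gives that $v_i(B)$ is the value of the item in position $p = \lceil \tau_i |B| \rceil = \lceil (\ell + r)/3 \rceil$ of this sorted order.

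Next I would observe that $v_i(B) = 1$ holds precisely when this index falls among the one-valued items, i.e.\ when $p > \ell$. Since $\lceil x \rceil > \ell$ is equivalent to $x > \ell$ for any real $x$ and integer $\ell$, the condition $\lceil (\ell+r)/3 \rceil > \ell$ simplifies to $(\ell + r)/3 > \ell$, that is, $r > 2\ell$. As $r$ is an integer, this is exactly $r \geq 2\ell + 1$, which yields both directions of the claimed equivalence simultaneously. If I instead wanted to match the boundary-case style of \cref{lem:eswhalf}, it suffices to compute the quantile position at $r = 2\ell$ and at $r = 2\ell + 1$: the former gives position $\lceil 3\ell/3 \rceil = \ell$, landing on a $0$, while the latter gives $\lceil (3\ell+1)/3 \rceil = \ell + 1$, landing on a $1$; monotonicity of the value in $r$ (for fixed $\ell$) then extends the conclusion to all $r \geq 2\ell + 1$.

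There is essentially no substantive obstacle here, as the statement is a direct consequence of the definition. The only point requiring care is the handling of the ceiling together with the strict-versus-nonstrict inequality, namely confirming that $r = 2\ell$ produces value $0$ while $r = 2\ell + 1$ produces value $1$, and verifying that the $\tau_i > 0$ branch of the quantile definition is the one in force (so that the index is $\lceil |B|/3 \rceil$ rather than the least-valued item).
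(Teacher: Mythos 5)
Your proof is correct and follows essentially the same route as the paper: apply the quantile definition to the sorted binary bundle, locate the index $\lceil (\ell+r)/3\rceil$, and decide whether it lands on a $0$- or $1$-valued item (the paper's own argument checks exactly the boundary counts $r=2\ell$ and $r=2\ell+1$ that you compute in your alternative phrasing). Your main line, reducing $\lceil (\ell+r)/3\rceil > \ell$ to $r \geq 2\ell+1$ via the integer-ceiling equivalence, is if anything a slightly cleaner packaging, since it yields both directions at once without a separate monotonicity appeal.
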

	Thus, when $M_0$ and $M_1$ are as in \cref{alg:ESW-unbal}, we need two items from $M_1$ to offset one from $M_0$. We can now build \cref{alg:ESW-unbalthird} where we need to check if we can satisfy both \cref{obs:egalbasic,obs:third}.

	\begin{algorithm}[t]
		\KwIn{$I=\langle N,M, v,\tau \rangle$ with binary goods and $\tau=\sfrac{1}{3}$ }
		\KwOut{An allocation $A$}
		Let $M_0=\{g\in M|v_i(g)=0$ for all $i\in N\}$\;
		Let $M_1=M\setminus M_0$\;
		Create graph $G=(X \cup Y, E)$ where $X$ contains $x_g$ for each $g\in M_1$, 
		$Y$ contains $y_i$ for each $i\in N$ and $(x_g,y_i)\in E$ only if $v_i(g)=1$ and 
		$(x_g,x_{g'})\in E$ only if there exists $i\in N$ s.t. $v_i(g)=v_i(g')=1$\;
		Define edge weight function $w$ where $w(x,y)=|X\cup Y|+1$ and $w(y,y')=1$\;
		Let $\mu$ be a maximum weight matching in $G_2$\; 
		
		\eIf{$w(\mu)<|M_0|+n(|X\cup Y|+1)$}{
			Let $A$ be an arbitrary allocation\;
		}
		{
			Let $A=(A_1,\cdots, A_n)$ be such that $A_i=\{g|(x_g,y_i)\in \mu\}$\;
			$M_1\gets M_1\setminus \cup_i A_i$\;
			\While{$M_0\neq \emptyset$}{
				Arbitrarily pick $g_0\in M_0$ and $g,g'$ s.t. $(x_{g},x_{g'})\in \mu$\;
				Pick $i\in N$ s.t. $v_i(g)=v_i(g')=1$\; 
				$A_i\gets A_i\cup \{g_0,g,g'\}$\;
				$M_1\gets M_1\setminus \{g,g'\}$ and 
				$M_0\gets M_0 \setminus \{g_0\}$\;
			}
			\If{$M_1\neq \emptyset$}{
				Let $B_1\cdots B_n$ be an arbitrary partition of $M_1$ s.t. $g\in|B_i|$ only if $v_i(g)=1$\;
			}
			For each $i\in N$, set $A_i\gets A_i \cup B_i$\;
			
		}
		\textbf{Return} $A$\;    
		
		\caption{Max \ESW{} binary goods for $\tau=\sfrac{1}{3}$}\label{alg:ESW-unbalthird}
	\end{algorithm}

	\begin{restatable}{proposition}{ESWunbalthird}\label{prop:ESWunbalthird}
		Given $I=\langle N,M,v,\tau=\sfrac{1}{3}\rangle$, \cref{alg:ESW-unbalthird} returns a maximum \ESW{} allocation in polynomial time.
	\end{restatable}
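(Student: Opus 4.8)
The plan is to prove the claim for binary goods; the general case then follows from \cref{lem:binRednESW}. For binary goods the maximum possible \ESW{} is $1$, so it suffices to show that \cref{alg:ESW-unbalthird} returns an allocation with $\ESW{}(A)=1$ exactly when one exists. By \cref{obs:egalbasic} together with \cref{obs:third}, an allocation $A$ satisfies $\ESW{}(A)=1$ if and only if every agent $i$ receiving $\ell_i$ goods of value $0$ also receives at least $2\ell_i+1$ goods of value $1$. The key conceptual step is to read the graph $G$ correctly: an $x$--$y$ edge $(x_g,y_i)$ assigns agent $i$ a single ``base'' good of value $1$ (the $+1$ in $2\ell_i+1$), while an $x$--$x$ edge $(x_g,x_{g'})$ bundles two goods that some common agent values at $1$, and each such pair can ``absorb'' exactly one objective zero from $M_0$ while preserving the $2{:}1$ ratio required by \cref{obs:third}. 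Writing $W=|X\cup Y|+1$, I would first record that any matching contains at most $n$ edges of type $x$--$y$ (one per agent) and that the total weight contributed by $x$--$x$ edges is at most $\lfloor |M_1|/2\rfloor<W$.

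For the forward direction, suppose the algorithm enters the \textbf{else} branch, i.e.\ $w(\mu)\geq |M_0|+nW$. Since the $x$--$x$ edges contribute weight strictly less than $W$ in total, any matching saturating fewer than $n$ agents has weight strictly below $nW\leq |M_0|+nW$; hence $\mu$ must match all $n$ agents, contributing exactly $nW$, and the residual weight $w(\mu)-nW\geq |M_0|$ consists of at least $|M_0|$ pairing edges. I would then track the allocation the algorithm builds: each agent first receives their matched base one, and the while loop repeatedly hands one objective zero $g_0\in M_0$ together with a fresh pair $(g,g')$ to an agent who values both at $1$, which is always possible because at least $|M_0|$ such pairs are available. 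Since every $M_1$ good is always given to an agent who values it $1$ (in the base matching, in the pairs, and in the final leftover distribution), the only value-$0$ goods any agent receives come from $M_0$, and each is accompanied by two value-$1$ goods from its pair. Thus an agent absorbing $p_i$ zeros holds at least $1+2p_i$ ones, so by \cref{obs:third} $v_i(A_i)=1$ and $\ESW{}(A)=1$.

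For the converse, I would take any allocation $A^*$ with $\ESW{}(A^*)=1$ and exhibit a matching of weight at least $|M_0|+nW$, forcing the algorithm into the \textbf{else} branch. For each agent $i$, \cref{obs:third} guarantees $s_i^*\geq 2\ell_i^*+1$ value-$1$ goods against $\ell_i^*$ value-$0$ goods; I would match $y_i$ to one of these ones (an $x$--$y$ edge) and pair up $2\ell_i^*$ of the remaining ones into $\ell_i^*$ pairing edges, all drawn from agent $i$'s own bundle and hence vertex-disjoint across agents. This yields $n$ edges of weight $W$ and $\sum_i\ell_i^*$ pairing edges; since every item of $M_0$ is a value-$0$ good for whoever holds it in $A^*$, we have $\sum_i\ell_i^*\geq |M_0|$, so this matching---and therefore the maximum-weight $\mu$---has weight at least $nW+|M_0|$.

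Correctness of the running time is routine: $G$ has $O(m+n)$ vertices and $O(m^2)$ edges, a maximum-weight matching in a general graph is computable in polynomial time (e.g.\ Edmonds' blossom algorithm), and the while loop executes at most $|M_0|\leq m$ times. I expect the main obstacle to be the forward direction's weight-scaling argument: one must argue precisely that the large weight $W$ on $x$--$y$ edges forces \emph{all} agents to be matched before any pairing weight is counted, and simultaneously that each counted pair is genuinely realizable as two ones a single agent values, so that pairs translate faithfully into absorbed zeros without ever double-using an item. This bookkeeping between the matching structure and the $2{:}1$ ratio of \cref{obs:third} is where the care is needed.
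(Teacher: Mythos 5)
Your proof is correct and follows essentially the same route as the paper's: the same weight threshold $|M_0|+n(|X\cup Y|+1)$, the same two-direction analysis of the maximum-weight matching, and the same construction of a witness matching (one agent--item edge plus $\ell_i^*$ pairing edges per agent) from an allocation with \ESW{} of $1$. If anything, your forward direction is more careful than the paper's, which simply asserts that $\mu$ matches every agent and never verifies that $\mu$ contains enough pairing edges for the while loop to proceed; your weight-scaling argument (pairing edges contribute at most $\lfloor |M_1|/2\rfloor < |X\cup Y|+1$ in total, so a matching missing an agent falls strictly below $n(|X\cup Y|+1)$, and the residual weight yields at least $|M_0|$ disjoint pairs) supplies exactly those missing justifications.
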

	
	\begin{proof}
		We now show that given an instance with binary goods $I=\langle N,M,v, \tau=\sfrac{1}{3}\rangle$, \cref{alg:ESW-unbalthird} finds an allocation with \ESW{} of $1$ whenever it exists.  
		
		Let $\mu$, $M_0$ and $M_1$ be as initially defined in \cref{alg:ESW-unbalthird}.  We shall now show that whenever an allocation of \ESW{} $1$ exists, \cref{alg:ESW-unbal} will return an allocation with \ESW{} $1$. We first show that when $w(\mu)\geq|M_0|+n(|X\cup Y|+1)$, \cref{alg:ESW-unbalthird} creates an allocation where if agent $i\in N$ receives $\ell_i$ items of value $0$, then they receive at least $2\ell_i+1$ items of value $1$. 
		
		First, $\mu$ matches each agent to one item of value $1$, so for agents with $\ell_i=0$, the requirement is satisfied.     
		Further, in the while loop, whenever $i$ receives two items of value $1$ for every item from $M_0$. After the while loop,  only items of value $1$ may be  allocated to $i$. As a result, $v_i(A_i)=1$ in this case.
		
		Consequently, when $w(\mu)\geq|M_0|+n(|X\cup Y|+1)$, we have that \cref{alg:ESW-unbalthird} finds an allocation with $\ESW{}(A)=1$.
		
		Conversely, assume that an allocation $A^*$ exists s.t. $\ESW{}(A^*)=1$. 
		Now let $A^*_{i,0}$ and $A^*_{i,1}$ respectively denote the $0$ and $1$ valued items $i$ is allocated under $A^*$. We have that $M_0\subseteq \cup_{i\in N} A^*_{i,0}$ and $\cup_{i\in N} A^*_{i,1}\leq M_1$. 
		
		As $v_i(A^*_i)=1$, from \cref{lem:eswhalf}, we have that $|A_{i,1}^*|\geq 2|A^*_{i,0}|+1$. Consequently, we have can build a matching $\mu'$  in $G_2$ matching $|A^*_{i,0}|$ pairs of items from $A^*_{i,1}$ to each other and one additional item to $i$. Now as $\mu_2$ is a maximum weight matching in $G_2$, it must have weight at least  
		\begin{align*}
			w(\mu_2) &\geq w(\mu')\\
			&\geq \sum_i (|A^*_{i,0}| +  |X\cup Y|+1)\\
			&= n(|X\cup Y|+1)+\sum_i |A^*_{i,0}| \\
			&\geq  n(|X\cup Y|+1)+|M_0|.
		\end{align*}
		
		Hence, we have that the necessary condition will be satisfied and \cref{alg:ESW-unbal} will return an allocation of \ESW{} $1$. As a result, \cref{alg:ESW-unbal} will return an allocation with \ESW{} $1$ if and only if one exists.
	\end{proof}

	We can now summarize our tractability results for maximum \ESW{} over all allocations as follows. 
	
	\begin{theorem}\label{thm:ESWunbal}
		A maximum \ESW{} allocation can be found in polynomial time for $\tau=\{0,\sfrac{1}{3},1\}\cup \{\frac{t}{t+1}|t\in \mathbb{Z}_+\}$.
	\end{theorem}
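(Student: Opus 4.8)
\textbf{Proof plan for \cref{thm:ESWunbal}.}
The plan is to assemble the theorem from the individual tractability results already established for each quantile in the claimed set, invoking \cref{lem:binRednESW} to reduce from general non-negative valuations to binary valuations. Recall that \cref{lem:binRednESW} guarantees that a polynomial-time algorithm for maximizing \ESW{} over binary goods yields a polynomial-time algorithm for the general problem, at the cost of at most $mn$ calls (one per candidate threshold $\nu$ taken from the distinct item values). Hence it suffices to exhibit, for each $\tau$ in the stated set, a polynomial-time algorithm that decides whether an allocation with \ESW{} equal to $1$ exists under binary goods and constructs one when it does.

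First I would dispatch the three boundary and special cases. For $\tau = 0$, \cref{obs:egalbasic} applies directly: an allocation has \ESW{} $1$ if and only if every agent can simultaneously be assigned at least one item of value $1$, which is a single maximum-cardinality bipartite matching check exactly as in \cref{alg:balESW} but without the balancedness constraint. For $\tau = 1$, each agent needs only one good of value $1$ and may absorb any number of $0$-valued leftovers without penalty (the top quantile ignores low items entirely), so the same matching test works. For $\tau = \sfrac{1}{3}$, I would cite \cref{prop:ESWunbalthird}, which shows \cref{alg:ESW-unbalthird} solves the binary case in polynomial time via the weighted matching $G_2$ that enforces \cref{obs:egalbasic} together with the two-for-one offset condition of \cref{obs:third}.

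Next I would cover the infinite family $\tau = \frac{t}{t+1}$ for $t \in \mathbb{Z}_+$, which subsumes $\tau = \sfrac{1}{2}, \sfrac{2}{3}, \sfrac{3}{4}, \sfrac{9}{10}$, by appealing to \cref{prop:ESWunbal}: \cref{alg:ESW-unbal} decides and constructs an \ESW{}-$1$ binary allocation in polynomial time, using \cref{lem:eswhalf} to certify that the global counting conditions $|\mu| = n$ and $|M_0| \le t|M_1| - n$ are both necessary and sufficient. Since each of these subroutines runs in polynomial time and the outer reduction from \cref{lem:binRednESW} multiplies the running time by only a polynomial factor $mn$, the composite algorithm is polynomial for every $\tau$ in $\{0, \sfrac{1}{3}, 1\} \cup \{\frac{t}{t+1} \mid t \in \mathbb{Z}_+\}$.

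The one point requiring a small amount of care, rather than pure citation, is the treatment of $\tau = \sfrac{1}{2}$: the statement lists it in the prose of the results section as tractable, and it must be recovered here as the $t = 1$ instance of the $\frac{t}{t+1}$ family. I would note explicitly that $\sfrac{1}{2} = \frac{1}{2} = \frac{t}{t+1}$ with $t = 1$, so \cref{prop:ESWunbal} already covers it, and that for $t = 1$ \cref{lem:eswhalf} reduces to the familiar condition that a bundle has value $1$ iff at least half (rounding appropriately) of its items have value $1$. Beyond this bookkeeping, there is no genuine obstacle: the theorem is a packaging of the preceding propositions, and the main work has already been discharged in their proofs.
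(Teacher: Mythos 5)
Your overall route is the same as the paper's: \cref{thm:ESWunbal} is indeed just an assembly of \cref{lem:binRednESW} with \cref{prop:ESWunbal} (covering every $\tau=\frac{t}{t+1}$, including $\sfrac{1}{2}$ as $t=1$) and \cref{prop:ESWunbalthird} (covering $\tau=\sfrac{1}{3}$), plus easy arguments for the two extreme quantiles, and your handling of $\tau=1$ and of the $mn$-call overhead from the binary reduction is correct. However, your treatment of $\tau=0$ has a genuine gap. For a pessimistic agent the value of a bundle under binary goods is the \emph{minimum} over its items, so $v_i(A_i)=1$ requires that \emph{every} item in $A_i$ be valued $1$ by $i$, not merely that some item is. \cref{obs:egalbasic} is only a necessary condition here (the paper says as much immediately after stating it), and using it as an equivalence is exactly where your argument breaks: since all $m$ items must be allocated to somebody, a maximum-cardinality matching saturating the agents does not certify $\ESW{}=1$. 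Concretely, take two agents and three items with $v_1(g_1)=v_2(g_2)=1$ and all other values $0$; the matching $\{(1,g_1),(2,g_2)\}$ saturates both agents, yet item $g_3$ is valued $0$ by everyone, so whoever receives it gets bundle value $0$ and no allocation has $\ESW{}=1$.

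The fix is small: for $\tau=0$ the correct characterization is that $\ESW{}=1$ is achievable if and only if (i) a matching saturating all agents with items they value at $1$ exists, \emph{and} (ii) every item is valued $1$ by at least one agent (no ``universal zeros''). Given both, allocate the matched items and then assign each remaining item to an arbitrary agent who values it at $1$; every bundle then consists solely of value-$1$ items, so the minimum is $1$. This mirrors what the paper does explicitly for $\tau=0$ in the chores section (checking for a universal bad), where condition (i) is not needed only because empty bundles incur no cost; for goods both conditions are required. With this repair your proof is complete and coincides with the paper's intended argument.
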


	\subsubsection{Intractability.} We now show that there are several quantile values for which maximizing \ESW{} is APX-hard. Intriguingly, these values interweave between quantile values for which maximizing \ESW{} can be done in polynomial time. We find three ranges of intractability. Namely, for $\tau\in (0,\sfrac{1}{4}]\cup (\sfrac{3}{8},\sfrac{2}{5}]\cup (\sfrac{5}{9},\sfrac{3}{5}]$. 
	For $\tau\in (\sfrac{3}{8},\sfrac{2}{5}]$ or $\tau\in (\sfrac{5}{9},\sfrac{3}{5})$ the ratio of additional items of value $1$ for a new item of value $0$ can vary. We find that deciding between these cases proves to be intractable. 
	We show intractability for binary valuations for each range. Under binary valuations, any $\alpha>0$ approximation on \ESW{} would be an exact algorithm. Consequently, we get that it is NP-hard to have any $\ESW{\alpha}$ algorithm for all $\alpha>0$.
	
	\begin{theorem}\label{thm:ESWunbalNP}
		Given $I=\langle N,M,v,\tau\rangle$, maximizing \ESW{} is APX-hard for $\tau\in (0,\frac{1}{4}]\cup (\sfrac{3}{8},\sfrac{2}{5}]\cup (\sfrac{5}{9},\sfrac{3}{5}]$.
	\end{theorem}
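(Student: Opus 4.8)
The plan is to reduce from \textsc{Exact3Cover} (\textsc{X3C}) in all three regimes, after first passing to binary instances. By \cref{lem:binRednESW} it suffices to show that deciding whether an allocation with $\ESW{}=1$ exists is NP-hard for binary goods at each of these quantiles; since under binary goods the egalitarian optimum is either $1$ or $0$, this simultaneously rules out any multiplicative approximation (a finite-ratio algorithm would have to output a positive allocation exactly when the optimum is $1$, hence decide feasibility). For $\tau>0$ and a bundle with $z$ zeros and $o$ ones, the definition of quantile valuations gives value $1$ exactly when $\lceil \tau(z+o)\rceil > z$, i.e. $o > \frac{1-\tau}{\tau}\, z$; write $\phi_\tau(z)$ for the least feasible $o$. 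First I would record the shape of $\phi_\tau$ in each regime: for $\tau=1/t$ one gets $\phi_\tau(z)=(t-1)z+1$, a \emph{uniform} marginal of $t-1\ge 3$ ones per absorbed zero; for $\tau\in(3/8,2/5]$ and $\tau\in(5/9,3/5]$ the marginal instead \emph{alternates} between two consecutive integers as $z$ grows (e.g. $+2,+1,+2,\dots$), and the half-open endpoints are exactly where these ceiling thresholds switch pattern.

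Next I would split the items into the objective zeros $M_0$ (value $0$ for everyone) and $M_1$, as in \cref{alg:ESW-unbal}, and reformulate feasibility of $\ESW{}=1$ as a resource-distribution problem: every item of $M_0$ must be handed to some agent, an agent receiving $z_i$ zeros must be supplied at least $\phi_\tau(z_i)$ ones from $M_1$ respecting the bipartite who-values-what incidence, and the total number of ones is capped by $|M_1|$. The reduction engineers this so the one-supply is globally \emph{tight}: given an \textsc{X3C} instance with universe of size $3q$ and $3$-sets $C_1,\dots,C_r$, I would create an element-item per universe element, a controlled number of objective zeros, and one agent per set, calibrating the counts through $\phi_\tau$ so that simultaneously meeting every agent's budget inside the tight supply is possible if and only if a size-$q$ exact cover exists. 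Each selected set then corresponds to an agent absorbing its allotted zeros, paid for exactly by the ones associated with its three elements, and disjointness of items enforces exactness of the cover.

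For $\tau=1/t$ the uniform marginal $t-1$ makes each absorbed zero behave like a hyperedge that must be covered by $t-1$ distinct ``coordinate'' ones, which is why the feasibility question is equivalent to a $(t-1)$-dimensional matching; for $t=4$ this is \textsc{X3C} directly, and for larger $t$ I would pad the universe and sets so that the $3$-set structure of \textsc{X3C} embeds into the $(t-1)$-uniform requirement. For the two intervals the reduction exploits the \emph{variable} marginal: an agent can absorb one more zero cheaply (one extra one) or expensively (two extra ones) depending on parity, and the scarcity of ones forces the cheap configurations to line up precisely with a valid cover, so that deciding whether all budgets can be met encodes \textsc{X3C}. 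In each case I would verify the forward direction by exhibiting the allocation induced by an exact cover and the reverse direction by showing any $\ESW{}=1$ allocation yields a cover through the tightness of the counts.

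The main obstacle is the gadget calibration together with the ceiling arithmetic. Concretely, I must choose the number of objective zeros and the $M_1$-incidence so that the totals are tight to the single unit---any slack would let agents be satisfied without a genuine exact cover, collapsing the reduction---and I must verify that $\phi_\tau$ has the claimed (uniform or alternating) form \emph{uniformly across the entire half-open interval}, which is where the delicate half-integer ceiling estimates, and the precise choice of the endpoints $3/8,2/5,5/9,3/5$, come from. Establishing both directions of correctness against this tight count, rather than deriving the thresholds themselves, is the crux of the argument.
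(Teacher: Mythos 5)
Your high-level strategy coincides with the paper's: pass to binary goods via \cref{lem:binRednESW}, use the fact that the binary optimum is either $0$ or $1$, and reduce from \textsc{Exact3Cover} using the threshold arithmetic $o > z\,\frac{1-\tau}{\tau}$. Your computation of $\phi_\tau$ is correct in all three regimes, and your padded-X3C route for $\tau=\sfrac{1}{t}$ is a legitimate alternative to the paper's reduction from $t$-dimensional matching. The genuine gap is that the reduction is never actually built, and the gadget inventory you do commit to would fail. You create only element items, objective zeros, and one agent per set, and you state that a selected agent is ``paid for exactly by the ones associated with its three elements.'' But by your own formula $\phi_{1/4}(1)=4$, so three element items cannot offset even a single zero at $\tau=\sfrac{1}{4}$; worse, since a cover consumes all element items, every agent outside the cover is left with no $1$-valued item at all, so the allocation induced by an exact cover has \ESW{} equal to $0$ and the forward direction already breaks. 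The paper's gadgets avoid both failures by giving each agent \emph{private} set items valued $1$ only by that agent: one private item for $\tau=\sfrac{1}{4}$ (a cover agent holds $3+1=4$ ones against one dummy; a non-cover agent holds just its private item), two private items for $\tau\in(\sfrac{3}{8},\sfrac{2}{5}]$ (cover agents hold $5$ ones against $3$ dummies, non-cover agents $2$ ones against $1$ dummy), and one private item for $\tau\in(\sfrac{5}{9},\sfrac{3}{5}]$ (cover agents hold $4$ ones against $5$ dummies). The dummy counts ($t$, $\ell+2t$, and $\ell+4t$ respectively) are then calibrated so that these configurations exactly exhaust the items.

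Moreover, the step you explicitly defer---``establishing both directions of correctness against this tight count''---is the entire content of the theorem, not a technicality that follows from the thresholds. For instance, in the $(\sfrac{3}{8},\sfrac{2}{5}]$ case the reverse direction rests on the following counting: an agent holding $p_i$ ones can absorb at most $\lfloor p_i/2\rfloor$ zeros when $p_i<5$ and at most $3$ when $p_i=5$; comparing the $\ell+2t$ dummies that must be absorbed with the at most $2\ell+3t$ ones available forces at least $t$ agents to hold all five of their ones together with three dummies, and since element items are allocated only once, the corresponding $t$ sets are disjoint and form an exact cover. Nothing in your write-up carries out this calibration or counting, and the sketch as literally stated (element items as the only ones) cannot be routinely completed into it. What you have is a correct plan at the level of the paper's proof outline, but not a proof.
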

	
	For each interval, we provide separate reduction from X3C, in the same vein as \cref{thm:NP-USW}. Different ranges require different number of items of low and high values, consequently, separate reductions.

	
	\section{Identical Valuations}\label{sec:identical}
	
	We now consider identical valuations, that is all agents have the same quantile $\tau$ and the same valuation function $v$. We defer all omitted proofs to \cref{app:identical}.
	
	
	\subsection{Utilitarian Welfare}

	Maximizing \USW{} remains open for the unconstrained case. A maximum \USW{} balanced allocation can be found in polynomial time for any $\tau\in [0,1]$.  In fact, we have that the same greedy algorithm (\cref{alg:goods-USWbalanced}) that was $\USW{\min (\frac{m}{n}+1,n)}$ that proves to be an exact algorithm in this case.
	
	\begin{restatable}{theorem}{USWbalIdent}\label{thm:USWbalIdent}
		Given an instance with identical valuations $I=\langle N,M,v,\tau\rangle$, \cref{alg:goods-USWbalanced} returns a balanced allocation with maximum \USW{}.
	\end{restatable}
	
	\begin{proof}
		Given $I$ with identical valuation $v$, let the items be such that $v(g_1)\geq v(g_2)\geq\cdots \geq v(g_m)$. Let $A$ be the allocation returned by \cref{alg:goods-USWbalanced}. 
		Recall that the quantiles are identical, and consequently we have that in each round, each agent demands the same sized set $k' =\min(k,k-\lceil \tau k\rceil+1)$  where $k$ is as in \cref{alg:goods-USWbalanced}.  Let $k'= \min(k,k-\lceil \tau k\rceil+1)$.
		
		Let the order in which agents are first assigned their demanded set under \cref{alg:goods-USWbalanced} be $i_1,i_2,\cdots i_n$. Without loss of generality, we may assume that the set demanded by agent $i_t$ is $S_{i_t}=\{g_{(t-1)k'+1},\cdots,g_{tk'}\}$.  Thus, $v(A_{i_t}) =  v(g_{tk'})$ for each $t\in [n]$. 
		
		Let $A^*$  be  a balanced allocation with maximum \USW{}. As agents have identical valuations, we assume without loss of generality that $v(A^*_{1})\geq v(A^*_{2})\geq \cdots \geq v(A^*_{n})$. We now show that we can also assume without loss of generality that $|A^*_{1}|\leq |A_{2}^*|\leq \cdots \leq |A_{n}^*|$. \\
		
		\noindent \textbf{Sorted by size.} Assume for contradiction that there exist $j_1,j_2\in [n]$, s.t. $j_1<j_2$ and $|A_{{j_1}}^*|>|A_{{j_2}}^*|$. As $A^*$ is a balanced allocation, it must hold that $|A_{{j_1}}^*|=\lfloor \frac{m}{n}\rfloor<\lceil \frac{m}{n}\rceil=|A_{{j_2}}^*|$. First, consider the case where $\lceil\tau \lfloor\frac{m}{n}\rfloor\rceil<\lceil \tau \lceil \frac{m}{n}\rceil\rceil$. Choose $g\in \argmin \{v(g)|g\in A_{{j_1}}^*\}$. Observe that $v(A_{{j_1}}^*)\leq v(A_{{j_1}}^*\setminus \{g\})$. Further for the set $A_{{j_{2}}}^*\cup \{g\}$, we have that if $v(g)<v(A_{{j_{2}}}^*)$, we have that $v(A_{{j_{2}}}^*\cup \{g\})=v(A_{{j_{2}}}^*)$. Otherwise, if $v(g)\geq A_{{j_{2}}}^*$, the value of the new set $$v(A^*_{{j_2}}\cup \{g\})\geq \min (v(g),v(g_{\lceil \tau\lfloor\frac{m}{n}\rfloor\rceil+1}))\geq v((g_{\lceil \tau\lfloor\frac{m}{n}\rfloor\rceil}))=v(A_{{j_2}}^*)$$ where $g_1,\cdots, g_{\lfloor\frac{m}{n}\rfloor }$ are the items in $A_{i_{j_2}}$ s.t. $v(g_1)\leq \cdots v(g_{\lfloor\frac{m}{n}\rfloor })$.

		The remaining case is when $\lceil\tau \lceil\frac{m}{n}\rceil\rceil=\lceil\tau \lfloor\frac{m}{n}\rfloor\lceil$. In this case, it holds that $\lceil\tau \lceil\frac{m}{n}\rceil\rceil=\lceil\tau \lfloor\frac{m}{n}\rfloor\lceil<\lceil\frac{m}{n}\rceil$. Here, choose $g\in \argmax \{v(g)|g\in A_{{j_1}}^*\}$. Observe that $v(A_{{j_1}}^*)=v(A_{{j_1}}^*\setminus\{g\})$ in this case. Further, as $v(g)\geq v(A_{i_{j_1}}^*)\geq v(A_{{j_2}}^*)$, we have that $v(A_{{j_2}}^*\cup \{g\})=v(A_{{j_2}}^*)$ as the $\lceil \tau \lfloor \frac{m}{n}\rfloor\rceil^{\text{th}}$ smallest item is the same in both sets. As a result, we can also assume that  $|A^*_{1}|\leq |A_{2}^*|\leq \cdots \leq |A_{n}^*|$. Consequently, as \cref{alg:goods-USWbalanced} first allocates sets of size $\lfloor \frac{m}{n}\rfloor$, we have that for each $t\in [n]$, $|A_{i_t}|=|A^*_t|$.  We now prove correctness of our algorithm.\\
		
		\noindent \textbf{Correctness of algorithm. } We now show that  $v(A_{i_t})\geq v(A^*_{t})$ for all  $t\in [n]$. For each $t\in [n]$, choose $k_t$ to be the value of $k'$ in the $t^{\text{th}}$ iteration of the while-loop. That is, $k_t$ is the size of the set demanded by agent $i_t$ when it is allocated an item for the first time. Further, relabel the items s.t. $v(g_1)\geq v(g_2)\geq \cdots \geq v(g_m)$. Consequently, we have that $v(A_{i_t})=v(g_{\sum_{t'=1}^t k_{t'}})$. 
		
		Suppose, for the sake of contradiction, that there exists $\ell\in [n]$ such that $v(A_{i_\ell})< v(A^*_{\ell})$. Specifically, let $\ell$ be the smallest such index. It follows that, we have  $v(A^*_{\ell})> v(g_{\sum_{t=1}^\ell k_t})$.   Since $v(A^*_{i_1})\geq \cdots \geq v(A^*_{i_{\ell-1}})\geq v(A^*_{i_\ell})$, we see that  in $A^*$ the number of agents who get  value strictly higher than $v(g_{\sum_{t=1}^\ell k_t})$ is at least $\ell$.  
		
		However, as  $|A^*_t | =|A_{i_t}|$ for all $t\in [n]$, for each $t\leq \ell$, $A^*_t$ contains at least $k_t$ items of value at least $v(g_{\sum_{t=1}^\ell k_t})$. This implies  that  the number of items value strictly more that $v(g_{\sum_{t=1}^\ell k_t})$ is at most $ ({\sum_{t=1}^\ell k_t})-1 $. Consequently, it must hold that $v(A^*_{\ell})\leq v(g_{\sum_{t=1}^\ell k_t})=v(A_{i_t})$. This contradicts the assumption that $v(A^*_{\ell})>v(A_{i_{\ell}})$.  
		
		Consequently, for each $t\in [n]$, we must have that $v(A_{i_t})\geq v(A^*_{i_t})$. Hence, $\USW{}(A)\geq \USW{}(A^*)$.
	\end{proof}
	
	\subsection{Egalitarian Welfare}
	We again focus our attention to binary goods, as a consequence of \cref{lem:binRednESW}. The balanced case is already covered by \cref{thm:balESW}. We find that a maximum \ESW{} allocation in the unconstrained case can be found in polynomial time. To this end, we make the following observation.
	
	\begin{observation}\label{obs:stepfn}
		Given $\tau\in (0,1]$ and a $\tau$-quantile binary valuation function $v$ and bundle $B\subset M$, let $B_0=\{g\in B \ : \ v(B)=0\}$. We have that $v(g)=1 \Leftrightarrow |B|> \frac{|B_0|}{\tau}$.
	\end{observation}
	
	We use this to show in order to maximize \ESW{} under identical valuations, it suffices to consider allocations where the number of items of value $0$ is balanced across agents.
	
	\begin{lemma}\label{lem:ESWidentical}
		Given instance $I=\langle N,M,v, \tau\rangle$ with identical binary valuations, let $B_1=|\{g\in M|v(g)=1\}|$. There exists $A'$ s.t. $\ESW{}(A')=1$ only if there exists an $A$ where  $|A\setminus B_1|\leq\lceil (m-|B_1|)/n \rceil$ and $v(A_i)=1$ for each $i\in N$. 
	\end{lemma}
	
	\begin{proof}
		Let there exist an allocation $A=(A_1,\cdots, A_n)$ s.t. $v(A_i)=1$ for all $i\in N$. Let $t_i$ denote the number of goods of value $0$ in $A_i$. 
		If $\tau=0$, then if $r\neq m$, no allocation can exist where all agents get value $1$. It follows that any allocation with \ESW{} of 1, satisfies the required property.  
		
		Now consider the case where $\tau>0$. Let there exist an agent $j$ s.t $t_j>\lceil \frac{m-r}{n} \rceil$. Then there must exist an agent $j'$ s.t. $t_j -t_{j'}\geq 2$. Thus, $t_{j'}<\lfloor \frac{m-r}{n} \rfloor$.  
		We shall show that there exists an allocation $A'$ where $j$ gets $t_j-1$ items of value $0$, and $j'$ gets $t_{j'}+1$ items of value $0$ and $\ESW{}(A)=1$.
		
		Consider $A'$ where $A'_i=A_i$ for all $i\neq j, j'$. We shall now transfer one item of value $0$ and just enough items of value $1$ from $j$ to $j'$ to get the required allocation.  From \cref{obs:stepfn}, for $t_{j'}+1$ items of value $0$, in order for $v(A'_{j'})=1$, it must be the case that  $|A'_{j'}|> \frac{t_{j'}+1}{\tau}$.
		If $|A_{j'}|+1>\frac{t_{j'}+1}{\tau}$, we let $A'_{j'}=A_{j'}\cup g_0$ such that $v(g_0)=0$ and $g_0\in A_j$. Finally, let  $A'_j=A_j \setminus g_0$, observe that $v(A'_j)=v(A'_{j'})=1$.
		
		Suppose $|A_{j'}|+1\leq \frac{t_{j'}+1}{\tau}$.
		Now as $v(A_{j'})=1$, by \cref{obs:stepfn}, it must be that $|A_{j'}|>\frac{t_{j'}}{\tau}$.  We shall show that there are enough goods of value $1$ in $A_j$ that can be transferred while maintaining the values of both bundles. 
		
		Choose $\ell=\lceil\frac{1}{\tau}\rceil-1$ goods $\{g_1,..., g_\ell\}$ of value $1$ and $g_0$ of $v(g_0)=0$ from  $A_j$.   Let $A'_{j'}= A_{j'} \cup \{g_0,g_1,..., g_\ell\} $, and $A'_j=A_j \setminus \{g_0,g_1,..., g_\ell\} $. 
		Observe that as $v(A_{j'})=1$, we have that  $|A_{j'}|> \frac{t_{j'}}{\tau}$, it follows that
		
		$$|A'_{j'}|=|A_{j'}|+1+\ell \geq |A_{j'}|+\frac{1}{\tau}> \frac{t_j'+1}{\tau}. $$
		Thus, we have $v(A'_{j'})=1 $. We now show that $v(A'_j)=1$.
		By assumption, $v(A_j)=1$. Consequently, we have that $\lceil \tau |A_j|\rceil \geq t_j+1$.  Now consider $A'_j$, we need $\lceil \tau (|A_j|-(1+\ell))\rceil\geq t_j $. Consider
		
		\begin{align*}
			\lceil \tau |A'_j|\rceil&=\lceil\tau (|A_j|-(1+\ell))\rceil\\   
			&\geq \lceil \tau (|A_j| - \frac{1}{\tau})\rceil = \lceil \tau |A_j|  -   1            \rceil =\lceil \tau|A_j|\rceil -1 \\
			&\geq t_j+1 -1=t_j.
		\end{align*}
		
		Hence, we have that $\ESW{}(A')=1$. Consequently, we can repeat this procedure till each agent has at most $\lceil \frac{m-r}{n}\rceil$ goods of value $0$. As a result, whenever an allocation exists s.t. $\ESW{}(A)=1$, there must exist an allocation $A'$, s.t. $\ESW{}(A')=1$ and each agent receives either $\lfloor (m-r)/n\rfloor$ or $\lceil (m-n)/r\rceil$ items of value $0$.
	\end{proof}
	This result proves useful in maximizing \ESW{} in the unconstrained setting and even leads to an algorithm for maximum \USW{} for binary goods.
	\begin{restatable}{theorem}{ESWident}\label{thm:ESWident}
		Given instance $I=\langle N,M,v, \tau\rangle$ with identical valuations, an allocation with maximum \ESW{} can be found in polynomial time. 
		
	\end{restatable}
	
	\begin{proof}
		
		Given $I$, as a consequence of \cref{lem:binRednESW}, we assume $I$ is an instance with binary goods.  
		
		First, consider the set of items of value $0$, that is, $M_0=\{g\in M|v(g)=0\}$. From \cref{lem:ESWidentical}, we know that it is sufficient to consider only allocations where items in $M_0$ are distributed uniformly. Let $t=|M_0|-n\lfloor\frac{|M_0|}{n}\rfloor$. That is, $t$ is the number of agents who need to receive more than $\lfloor\frac{|M_0|}{n}\rfloor$ items from $M_0$.
		
		Now, consider the set of items of items of value $1$, that is $M_1=M\setminus M_0$. If an agent receives $\ell$ items from $M_0$, by \cref{obs:stepfn}, we can easily calculate the minimum bundle size $k$ s.t. $\lceil \tau k \rceil >\ell$. In particular, $k=\min \{k'\in \mathbb{Z}_+|k>\frac{\ell}{\tau}\}$. Consequently, let $k_1=\min \{k'\in \mathbb{Z}_+|k>\frac{\lfloor |M_0|/n\rfloor}{\tau}\}-\lfloor |M_0|/n\rfloor$ and $k_2=\min \{k'\in \mathbb{Z}_+|k>\frac{\lceil |M_0|/n\rceil}{\tau}\}-\lceil |M_0|/n\rceil$.
		
		As a result, an agent receiving $\lfloor |M_0|/n\rfloor$ items from $M_0$ requires at least $k_1$ items from $M_1$ to have value $1$. Analogously, an agent receiving $\lceil |M_0|/n\rceil$ items from $M_0$ requires at least $k_2$ items from $M_1$ to have value $1$.  Thus, an allocation of \ESW{} $1$ exists, if and only if $|M_1|\geq tk_2+(n-t)k_1$. 
		
		This can easily be checked and an appropriate allocation can accordingly be built. Thus, a maximum \ESW{} allocation can be found in polynomial time. 
	\end{proof}
	
	\begin{restatable}{proposition}{binUSWIdent}\label{prop:binUSWIdent}
		Given instance $I=\langle N,M,v, \tau\rangle$ with identical valuations over binary goods, an allocation with maximum \USW{} can be found in polynomial time.
	\end{restatable}
	
	\begin{proof}
		Let $M_0$ and $M_1$ be the sets of items of value $0$ and $1$, respectively.
		
		Observe that with binary goods, an allocation with \USW{} $n$ exists, if and only if an allocation with \ESW{} $1$ exists. Thus, we first find a maximum \ESW{} allocation. If it has \ESW{} $1$, it must have \USW{} $n$.  
		
		If an \ESW{} $1$ allocation does not exist, we can check if $|M_1|>n-1$. If so we can give $n-1$ agents exactly one item each from $M_1$ and give all remaining items in $M_1$ and $M_0$ to the remaining agent.
		
		Finally, if $|M_1|\leq n-1$, we can give $|M_1|$ agents one item each from $M_1$ and the items in $M_0$ are distributed arbitrarily among the remaining agents. In this case, no agent can get an item from both $M_1$ and $M_0$.
		
		It is easy to see that this can be done in polynomial time. 
	\end{proof}

	
	\section{Conclusions and Future Work}
	In this work,  we proposed a novel quantile-based preference model  in the context of indivisible item allocation.
	We studied {\em Utilitarian} and {\em Egalitarian Welfare}, both with and without the balanced allocation requirement, and provided comprehensive algorithmic and complexity-theoretic results.
	Interestingly, our results reveal that the complexity of the problems changes significantly depending on whether the balancedness requirement is imposed. For instance, for  balanced allocations there is a strong hardness of approximation bound for maximizing \USW{}, whereas for unconstrained allocations, a near-exact approximation algorithm exists.  A similar phenomenon occurs with  \ESW{} but in reverse: for balanced allocations, maximizing \ESW{} can be solved efficiently, while for unconstrained allocations, maximizing \ESW{} is APX-hard  for many quantile values. Further, we observed a simple intractability versus tractability threshold for utilitarian welfare under both balanced and unconstrained allocations. Meanwhile, for Egalitarian welfare, there is no simple threshold separating tractability and intractability under the unconstrained case.  
	
	Our work opens up several promising directions for future research. Firstly, while we focused on the two extremes of the p-means (Utilitarian and Egalitarian welfare), exploring other welfare functions, such as Nash welfare, presents an intriguing avenue for study. There are a variety of open questions here such as whether Nash welfare continues to satisfy the strong fairness properties it does under monotone valuations. Further, achieving simultaneous guarantees across the family of p-means welfare functions is also a well-motivated direction. Another relevant question here is to understand the welfare guarantees provided by strategyproof mechanisms under quantile-valuations.   
	
	Secondly, while we focus on Rawlsian fairness, other envy-based fairness notions like envy-freeness, proportionality, maximin share remain open. Specifically, investigating the compatibility between fairness notions, such as EF1 or EFx, and Pareto efficiency within the framework of our valuation class is another interesting direction of further research. 
	
	Another complementary model would be studying quantile valuations with chores or mixtures of goods and chores. In fact, quantile-based preferences can also be defined for divisible goods, and thus the investigation carried out in this paper as well as the other proposed directions of further work can also be carried out for the divisible setting, for settings with mixtures of divisible and indivisible items as well as for the more recently studied setting of items with subjective divisibility. Similarly, new valuation models can also be built combining quantiles with the more typically studied valuation classes. Therefore, we believe that our paper can lead to many worthwhile avenues of future work.
	
	\section*{Acknowledgements}
	
	Haris Aziz and Shivika Narang are supported by the NSF-CSIRO project on “Fair Sequential Collective Decision-Making”. Mashbat Suzuki is supported by the ARC Laureate Project FL200100204 on “Trustworthy AI”.
	
	\bibliographystyle{plainnat}
	\bibliography{references.bib}
	
	\clearpage
	\appendix
	

	\section{Omitted Proofs from Section 4}\label{app:unbalanced}
	
	\subsubsection{Intractability.}
	
	\begin{lemma}
		Given an instance with binary goods $I=\langle N,M,v,\tau\in (\sfrac{1}{5},\sfrac{1}{4}]\rangle$, maximizing \ESW{} is NP-hard.
	\end{lemma}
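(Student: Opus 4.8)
The plan is to reduce from \textup{\textsc{Exact3Cover}} (X3C), exploiting the fact that for the quantile $\tau=\sfrac{1}{4}$ the ``offset ratio'' between $0$- and $1$-valued goods is exactly three, which is precisely what lets us encode a three-dimensional (exact) cover. First I would establish the characterization analogous to \cref{obs:third}: for an agent with $\tau_i=\sfrac{1}{4}$ and binary values, a bundle $B$ with $\ell$ goods of value $0$ and $r$ goods of value $1$ has $v_i(B)=1$ if and only if $r\geq 3\ell+1$. This is immediate from \cref{obs:basicBinVal}, since $v_i(B)=1$ iff the number of $0$-valued goods is at most $\lceil |B|/4\rceil-1$, and $\ell\leq \lceil(\ell+r)/4\rceil-1$ rearranges to $r>3\ell$, i.e.\ $r\geq 3\ell+1$. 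Thus absorbing each additional $0$-valued good costs three extra $1$-valued goods, on top of a single baseline $1$-valued good.

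Given an X3C instance $\langle \mathcal{E},\mathcal{S}\rangle$ with $|\mathcal{E}|=3t$ and $\mathcal{S}=\{S_1,\dots,S_\ell\}$, I would construct the following binary instance with $\tau=\sfrac{1}{4}$. Create one agent $a_j$ for each set $S_j$, and three kinds of goods: a \emph{private} good $p_j$ valued $1$ only by $a_j$; an \emph{element} good $g_e$ for each $e\in\mathcal{E}$, valued $1$ by $a_j$ exactly when $e\in S_j$; and $t$ \emph{bad} goods $b_1,\dots,b_t$ valued $0$ by everyone. The private good supplies the baseline ``$+1$'' of the characterization, while the three element goods of a set supply exactly the three $1$-valued goods needed to offset one bad good.

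The forward direction is routine: an exact cover $\{S_{j_1},\dots,S_{j_t}\}$ yields an \ESW{}$\,=1$ allocation by giving each covering agent $a_{j_p}$ its private good, the three element goods of $S_{j_p}$, and one bad good (a bundle with four $1$s and one $0$, hence value $1$), and giving every other agent just its private good. The reverse direction is the crux and the main obstacle: I must show any allocation with \ESW{}$\,=1$ forces an exact cover. The argument is a tight counting. Each agent $a_j$ owns at most $1+3=4$ goods it values at $1$ (its private good plus the three element goods of $S_j$), so by the characterization its bundle can contain at most one $0$-valued good of any kind; in particular it absorbs at most one bad good, and absorbing one forces it to receive \emph{all three} element goods of $S_j$ together with $p_j$. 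Since all $t$ bad goods must be allocated and each agent takes at most one, exactly $t$ agents each absorb one, each grabbing the full element set of a distinct $S_j$; the disjointness of an allocation makes these sets pairwise disjoint, and as all $3t$ element goods are thereby consumed, the chosen sets cover $\mathcal{E}$, giving an exact $3$-cover. The delicate point to get right is precisely that no agent can ``cheat'' by absorbing a bad good with only a partial element set or by borrowing element goods it does not value---which is exactly what the $r\geq 3\ell+1$ characterization, combined with the cap of four $1$-valued goods per agent, rules out.

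Finally I would remark that this template is the $t=4$ case of the general phenomenon behind \cref{thm:ESWunbalNP}: for $\tau=\sfrac{1}{t}$ the offset ratio becomes $t-1$, and the identical gadget---now reducing from $(t-1)$-dimensional matching---yields NP-hardness for every $t\geq 4$.
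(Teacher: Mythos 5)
Your proposal is correct and follows essentially the same route as the paper: a reduction from \textsc{Exact3Cover} with one agent and one private ($1$-valued) good per set, one good per element valued $1$ exactly by the sets containing it, and $t$ universally $0$-valued goods, with the key counting fact that under $\tau=\sfrac{1}{4}$ a bundle with at most four $1$-valued goods can absorb at most one $0$-valued good and only by taking all four. Both directions of your equivalence, and your closing remark on generalizing to $\tau=\sfrac{1}{t}$ via lower-dimensional matching, match the paper's argument and its subsequent corollary.
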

	\begin{proof}
		We shall give a reduction from \textup{\textsc{Exact3Cover}} (X3C) \footnote{See the proof of \cref{thm:NP-USW} for a definition of the exact 3 cover problem}. Given an instance of X3C $\langle \mathcal{U}, \mathcal{S}\rangle$ where $|\mathcal{U}|=3t$ and $|\mathcal{S}|=\ell $, we shall create an instance of our problem as follows: 
		
		For each $S_j\in \mathcal{S}$, create a set agent $i_j$ and a set item $g_j$.
		
		For each element $u\in\mathcal{U}$, we create an element item $g_u$.
		
		Create $t$ dummy items $g'_1,\cdots, g'_t$.
		
		As a result, we have created $\ell $ agents and $\ell +4t$ items. We define agent preferences as follows. For any $i_j\in N$, $v_{i_j}(g')=0$ for any dummy item $g'$. Further, for any $u\in \mathcal{U}$, if $u\in S_j$, set $v_{i_j}(g_u)=1$, otherwise set $v_{i_j}(g_u)=0$. Finally, for any $j'\in [\ell ]$, set $v_{i_j}(g_{j'})=1$ if $j=j'$ otherwise, set $v_{i_j}(g_{j'})=0$. Lastly, choose $\tau_i$ arbitrarily from the range $\in (\sfrac{1}{5},\sfrac{1}{4}]$, for all $i\in N$.
		
		We shall now show that an allocation with \ESW{} of 1 exists if and only if the given X3C instance has an exact 3-cover.
		
		First, assume that $S_{j_1},\cdots, S_{j_t}$ form an exact 3 cover of $\langle \mathcal{U},\mathcal{S}\rangle$. Consider the following allocation $A$
		
		\[A_{i_j}=\begin{cases}
			\{g_j\} & \text{ if } j\notin \{j_1\cdots,j_t\}\\
			\{g_j,g'_p\}\cup \{g_u|u\in S_j\} & \text{ if } j=j_p \text{ for some } p\in [t]
		\end{cases}\]
		
		That is, agents corresponding to sets in the exact 3 cover receive their set item, one dummy item and constituent items. Agents corresponding to sets not in the exact 3 cover only receive their corresponding set item.      
		Firstly, observe that as $S_{j_1},\cdots, S_{j_t}$ form an exact 3 cover of $\langle \mathcal{U},\mathcal{S}\rangle$, $A$ must be a valid allocation, where all items are allocated, and the bundles of agents are disjoint. Now, for any agent $i_j$, where $j\notin \{j_1\cdots,j_t\}$, we have that $A_{i_j}=\{g_j\}$, thus, $v_{i_j}(A_{i_j})=1$.
		
		Further, for $j\in \{j_1\cdots,j_t\}$, $A_{i_j}$ contains one item of value $0$, the dummy item $g'$ and $4$ items of value $1$, the set item and element items. As, $\tau\in (\sfrac{1}{5},\sfrac{1}{4}]$, we have that $v_{i_j}(A_{i_j})=1$. Consequently, $\ESW{}(A)=1$.
		
		Now, conversely, assume that an allocation $A^*$ exists s.t. $\ESW{}(A^*)=1$. That is, for each $i\in N$, $v_i(A^*_i)=1$. 
		
		Observe that for each agent there are exactly four items of value $1$: the corresponding set item and constituent element items. As $\tau_i\in (\sfrac{1}{5},\sfrac{1}{4}]$, $A_i^*$ can contain at most $1$ item of value $0$ for $i$. If it does contain one item of value $0$, all four of the value $1$ items must also be contained in order to ensure $v_i(A_i^*)=1$.  
		
		In particular, as there are $t$ dummy items for which each agent has value $0$, each agent can be allocated at most one dummy item under $A^*$. Let the set of agents who receive one dummy item be $i_{j_1},\cdots,i_{j_t}$. Further, each for $j\in \{ {j_1},\cdots,{j_t}\}$,  we have that $A_{i_j}$ must contain $g_j$ and all three items in $\{g_u|u\in S_j\}$. As a result, $S_{j_1},\cdots, S_{j_t}$ must be mutually disjoint. Consequently, $S_{j_1},\cdots, S_{j_t}$ form an exact 3 cover of $\langle \mathcal{U},\mathcal{S}\rangle$.  
		
		Hence, the problem of finding a maximum \ESW{} allocation is NP-hard for binary goods and $\tau\in (\sfrac{1}{5},\sfrac{1}{4}]$.
	\end{proof}
	
	We can do an analogous reduction from the $t$-dimensional matching problem for $t\geq 3$, to an instance with binary goods and $\tau\in (\frac{1}{t+2},\frac{1}{t+1})$ where 1 item of value $0$ needs to be offset by $t$ items of value $1$ to ensure that the bundle has value $1$ for the corresponding agent.
	\begin{corollary}
		Given $I=\langle N,M,v,\tau\in (0,\sfrac{1}{4}]\rangle$ with binary goods, maximizing \ESW{} is NP-hard.
	\end{corollary}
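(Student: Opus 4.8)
The plan is to lift the $\tau=1/4$ reduction of the preceding lemma to every $\tau=1/t$ with $t\ge 4$ by replacing its source problem (\textsc{Exact3Cover}, i.e.\ $3$-dimensional matching) with $(t-1)$-dimensional matching, which remains NP-hard since $t-1\ge 3$. The single fact I would isolate first is the value-$1$ criterion specialised to $\tau=1/t$: a binary bundle $B$ with $\ell_0$ zeros and $\ell_1$ ones has $v_i(B)=1$ iff $\lceil |B|/t\rceil>\ell_0$, i.e.\ iff $\ell_1>(t-1)\ell_0$. Thus an agent holding no zero needs only one value-$1$ item, an agent holding a single zero needs all $t$ of its value-$1$ items, and an agent holding two or more zeros would need at least $2t-1$ ones. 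This ``one zero costs $t$ ones, two zeros are infeasible with only $t$ owned ones'' behaviour is exactly what drives the reduction.

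Given a $(t-1)$-dimensional matching instance on a $(t-1)$-partite, $(t-1)$-uniform hypergraph with vertex set $V$, $|V|=(t-1)\mu$, and hyperedge family $\mathcal S$, I would build the following binary instance with $\tau=1/t$. Create a set-agent $i_j$ and a set-item $g_j$ for each $S_j\in\mathcal S$, an element-item $g_u$ for each $u\in V$, and $\mu$ universal-zero dummy items. Set $v_{i_j}(g_j)=1$ and $v_{i_j}(g_u)=1$ exactly when $u\in S_j$, with every remaining value (foreign set-items, non-incident element-items, all dummies) equal to $0$; so each agent owns precisely $t$ value-$1$ items, namely $g_j$ together with the $t-1$ element-items of $S_j$. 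I would then prove that an exact matching of $\mu$ hyperedges covering $V$ exists iff some allocation attains $\ESW{}=1$.

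For the forward direction, a perfect matching $S_{j_1},\dots,S_{j_\mu}$ yields the allocation that gives $i_{j_p}$ its set-item, the $t-1$ element-items of $S_{j_p}$, and one dummy ($\ell_0=1,\ \ell_1=t$, hence value $1$), and gives every unmatched agent only its own set-item ($\ell_0=0,\ \ell_1=1$, value $1$); since the matching covers $V$ this allocates all items. The converse is the crux and rests on a global count: by the criterion above each agent can tolerate at most one zero, so the $\mu$ dummies must land on $\mu$ distinct agents, each of which is then forced to hold all $t$ of its owned ones --- in particular the full element-set of its hyperedge. Two such agents cannot share an element-item, so their $\mu$ hyperedges are pairwise disjoint and together consume all $\mu(t-1)=|V|$ element-items, i.e.\ they form an exact matching.

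The main obstacle is precisely this converse counting: I must rule out alternative ways of reaching $\ESW{}=1$, e.g.\ satisfying an agent by handing it a misplaced element-item as its single zero rather than a dummy. This is resolved by observing that the $\mu$ dummy-holders already consume all $|V|$ element-items as their required ones, leaving none to be misplaced; making this bookkeeping watertight (and confirming the at-most-one-zero bound from the ceiling inequality) is the only real work. NP-hardness of $d$-dimensional matching for $d\ge 3$ then gives the claim, and since binary instances have optimal $\ESW{}\in\{0,1\}$ the same reduction rules out any finite multiplicative approximation.
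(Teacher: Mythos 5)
Your proposal is correct and is exactly the paper's intended argument: the paper proves the $\tau=\sfrac{1}{4}$ case by reduction from \textup{\textsc{Exact3Cover}} (set agents with a set item, element items, and universal-zero dummies, so that one zero forces an agent to hold all $t$ of its owned ones) and then states this corollary with a one-line remark that an analogous reduction from higher-dimensional matching works; you have simply carried out that generalization in full, including the correct value criterion $\ell_1>(t-1)\ell_0$ and the counting argument in the converse direction. No gaps.
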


	\paragraph{Intractability with $\tau \in  (\sfrac{3}{8},\sfrac{2}{5}]$.} The main source of intractability in this range of quantiles comes from differing number of value $1$ items that can are needed to offset an additional item of value $0$. Considering only bundles that give value $1$ to an agent,  with four items of value $1$, there can be at most two items of value $0$. However with five items of value $1$ there can be at most three items of value $0$. Thus when the number items which give value $0$ is strictly greater than the number of items that give value $1$ to at least one agent, deciding if an \ESW{} 1 allocation may not be possible with polynomially many greedy decisions. 
	
	\begin{lemma}
		Finding a maximum \ESW{} allocation is NP-hard for $\tau\in (\sfrac{3}{8},\sfrac{2}{5}]$.
	\end{lemma}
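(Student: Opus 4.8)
The plan is to reduce from \textsc{Exact3Cover} (X3C), as for the earlier intractable quantiles, but with a gadget engineered to exploit the irregular way in which value-$1$ items ``pay for'' value-$0$ items on this interval. The first step is to record, in a statement analogous to \cref{obs:third}, the exact absorption capacities for $\tau\in(\sfrac{3}{8},\sfrac{2}{5}]$: for binary valuations a bundle with $o$ items of value $1$ and $z$ items of value $0$ has value $1$ for the agent iff $z\le\kappa(o)$, where $\kappa(1)=0$, $\kappa(2)=\kappa(3)=1$, $\kappa(4)=2$, and $\kappa(5)=3$. These follow from the equivalence $v_i(B)=1\iff z<\tfrac{\tau}{1-\tau}\,o$, and one checks that the listed values are constant across the whole interval, with its endpoints being precisely where they change: $\kappa(5)$ drops from $3$ to $2$ at $\tau=\sfrac{3}{8}$, and $\kappa(3)$ jumps from $1$ to $2$ just above $\tau=\sfrac{2}{5}$. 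The two features that drive the reduction are the \emph{flat step} $\kappa(2)=\kappa(3)$ (the first extra $1$ buys no additional $0$) and the \emph{cheap jump} $\kappa(5)=3$ (the third $0$ costs only one further $1$).

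Given an X3C instance $\langle\mathcal U,\mathcal S\rangle$ with $|\mathcal U|=3t$ and $\mathcal S=\{S_1,\dots,S_\ell\}$, I would build a binary instance with one \emph{set agent} $a_j$ per $S_j$ and the following items: two \emph{private} items per agent, each valued $1$ only by $a_j$; one \emph{element} item $e_u$ per $u\in\mathcal U$, with $v_{a_j}(e_u)=1$ iff $u\in S_j$; and $D=\ell+2t$ \emph{dummy} items that every agent values $0$. Each agent then has at most $5$ items of value $1$ (its two privates plus the three elements of $S_j$), and taking all of them lands exactly on the cheap jump $\kappa(5)=3$. The dummy count $D$ is the critical parameter, chosen so that $\ESW{}=1$ becomes reachable only by the most efficient possible packing of the dummies.

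For the forward direction, from an exact cover $S_{j_1},\dots,S_{j_t}$ I assign each selected agent its two privates, its three elements, and three dummies (value $1$ as $\kappa(5)=3$), and each unselected agent its two privates and one dummy (value $1$ as $\kappa(2)=1$); this places all $3t$ elements and all $3t+(\ell-t)=D$ dummies and yields $\ESW{}=1$. The backward direction is the heart of the proof and is an extremal counting argument. Writing $x_j\in\{0,1,2,3\}$ for the number of $a_j$'s own elements in its bundle, a value-$1$ allocation has $o_j\le 2+x_j$ ones for $a_j$, so its $0$-load is at most $\kappa(2+x_j)=1+g(x_j)$ with $g(0)=g(1)=0$, $g(2)=1$, $g(3)=2$. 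Since all $D$ dummies must be absorbed,
\[
\ell+2t \;=\; D \;\le\; \sum_j \kappa(2+x_j) \;=\; \ell+\sum_j g(x_j)\;\le\; \ell+\tfrac23\sum_j x_j \;\le\; \ell+\tfrac23\cdot 3t \;=\; \ell+2t ,
\]
where the middle inequality uses $g(x)\le\tfrac23 x$ with equality only at $x\in\{0,3\}$, and $\sum_j x_j\le 3t$ as each element lies in at most one bundle. Equality throughout forces every $x_j\in\{0,3\}$ and $\sum_j x_j=3t$, i.e.\ exactly $t$ agents receive all three of their elements, disjointly and covering $\mathcal U$ — an exact cover. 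Because binary \ESW{} is $0$ or $1$, this simultaneously rules out any multiplicative approximation.

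The main obstacle is making the counting chain tight in exactly the right place. The two private items per agent are what flatten $\kappa(2)=\kappa(3)$ into $g(0)=g(1)=0$, so that a lone or double element is useless and only full triples $x_j=3$ contribute; and $D=\ell+2t$ must be set so the inequalities above collapse to equalities precisely at exact-cover configurations. Verifying the capacity values uniformly over the half-open interval, and confirming that the gadget degrades exactly at the endpoints (where either $\kappa(3)$ or $\kappa(5)$ moves), is the other delicate point; the range $(\sfrac{5}{9},\sfrac{3}{5}]$ will require an analogous but separately tuned construction.
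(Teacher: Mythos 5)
Your proposal is correct and matches the paper's proof essentially exactly: the same X3C reduction with two private (set) items per agent, one item per element, and $\ell+2t$ universally-zero dummy items, with the same forward assignment (covering agents take their five ones plus three dummies, others take their two privates plus one dummy). Your backward direction via the capacity function $\kappa$ and the extremal chain forcing $x_j\in\{0,3\}$ is a cleaner, fully rigorous version of the paper's informal counting argument that all dummies can only be absorbed if $t$ agents each receive their full triple of elements.
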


	\begin{proof} 
		
		We shall give a reduction from \textup{\textsc{Exact3Cover}} (X3C).  Given an instance of X3C $\langle \mathcal{U}, \mathcal{S}\rangle$ where $|\mathcal{U}|=3t$ and $|\mathcal{S}|=\ell$, we shall create an instance of our problem as follows: 
		\begin{itemize} 
			\item For each $S_j\in \mathcal{S}$, we create a set agent $i_j$ and a set items $g_j^1$ and $g_j^2$.
			\item For each element $u\in\mathcal{U}$, we create an element item $g_u$.
			\item Create $\ell + 2t$ dummy items $g'_1,\cdots, g'_{\ell+2t}$.
		\end{itemize}
		
		As a result, we have created $n=\ell$ agents and $m=3\ell+5t$ items. We define agent preferences as follows. For any $i_j\in N$, $v_{i_j}(g')=0$ for any dummy item $g'$. Further, for any $u\in \mathcal{U}$, if $u\in S_j$, set $v_{i_j}(g_u)=1$, otherwise set $v_{i_j}(g_u)=0$. Finally, for any $j'\in [\ell ]$, set $v_{i_j}(g_{j'}^1)=v_{i_j}(g_{j'}^2)=1$ if $j=j'$ otherwise, set $v_{i_j}(g_{j'}^1)=v_{i_j}(g_{j'}^2)=0$. Lastly, set $\tau_i=\sfrac{2}{5}$, for all $i\in N$.
		
		We shall now show that an allocation with \ESW{} of 1 exists if and only if the given X3C instance has an exact 3-cover.
		
		First, assume that $S_{j_1},\cdots, S_{j_t}$ form an exact 3 cover of $\langle \mathcal{U},\mathcal{S}\rangle$. Consider the following allocation $A$ where $A_{i_j}= \{g_j^1,g_j^2, g'_{j}\}$ if $j\notin \{j_1\cdots,j_t\}$ and $A_{i_j}=\{g_j^1,g_j^2,$ $g'_{j},g'_{\ell +2p-1},g'_{\ell +2p}\}\cup \{g_u|u\in S_j\}$ if $j=j_p$ for some $p\in [t]$

		That is, agents corresponding to sets in the exact 3 cover receive their set items, three dummy items and their constituent items. Agents corresponding to sets not in the exact 3 cover only receive their corresponding set items and one dummy item.      
		Firstly, observe that as $S_{j_1},\cdots, S_{j_t}$ form an exact 3 cover of $\langle \mathcal{U},\mathcal{S}\rangle$, $A$ must be a valid allocation, where all items are allocated, and the bundles of agents are disjoint. Now, for any agent $i_j$, where $j\notin \{j_1\cdots,j_t\}$, we have that $A_{i_j}=\{g_j^1,g_j^2,g'_{j}\}$. As $\tau_i>\sfrac{1}{3}$, we have that, $v_{i_j}(A_{i_j})=1$.
		
		Further, for $j\in \{j_1\cdots,j_t\}$, $A_{i_j}$ contains three items of value $0$, the dummy items $g'_{j}$,  $g'_{\ell +2p-1},g'_{\ell +2p}$ and five items of value $1$, the set items and element items. As, $\tau_i>\sfrac{3}{8}$, we have that $v_{i_j}(A_{i_j})=1$. Consequently, $\ESW{}(A)=1$.
		
		Now, conversely, assume that an Exact 3 Cover does not exist. 
		
		Observe that for each agent there are exactly five items of value $1$: the corresponding set items and constituent element items. This along with the fact that $\tau_i\leq\sfrac{2}{5}$ implies that any bundle of value $1$ for $i$ can contain at most three items of value $0$ for $i$. If it does contain three items of value $0$, all five of the value $1$ items must also be contained to ensure the bundle has value $1$. 
		
		Recall that there are $\ell+2t$ dummy items for which each agent has value $0$. These items can only be offset by $2\ell$ set items and $3t$ element items. 
		Further, as for all agents $\tau_i\in (\sfrac{3}{8},\sfrac{2}{5}]$, we have that if an agent received $p<5$ items of value $1$, they must have at most $\lfloor p/2\rfloor$ items of value $0$. 
		
		Consequently, in order to offset all the dummy items, we must have at least $t$ agents who each receive three dummy items and their corresponding set item and constituent items.  
		Now as no exact 3 cover exists, at most $t-1$ (set) agents can receive all three constituent element items. Thus, no allocation exists with an \ESW{} of $1$.
	\end{proof}

	\paragraph{Intractability with $\tau \in  (\sfrac{5}{9},\sfrac{3}{5}]$.} The main source of intractability in this range of quantiles comes from differing number of value $0$ items that can be added with an additional item of value $1$. Considering only bundles that give value $1$ to an agent,  with three items of value $1$, there can be at most three items of value $0$. However with four items of value $1$ there can be at most five items of value $0$. Thus when the number items which give value $0$ is strictly greater than the number of items that give value $1$ to at least one agent, deciding if an \ESW{} 1 allocation may not be possible with polynomially many greedy decisions. 
	
	\begin{lemma}
		Finding a maximum \ESW{} allocation is NP-hard for $\tau\in (\sfrac{5}{9},\sfrac{3}{5}]$.
	\end{lemma}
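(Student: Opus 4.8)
The plan is to prove NP-hardness by a reduction from \textup{\textsc{Exact3Cover}}, closely mirroring the reduction just given for $\tau\in(\sfrac{3}{8},\sfrac{2}{5}]$. As there, I would work with binary goods (justified by \cref{lem:binRednESW}), where the maximum possible \ESW{} is $1$; since any allocation that fails to give every agent a value-$1$ bundle has \ESW{} $0$, deciding whether an \ESW{}-$1$ allocation exists is exactly the hard problem, and NP-hardness of this decision problem immediately rules out any multiplicative approximation. The engine of the reduction is the exchange rate between value-$1$ and value-$0$ items: for binary valuations a bundle with $\ell$ items of value $1$ and $z$ items of value $0$ is worth $1$ to an agent iff $z<\frac{\tau}{1-\tau}\ell$. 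For $\tau$ near the lower end of $(\sfrac{5}{9},\sfrac{3}{5}]$ this means $\ell=4$ ones can absorb $5$ zeros while $\ell=3$ ones can absorb only $3$, so reaching four ones yields a disproportionate jump in zero-absorbing capacity. This nonlinearity is what I would exploit to encode the cover constraint.

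Next I would build the instance from an \textup{\textsc{Exact3Cover}} instance $\langle\mathcal{U},\mathcal{S}\rangle$ with $|\mathcal{U}|=3t$ and $|\mathcal{S}|=\ell$. For each set $S_j$ I create a set agent $i_j$ together with some set items that only $i_j$ values at $1$; for each element $u$ I create an element item $g_u$ valued $1$ by exactly those agents $i_j$ with $u\in S_j$; and I add a carefully chosen number of dummy items valued $0$ by everyone. The number of set items per set and the number of dummies would be tuned so that a set agent has two ways to reach value $1$: a \emph{base} configuration, in which it takes only its set items plus a few dummies, and a \emph{cover} configuration, in which it additionally collects all three of its element items and thereby (using the capacity jump above) absorbs strictly more dummies. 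The forward direction is then immediate: given an exact cover, place the $t$ covering agents in the cover configuration and the rest in the base configuration, distributing the element items according to the cover and the dummies up to each agent's capacity; the counts are chosen precisely so that every item is placed and every agent gets value $1$.

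For the converse I would argue by a global counting bound. In any \ESW{}-$1$ allocation each agent gets value $1$, so each absorbs at most $\frac{\tau}{1-\tau}$ times as many zeros as it has ones; summing over agents and using that the only value-$1$ items are the set and element items (so the number of placed ``ones'' is bounded) shows that all dummies can be absorbed only if enough agents reach the high-capacity cover configuration. Because an element item yields value only when given to an agent whose set contains it, and every element item must be placed, forcing $t$ agents into the cover configuration forces the corresponding $t$ sets to be pairwise disjoint and to cover $\mathcal{U}$, i.e.\ an exact cover.

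The main obstacle is making this counting airtight, and it is where $(\sfrac{5}{9},\sfrac{3}{5}]$ is genuinely harder than the $(\sfrac{3}{8},\sfrac{2}{5}]$ case. The exchange-rate function $\ell\mapsto\lceil\frac{\tau}{1-\tau}\ell\rceil-1$ is not linear, and its integer thresholds fall strictly inside the interval: three ones begin to absorb four zeros at $\tau=\sfrac{4}{7}$, and five ones begin to absorb seven at $\tau=\sfrac{7}{12}$. Hence the gadget parameters (set items per set and dummy count) and the efficiency comparison between bundle sizes must be pinned down for the specific $\tau$, ruling out any ``cheating'' allocation that splits the ones into bundles of a different size so as to absorb all the dummies \emph{without} producing an exact cover. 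Verifying that no alternative bundle size beats the intended cover configuration for every $\tau$ in the interval is the delicate heart of the argument, and I expect it to require a short case analysis on the value of $\lfloor\frac{\tau}{1-\tau}\ell\rfloor$ for the small $\ell$ that a single agent can attain.
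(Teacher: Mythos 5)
Your proposal follows the same route as the paper's proof: a reduction from \textup{\textsc{Exact3Cover}} with set items (valued $1$ only by the owning set agent), element items (valued $1$ exactly by agents whose set contains them), and universally $0$-valued dummy items, with soundness via a counting argument on how many zeros a bundle's ones can absorb. Your capacity formula (a bundle with $\ell$ ones and $z$ zeros has value $1$ iff $z<\frac{\tau}{1-\tau}\ell$) and your thresholds ($\sfrac{4}{7}$, where three ones start absorbing four zeros, and $\sfrac{7}{12}$, where five ones start absorbing seven) are all correct. However, as written this is a plan rather than a proof: you never fix the gadget parameters (number of set items per agent, number of dummies), and the converse direction --- showing that absorbing all dummies forces $t$ agents into the cover configuration and that no ``cheating'' bundle sizes can substitute --- is exactly what you defer to ``a short case analysis.'' That deferred step is the entire mathematical content of soundness, so there is a genuine gap.

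It is worth noting, though, that the crux you isolated is real, and the paper's own proof stumbles on it. The paper uses one set item per agent and $\ell+4t$ dummies, and its converse rests on the claim that an agent with $p<4$ ones can absorb at most $p$ zeros; by your own threshold computation this fails for $\tau\in(\sfrac{4}{7},\sfrac{3}{5}]$, where three ones absorb four zeros. Concretely, take $\mathcal{S}=\{\{1,2,3\},\{3,4,5\},\{5,6,1\}\}$ (so $t=2$, $\ell=3$, and no exact cover since no two sets are disjoint): giving each agent its set item, two element items of its own set, and $4,4,3$ dummies respectively allocates all $11$ dummies and gives every agent a bundle of three ones and at most four zeros, hence value $1$ for any $\tau\in(\sfrac{4}{7},\sfrac{3}{5}]$. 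So the reduced instance has \ESW{} $1$ with no exact cover, and the paper's argument is only valid on $(\sfrac{5}{9},\sfrac{4}{7}]$. Repairing the remaining subinterval needs different parameters --- e.g., several set items per agent chosen so that the excess $f(p)-p$, with $f(p)=\lceil\frac{\tau}{1-\tau}p\rceil-1$, is uniquely maximized at the cover configuration while the base configuration attains the second-best excess --- and this choice is $\tau$-dependent, which is precisely the case analysis you anticipated. Until that analysis is actually carried out, neither your argument nor the paper's establishes the lemma on the full interval.
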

	
	\begin{proof} 
		
		We shall give a reduction from \textup{\textsc{Exact3Cover}} (X3C).  Given an instance of X3C $\langle \mathcal{U}, \mathcal{S}\rangle$ where $|\mathcal{U}|=3t$ and $|\mathcal{S}|=\ell $, we shall create an instance of our problem as follows: 
		
		For each $S_j\in \mathcal{S}$, we create a set agent $i_j$ and a set item $g_j$.
		
		For each element $u\in\mathcal{U}$, we create an element item $g_u$.
		
		Create $\ell +4t$ dummy items $g'_1,\cdots, g'_{\ell +4t}$.
		
		Thus, we have created $\ell $ agents and $2\ell +7t$ items. We define agent preferences as follows. For any $i_j\in N$, $v_{i_j}(g')=0$ for any dummy item $g'$. Further, for any element $u\in \mathcal{U}$, if $u\in S_j$, set $v_{i_j}(g_u)=1$, otherwise set $v_{i_j}(g_u)=0$. Finally, for any $j'\in [\ell ]$, set value for set item $g_{j'}$ $v_{i_j}(g_{j'})=1$ if $j=j'$ otherwise, set $v_{i_j}(g_{j'})=0$. Lastly, arbitrarily set $\tau_i\in (\sfrac{5}{9},\sfrac{3}{5}]$, for all $i\in N$.
		
		We shall now show that an allocation with \ESW{} of 1 exists if and only if the given X3C instance has an exact 3-cover.
		
		First, assume that $S_{j_1},\cdots, S_{j_t}$ form an exact 3 cover of $\langle \mathcal{U},\mathcal{S}\rangle$. Consider the following allocation $A$ where $A_{i_j}= \{g_j, g'_{j}\}$ if $j\notin \{j_1\cdots,j_t\}$ and $A_{i_j}=\{g_j,g'_{j},g'_{\ell +4p-3},$ $g'_{\ell +4p-2},g'_{\ell +4p-1},g'_{\ell +4p}\}\cup \{g_u|u\in S_j\}$ if $j=j_p$ for some $p\in [t]$

		That is, agents corresponding to sets in the exact 3 cover receive their set item, five dummy items and their constituent items. Agents corresponding to sets not in the exact 3 cover only receive their corresponding set item and one dummy item.      
		Firstly, observe that as $S_{j_1},\cdots, S_{j_t}$ form an exact 3 cover of $\langle \mathcal{U},\mathcal{S}\rangle$, $A$ must be a valid allocation, where all items are allocated, and the bundles of agents are disjoint. Now, for any agent $i_j$, where $j\notin \{j_1\cdots,j_t\}$, we have that $A_{i_j}=\{g_j,g'_{j}\}$. As $\tau_i>0.5$, we have that, $v_{i_j}(A_{i_j})=1$.
		
		Further, for $j\in \{j_1\cdots,j_t\}$, $A_{i_j}$ contains five items of value $0$, the dummy items $g'_{j}$, $g'_{\ell +4p-3}$, $g'_{\ell +4p-2}$, $g'_{\ell +4p-1},g'_{\ell +4p}$ and $4$ items of value $1$, the set item and element items. As, $\tau_i>\sfrac{5}{9}$, we have that $v_{i_j}(A_{i_j})=1$. Consequently, $\ESW{}(A)=1$.
		
		Now, conversely, assume that an Exact 3 Cover does not exist. 
		
		Observe that for each agent there are exactly four items of value $1$: the corresponding set item and constituent element items. This along with the fact that $\tau_i\leq\sfrac{3}{5}$ implies that any bundle of value $1$ for $i$ can contain at most $5$ items of value $0$ for $i$. If it does contain five items of value $0$, all four of the value $1$ items must also be contained to ensure the bundle has value $1$. 
		
		Recall that there are $\ell +4t$ dummy items for which each agent has value $0$. These items can only be offset by $d$ set items and $3t$ element items. Clearly there are fewer items that can give an agent value $1$ than the number of items that give all agents value $0$. Further, as for all agents $\tau_i\in (\sfrac{5}{9},\sfrac{3}{5}]$, we have that if an agent received $p<4$ items of value $1$, they must have at most $p$ items of value $0$. 
		
		Consequently, in order to offset all the dummy items, we must have at least $t$ agents who each receive $5$ dummy items and their corresponding set item and constituent items.  
		Now as no exact 3 cover exists, at most $t-1$ (set) agents can receive all three constituent element items. Thus, no allocation exists with an \ESW{} of $1$.
	\end{proof}

\end{document}